\pdfoutput=1
\documentclass{amsart}
\usepackage{breakcites}
\usepackage{preamble}
\usepackage{macros}
\usepackage{categories}
\usepackage{jon-tikz}

\def\paragraph{\subsubsection*}

\setcounter{tocdepth}{1}

\addbibresource{../refs.bib}
\addbibresource{../temp-refs.bib}

\title{Normalization for Multimodal Type Theory}
\author{Daniel Gratzer}
\date{\today}

\begin{document}
\begin{abstract}
  We consider the conversion problem for multimodal type theory
  (\MTT{})~\parencite{gratzer:journal:2020} by characterizing the normal forms of the type theory
  and proving normalization (\cref{thm:normalization:normalization}). Normalization follows from a
  novel adaptation of Sterling's Synthetic Tait
  Computability~\parencite{sterling:modules:2020,sterling:phd} which generalizes
  the framework to accommodate a type theory with modalities and multiple modes. As a corollary of
  our main result, we reduce the conversion problem of \MTT{} to the conversion problem of its mode
  theory (\cref{cor:normalization:conversion}) and show the injectivity of type constructors
  (\cref{cor:normalization:pi-inj}).
\end{abstract}

\maketitle
\tableofcontents

\section{Introduction}
\label{sec:intro}

The last twenty years of development of type theory has seen many different extensions of type
theories to account for different modalities. Most of these type theories are
specialized---accounting for a specific collection of modalities---and intended for a particular
model. This specialization has allowed for concise and practicable syntax in some cases, but has
also created a tremendous amount of churn where even the smallest change to the modal situation
requires substantial work to produce a new type theory.

In response to this situation, several ``frameworks'' for modal type theory have been put
forward. A framework for modal type theory should enjoy the following:
\begin{enumerate}
\item sufficient generality to accommodate a large class of modalities,
\item practicable syntax in specific applications,
\item a wide variety of theorems which hold for instantiations.
\end{enumerate}

There is a degree of tension between these three goals: a framework which is more general often
admits more cumbersome syntax, and wider classes of type theories typically enjoy fewer shared
properties. Recently, \textcite{gratzer:2020} introduced \MTT{}, a type theory parameterized by a
collection of modalities. \MTT{} aims to support a highly usable syntax for any collection of
modalities which behave like (dependent) right adjoints~\parencite{birkedal:2020}. This proves to be
a rich class of modalities, allowing encodings of type theories for guarded recursion, internal
parametricity, axiomatic cohesion, \etc{}

\subsection{\MTT{}}

We briefly recall some of details of \MTT{}. For a full account, see the extended submission on
\MTT{}~\parencite{gratzer:journal:2020}. As mentioned, \MTT{} is parameterized by a collection of
modalities. Formally, \MTT{} is defined over a \emph{mode theory}~\parencite{licata:2016}: a strict
2-category whose objects represent modes, while 1-cells determine modalities and 2-cells natural
transformations between them.

Given a mode theory $\Mode$, each mode $m : \Mode$ determines a mode in \MTT{}. Each mode behaves
somewhat like its own type theory. There are separate judgments for each mode $\IsCx{\Gamma}$,
$\IsTy{A}$, and $\IsTm{M}{A}$ and types in mode $m$ are closed under the standard connectives in
Martin-L{\"o}f type theory (dependent products, sums, booleans, an intensional identity type, and a
universe).

Different modes only interact through modalities. Given a morphism $\Mor[\mu]{n}{m}$, there is an
induced modality $\Modify{-}$ in \MTT{} which sends types from mode $n$ to mode $m$. This modality
behaves like a right adjoint~\parencite{birkedal:2020}, whose left adjoint is given an action only
on contexts $\LockCx{-}$.\footnote{Other presentations of \MTT{} and Fitch-style modal type theories
  write $-.\Lock_{\mu}$ for this action, but we find $\LockCx{-}$ more uniform and less cumbersome.}
This operation is a proper functor between context categories, with an induced action on
substitutions $\LockSb{\gamma}$. As the left adjoint, $\LockCx{-}$ sends context from mode $m$ to
mode $n$. This left adjoint is used to give the formation and introduction rules for \MTT{} by
transposition:
\begin{mathpar}
  \inferrule{
    \IsTy[\LockCx{\Gamma}]{A}<n>
  }{
    \IsTy{\Modify{A}}
  }
  \and
  \inferrule{
    \IsTm[\LockCx{\Gamma}]{M}{A}<n>
  }{
    \IsTm{\MkBox{M}}{\Modify{A}}
  }
\end{mathpar}

Two questions remain: how does one eliminate an element of $\Modify{A}$ and how can one use a
context of the form $\LockCx{\Gamma}$. While related modal type theories solve both issues through
defining the elimination rule to be transposition in the opposite direction, this causes issues for
the substitution property. Instead, \MTT{} employs an alternative approach. First, variables in the
context is annotated with a modality:
\begin{mathpar}
  \inferrule{
    \IsCx{\Gamma}
    \\
    \IsTy[\LockCx{\Gamma}]{A}<n>
  }{
    \IsCx{\ECx{\Gamma}{A}}
  }
\end{mathpar}

These modal annotations are used to determine when a variable can be accessed; a variable annotated
with $\mu$ can be accessed precisely when it is behind $\LockCx{-}$:
\begin{mathpar}
  \inferrule{ }{
    \IsTm[\LockCx{\ECx{\Gamma}{A}}]{\Var{0}}{\Sb{A}{\LockSb{\Wk}}}
  }
\end{mathpar}

Next, the elimination rule for $\Modify{-}$ is used to bridge the gap between a variable annotated
with $\nu \circ \mu$ of type $A$ and a variable annotated with $\nu$ of type $\Modify{A}$. There is
a canonical substitution $\Mor{\ECx{\Gamma}{A}<\nu\circ\mu>}{\ECx{\Gamma}{\Modify{A}}<\nu>}$, but a
priori no map in the reverse direction. While we do not provide an inverse outright, the elimination
rule ensures that these two contexts are isomorphic ``from the perspective of a
type''. Semantically, one ensures that display maps are weakly right orthogonal to this map, but
more concretely we add the following elimination principle:
\begin{mathpar}
  \inferrule{
    \Mor[\nu]{o}{n}
    \\
    \Mor[\mu]{n}{m}
    \\\\
    \IsCx{\Gamma}
    \\
    \IsTy[\LockCx{\LockCx{\Gamma}}<\nu>]{A}<o>
    \\
    \IsTy[\ECx{\Gamma}{\Modify[\nu]{A}}]{B}
    \\
    \IsTm[\LockCx{\Gamma}]{M_0}{\Modify[\nu]{A}}<n>
    \\
    \IsTm[\ECx{\Gamma}{A}<\mu \circ \nu>]{M_1}{\Sb{B}{\ESb{\Wk}{\MkBox[\nu]{\Var{0}}}}}
  }{
    \IsTm{\LetMod{M_0}{M_1}<\nu>[\mu]}{\Sb{B}{\ESb{\ISb}{M_0}}}
  }
\end{mathpar}

Finally, the interactions between modalities are codified by organizing the collection of functors
$\LockCx{-}<->$ into a single 2-functor out of $\Coop{\Mode}$. Concretely, this ensures that there
is a substitution $\IsSb[\LockCx{\Gamma}<\mu>]{\Key{\alpha}{\Gamma}}{\LockCx{\Gamma}<\nu>}$ for each
2-cell $\Mor[\alpha]{\nu}{\mu}$ and that a variety of functoriality equations hold \eg{}:
\[
  \EqCx{\LockCx{\Gamma}<\mu\circ\nu>}{\LockCx{\LockCx{\Gamma}}<\nu>}<o>
  \qquad
  \EqCx{\LockCx{\Gamma}<\ArrId{m}>}{\Gamma}
\]

2-functoriality is sufficient to ensure that the modal types organize into a weak 2-functor, with
type-theoretic equivalences between \eg{}, $\Modify[\ArrId{m}]{A}$ and $A$.

\subsection{Normalization for \MTT{}}

Returning to the criterion for a framework for modal dependent type theories,
\textcite{gratzer:journal:2020} showed that \MTT{} enjoys substantial metatheoretic properties. In
particular, every instantiation of \MTT{} is sound and enjoys a canonicity result. Both of these
results are proven semantically, taking advantage of the fact that an instantiation of \MTT{} comes
equipped with a well-behaved class of models in which syntax is initial. Notably, however, they
stopped short of showing that \MTT{} enjoys a normalization result, and the lack of such a result
precludes a general approach to implementing instantiations of \MTT{}.

Here, we rectify the situation and show that \MTT{} admits a normalization result. As a corollary,
conversion in \MTT{} is decidable when equality of modalities and 2-cells is decidable. This result
paves the way for an implementation of \MTT{} which does not necessitate a full reimplementation for
each tweak of the modalities.

Normalization results for dependent type theories are typically involved affairs, requiring the
construction of a sophisticated PER model, families of logical relations, and other technical
constructions. For this proof, however, we adapt recent gluing arguments for normalization to give a
concise and detailed
proof~\parencite{altenkirch:1995,fiore:2002,altenkirch:2016,coquand:2019,sterling:phd}.

While the literature contains a large number of modal dependent type theories, very few are proven
to enjoy normalization. Of these, the most closely related is
\MLTTLock{}~\parencite{gratzer:2019}. This is a type theory extending \MLTT{} with an idempotent
comonad. However, compared to the normalization proof of \MLTTLock{} given by
\textcite{gratzer:2019}, this result is significantly shorter and more conceptual.\footnote{As a
  crude measurement, the proof of normalization for \MLTTLock{} occupies the bulk of the
  accompanying 90 page technical report~\parencite{gratzer:tech-report:2019}. The proof of
  normalization for \MTT{}, by contrast, takes approximately 25 pages to present with more detail.}
Furthermore, the result for \MTT{} applies to a wide variety of modal situations, including modal
type theory with an idempotent comonad.

\subsection{Proof outline}

In broad strokes, the proof of normalization for \MTT{} proceeds in several stages. First, we
introduce the technical device of \MTT{} cosmoi in \cref{sec:models}. An \MTT{} cosmos is a more
flexible version of the models of \MTT{} originally described. In particular, cosmoi are only
required to be locally Cartesian closed, so that there is no `fiberwise representability'
requirement, and morphisms between models are only required to preserve several structures up to
isomorphism. This definition of model is more closely related to the presentations of type theories
as \emph{representable map categories}~\parencite{uemura:2019} or \emph{locally Cartesian closed
  categories}~\parencite{gratzer:lcccs:2020}. This flexibility is essential for the application of
synthetic Tait computability and allows us to avoid tedious calculations.

\begin{remark}
  We emphasize, however, this change of categories is a technical detail of the proof: the
  normalization function is defined on the stricter syntax constituting an initial object of the CwF
  models as originally defined rather than the initial object of the category of cosmoi. While it is
  more work to obtain a normalization result of this form, it ensures that one does not need to
  accept the initial \MTT{} cosmos as syntax; such acceptance would require a leap of faith, because
  there is no adequacy result for \MTT{} cosmoi corresponding those proven by \textcite{uemura:2019}
  and \textcite{gratzer:lcccs:2020}. Such an adequacy result is difficult to formulate for \MTT{}
  cosmoi because of the lack of established modal logical frameworks.
\end{remark}

In \cref{sec:foundations} we introduce the necessary mathematics to generalize synthetic Tait
computability (STC) from working internally to a single gluing category to applying to a collection
of gluing categories interconnected by functors and natural transformations. In particular, we show
that one may work internally to this network of categories in \MTT{}, using \MTT{} modalities to
pass between gluing categories. We also show that the fibered modalities available in each gluing
category commute with modalities already present in \MTT{}, a crucial result for constructing the
normalization model.

With this framework in place, \cref{sec:renamings} and \cref{sec:prereq} define the basic objects of
the normalization model: renamings, normal and neutral forms, \etc{} In particular,
\cref{sec:prereq} defines the gluing categories used for normalization, along with the functors
between them. This portion of the proof proceeds with minimal alternations from a standard proof of
normalization for \MLTT{}. The only changes are to account for multiple modalities, and this change
is mostly one of book-keeping. One novelty of \cref{sec:renamings} is the usage of a non-free
presentation of renamings. By allowing renamings to enjoy non-trivial equations, we vastly simplify
the development. As a consequence of this decision, we only must show that the action of a renaming
on a normal form respects these equations --- a straightforward calculation.

The actual construction of the normalization model, and the heart of the proof, takes place in
\cref{sec:normalization-model}. In particular, we define an \MTT{} cosmos of glued categories
together with the reify and reflect maps needed to define the normalization function. Those familiar
with proofs of normalization by gluing, and in particular with those given by STC will find much of
the construction familiar. The central challenge is the construction of modal types in the
normalization model (\cref{lem:normalization-model:mod}), and in particular the construction of the
modal elimination law. Other connectives which do not interact with modalities (booleans, universes,
dependent sums, \etc{}) can be constructed in this multimodal setting with no alteration.

Finally, the normalization function is extracted from this model in \cref{sec:normalization}. Taking
advantage of the fact that this proof is constructive, we also observe that the normalization
function is effective. This uses a novel approach to derive a normalization result for strict
syntax, rather than the initial \MTT{} cosmos. Using these results, we establish several other
corollaries such as the decidability of equality and the injectivity of type-constructors.


\section{Normalization via Synthetic Tait Computability}
\label{sec:stc}

The proof of normalization for \MTT{} follows from a systematic generalization of Synthetic Tait
Computability (STC) to multimodal type theories. While a full introduction of the theory of STC is
beyond the scope of this work, given that a comprehensive account has not yet emerged\footnote{We
  hope that the forthcoming work of \textcite{sterling:phd} will serve this role} we give a short
summary of the key ideas here. We focus in particular on proving normalization via STC. Other
descriptions of STC are given by \textcite{sterling:modules:2020}, \textcite{sterling:stc:2020}, and
\textcite{sterling:2021}.

Prior to discussing STC, we remark that in the process of embracing STC we are naturally led to a
number of ideas which are not intrinsically categorical but still crucial to the concision of the
proof.

\begin{itemize}
\item Our normalization proof is \emph{reduction-free}, and
\item the algorithm works only over equivalences classes of well-typed terms.
\end{itemize}

More explicitly, our proof is reduction-free in that it does not proceed by fixing a rewriting
system presenting the equational theory which we then prove to be confluent and strongly
normalizing. This avoids the thorny issue of finding an rewriting system which faithfully presents
the equational theory of \MTT{}---the inclusion of $\eta$-laws for dependent sums greatly
complicates such a task.

To the second point, working with terms only up to definitional equality obviates the need for a
separate proof of \emph{completeness} of the normalization algorithm. More than this, it is also a
necessary step to treating \MTT{} categorically; without this quotienting, the category of contexts
is not even a category. Similarly, the universal properties of various connectives of \MTT{} are
crucially leveraged to simplify aspects of the normalization proof, but these universal properties
only come into being after taking terms up to definitional equality.

\subsection{Normalization by gluing}
\label{sec:stc:nbg}

Proving normalization via STC involves working internally to a particular glued category. This idea
branches off the observation that that logical relations arguments could be systematically recast as
the construction of a model of type theory in a \emph{Freyd cover} or \emph{Sierpinski cone
  (scone)}~\parencite{mitchell:1993}. This is a special case of the more general gluing
construction:

\begin{definition}
  Given a functor $\Mor[F]{\CC}{\DD}$, the gluing category $\GL{F}$ is the comma category
  $\COMMA{\ArrId{\DD}}{F}$. Explicitly, objects are triples $(D, C, \Mor[f]{D}{F(C)})$ and commuting
  squares between them.
\end{definition}

\begin{definition}
  Given a category $\CC$, The Sierpinski cone $\SIERP{\CC}$ is $\GL{\Hom[\CC]{\ObjTerm{\CC}}{-}}$.
\end{definition}

To a rough approximation, a logical relation can be seen as an assignment of types to predicates on
their closed terms. Taking $\TT$ to be the category of contexts and substitutions between them and
blurring the distinction between contexts and types as is often done in the simply-typed case, this
is more or less the content of an object of $\SIERP{\TT}$; a triple
$(S, A, \Mor[f]{S}{\Hom{\ObjTerm{}}{A}})$ can be viewed as a predicate on the closed elements of $A$
with $P(a) = f^{-1}(a)$. Notice that the correspondence is imperfect because there may be multiple
distinct elements in the fiber $P(a)$. In fact, $f^{-1}(-)$ determines a \emph{proof-relevant}
predicate on closed elements of $A$. This can be fixed by cutting $\SIERP{\TT}$ down to consist of
objects $(S, A, \EmbMor[f]{S}{\Hom{\ObjTerm{}}{A}})$, but this is both less categorically natural
and, as we shall see in later, counterproductive.

What makes this shift in perspective so useful is the remarkable ability of $\SIERP{\CC}$ to
seemingly inherit all the structure of $\CC$ in such a way that the projection
$\Mor{\SIERP{\CC}}{\CC}$ preserves this structure. For instance, if $\CC$ is Cartesian closed, then
so too is $\SIERP{\CC}$ and the projection preserves all this structure. In fact, an explicit
construction of \eg{}, the exponential in $\SIERP{\CC}$ yields almost precisely the standard
construction of the logical relation at function type. The benefit of relying on such technology,
however, is that one can typically avoid these explicit calculations; once it is known that
$\SIERP{\CC}$ is Cartesian closed, canonicity at ground type can be proven without ever needing to
explicitly calculate the construction of exponentials.

The correspondence between logical relations and gluing was pushed further in multiple
directions~\parencite{altenkirch:1995,streicher:1998,altenkirch:2001,fiore:2002}. In particular, the
insistence of logical relations of \emph{closed terms} of a type can be relaxed by switching to a
different comma category. We are interested in proving normalization, emphatically not a result
solely concerned with closed terms.

Accordingly, rather than considering a model in $\SIERP{\CC}$, we will consider a glued category
valued in presheaves over renamings. It is perhaps more natural to expect one to examine presheaves
over contexts: after all, terms organize into a presheaf over arbitrary contexts and substitutions,
with substitution giving rise to the reindexing action of the presheaf. One would then consider a
model in $\GL{\Yo}$; the category given by gluing along the Yoneda embedding.

The first attempts at categorical proofs of normalization constructed models in $\GL{\Yo}$, so that
one considered predicates not on the closed terms of type $A$, but over the family
$(\Hom{\Gamma}{A})_{\Gamma}$. The presheaf condition corresponds to the monotonicity conditions
familiar to all Kripke logical relations, but in particular those for normalization. Unfortunately,
these presheaf conditions are too onerous to view normal forms as a presheaf on contexts --- normal
forms are never stable under all substitutions --- so that the resultant normalization function is
given as a map from terms to terms and indistinguishable from the identity function on terms.

This can be rectified taking presheaves over the wide subcategory of renamings $\RR \subseteq
\TT$. By cutting down the substitutions allowed between contexts, normal forms may be organized into
an object of $\PSH{\RR}$~\parencite{altenkirch:1995,fiore:2002}. Accordingly, normalization proceeds
not by gluing along the Yoneda lemma as still repeated in literature, but by gluing along a
restriction of the Yoneda lemma, the nerve $\Mor[\Nv]{\TT}{\PSH{\RR}}$. Unlike canonicity, it is
insufficient to simply construct a model of $\TT$ in $\GL{\Nv}$, extra structure is required which
`sandwiches' the predicate for a type $A$ between the neutral and normal forms of $A$, so that the
following commutative diagram exists in $\PSH{\RR}$:
\begin{equation}
  \begin{tikzpicture}[diagram]
    \node (Ne) {$\Ne{A}$};
    \node [right = of Ne] (Pred) {$P_A$};
    \node [right = of Pred] (Nf) {$\Nf{A}$};
    \node [below = 1.5cm of Pred] (A) {$\Nv{A}$};
    \path[->] (Ne) edge (A);
    \path[->] (Ne) edge node[above] {$\Reflect{A}$} (Pred);
    \path[->] (Pred) edge (A);
    \path[->] (Pred) edge node[above] {$\Reify{A}$} (Nf);
    \path[->] (Nf) edge (A);
  \end{tikzpicture}
\end{equation}

The top two maps of this diagram are categorical realizations of the \emph{reflect} and \emph{reify}
maps found in normalization-by-evaluation~\cite{abel:2013} and are used to extract the normalization
function. In the simply-typed case, these maps can be constructed by induction after the fact, but
in the dependently-typed setting this approach does not scale and the two must be constructed
simultaneously. Accordingly, we require a type in our glued model to be not just a proof-relevant
predicate $P_A$ over $\Nv{A}$, but a triple of
$(P_A, \Reflect{A}, \Reify{A})$~\cite{altenkirch:1995,fiore:2002}.

One final remark is needed prior to the introduction of STC. Thus far we have limited consideration
to simply-typed languages. The full utility of proof-relevance only becomes apparent when
considering dependent type theory with universes. In these cases, it becomes necessary to associate
with an element of a universe not merely a proposition (computable or not computable) but to instead
pair an element with the \emph{computability data of a type} again. This is beautifully handled by
the proof-relevance of predicates: the predicate on universes at $A : \Uni$ contains choices of
computability data over $\Dec{A}$ as witnesses~\parencite{shulman:2015,coquand:2019}.

\subsection{Synthetic Tait Computability}
\label{sec:stc:stc}

Finally, we now turn to Synthetic Tait Computability. STC builds on the idea of working with glued
categories, but instead of gluing together locally Cartesian closed categories presenting dependent
type theories~\parencite{gratzer:lcccs:2020}, STC is concerned with topoi. Suppose $\TT$ presents a
type theory, and $\RR \subseteq \TT$ is the subcategory of renamings. Rather than gluing together
the locally Cartesian closed categories $\TT$ and $\PSH{\RR}$ along $\Nv$, we instead glue together
$\PSH{\TT}$ and $\PSH{\RR}$ along the inverse image of the essentially geometric morphism induced by
the inclusion $\Mor[i]{\RR}{\TT}$. As topoi are closed under gluing, the resultant glued category
$\GL{i}$ is a topos. In fact, by a standard result~\parencite{sga:4,carboni:1995} it is a presheaf
topos.

As a result, $\GL{i}$ supports a model of extensional Martin-L\"of type theory with a hierarchy of
cumulative universes. Inside this type theory, two fibered modalities, $\Open$ and $\Closed$, allow
us to recover $\PSH{\TT}$ and $\PSH{\RR}$ inside of $\GL{i}$. By relaxing the definition of a model
of Martin-L{\"o}f type theory to non-representability universes moreover, we may specify and
construct the glued model purely internally to $\GL{i}$.

This internal approach offers a significant improvement over previous proofs of gluing for dependent
type theory~\parencite{coquand:2019,kaposi:gluing:2019} which work externally and incur what
\textcite{sterling:phd} has referred to as an `avalanche' of naturality obligations. As we shall see
in \cref{sec:normalization-model}, these tedious calculations are replaced with programming
exercises in extensional type theory (or \MTT{} in our case). In particular, a model lying over the
model in $\PSH{\TT}$ corresponds to a series of constants subject to the requirement that they are
sent by $\Open$ to the constants defining the model in $\PSH{\TT}$.

We refer to the internal type theory of $\GL{i}$---extensional type theory with cumulative
universes, two fibered modalities, and a few axioms governing their behavior---as the language of
Synthetic Tait Computability.


\section{\MTT{} Cosmoi}
\label{sec:models}

While the basic theory of models of \MTT{} is detailed by \textcite{gratzer:journal:2020}, we
generalize to locally Cartesian closed categories. This change, inspired by
\textcite{gratzer:lcccs:2020}, is crucial to actually carrying out the normalization theorem as it
removes \emph{representability requirements} from the universes of types and terms. These prove
superfluous for the theorem, and forcing the glued model to satisfy them is both unnatural and
surprisingly challenging.

\begin{remark}
  Henceforth, we shall consider \MTT{} over a fixed mode theory $\Mode$.
\end{remark}

\begin{definition}[Definition 18~\parencite{awodey:2018}]
  An internal lifting structure $s : i \pitchfork \tau$ between a pair of morphisms $\Mor[i]{A}{B}$
  and $\Mor[\tau]{X}{Y}$ is a section of canonical map $\Mor{X^B}{Y^B \times_{Y^A} X^A}$.
\end{definition}

\begin{definition}
  The 2-category $\CAT_g$ consists of small categories, functors between them, and invertible
  natural transformations.
\end{definition}

\begin{definition}
  \label{def:cosmoi:cosmos}
  The locally full 2-subcategory $\VV$ of $\Hom{\Mode}{\CAT_g}$ is spanned by pseudofunctors $F$
  enjoying the following two properties:
  \begin{itemize}
  \item For each $m : \Mode$, $F(m)$ is a locally Cartesian closed category.
  \item For each $\Mor[\mu]{n}{m}$, $F(\mu)$ is a right adjoint.
  \end{itemize}
  Note that we do not require that $F(\mu)$ preserves the locally Cartesian closed structure of
  $F(m)$. We write $\LKan{F(\mu)}$ for the left adjoint of $F(\mu)$.
\end{definition}

\begin{definition}
  \label{def:cosmoi:structured-cosmos}
  An object $F : \VV$ is an \MTT{} cosmos when equipped with the following structure:
  \begin{enumerate}
  \item In $F(m)$, there is a universe $\Mor[\El{m}]{\EL{m}}{\TY{m}}$ with a choice of codes
    witnessing its closure under dependent sums, and booleans. Additionally, there is a choice of
    code making $\El{m}$ closed under modal dependent products: a code for
    $\Poly{f_\mu(\El{n})}(\El{m})$ for each $\Mor[\mu]{n}{m}$.
  \item For each $\mu$, there exists a chosen commuting square
    \begin{equation}
      \DiagramSquare{
        nw = F(\mu)(\EL{n}),
        sw = F(\mu)(\TY{n}),
        ne = \EL{m},
        se = \TY{m},
        south = \mathbf{Mod},
        width = 4cm,
        height = 2cm,
      }
      \label[diagram]{diag:models:modal-intro}
    \end{equation}
    \label{point:cosmoi:quasi-representation-1}
  \item For each $\Mor[\mu]{n}{m}$ and $\Mor[\nu]{o}{n}$, there is a chosen lifting structure
    $F(\mu)(m) \pitchfork F(\mu\circ\nu)(\TY{o}) \times \El{m}$, where
    $\Mor[m]{F(\nu)(\EL{o})}{F(\nu)(\TY{o}) \times_{\TY{n}} \EL{n}}$ is the comparison map induced
    by \cref{diag:models:modal-intro}.
    \label{point:cosmoi:quasi-representation-2}
  \item $\El{\mu}$ contains a subuniverse also closed under all these connectives.
  \end{enumerate}
\end{definition}

To a first approximation, an \MTT{} cosmos is a standard model of \MTT{} formulated in the language
in natural models without the requirements that the universes be legitimate representable natural
transformations. By expanding the class of models in this way, we make it far easier to construct
the glued model in \cref{sec:normalization-model}.

\begin{remark}
  In order to simplify matters, we assume \MTT{} has only a weak universe \`a la Tarski. Concretely,
  there exists a universe $\Uni$ and a decoding operation $\Dec{-}$ just as in
  \textcite{gratzer:journal:2020}. Unlike the original presentation of \MTT{} however, we only
  obtain a chosen isomorphism between each connective and its decoded code \eg{}
  $\DecIso{-} : \Dec{\FnCode{A}{B}} \cong \Fn{\Dec{A}}{\Dec{B}}$.

  Recent unpublished work by Gratzer and Sterling has shown that Synthetic Tait
  Computability~\parencite{sterling:phd} can be strengthened to accommodate stricter universes,
  and can be applied to the following without issue. We remark, however, that there is experimental
  evidence that elaboration can be used to alleviate the tedium of these weaker universes in
  practice~\parencite{cooltt}.
\end{remark}

\begin{definition}
  \label{def:models:morphism-of-structured-cosmoi}
  A morphism between \MTT{}-structured cosmoi $\Mor[\alpha]{F}{G}$ is a 2-natural transformation
  $\alpha$ such that $\alpha_m$ is an LCCC functor and preserves all connectives strictly.
  Furthermore, we require that there is a natural isomorphism
  $\beta_\mu : \alpha_n \circ F(\mu)_! \cong G(\mu)_! \circ \alpha_m$ such that the transposition of
  $\alpha_\mu \circ \alpha_m(a)$ is $\alpha_m(\Transpose{a}) \circ \beta_\mu$ when
  $\Mor[a]{X}{F(\mu)(Y)} : F(m)$.
\end{definition}

\begin{theorem}
  Any strict model of \MTT{} induces a cosmos. In particular, the initial model of \MTT{}
  $\InterpSyn{-}$ induces a cosmos.
\end{theorem}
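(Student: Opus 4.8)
The observation driving the proof is that an \MTT{} cosmos asks for strictly less structure than a strict model: it lives in presheaf categories rather than categories of contexts, it forgets the representability of the universes, and it requires the modal eliminator only in the weak-orthogonality form packaged by \cref{point:cosmoi:quasi-representation-2}. The plan is therefore to discard the representability data and repackage everything that remains. Write $\CC_m$ for the category of contexts of the given strict model at mode $m$: its CwF structure is equivalently a representable map $\El{m} : \EL{m} \to \TY{m}$ of presheaves over $\CC_m$ together with chosen codes for the connectives, each modality $\Mor[\mu]{n}{m}$ supplies a lock functor $\LockCx{-} : \CC_m \to \CC_n$ with coherence data and the keys $\Key{\alpha}{-}$, and the modal formation, introduction, and elimination operators satisfy their definitional equations and are stable under substitution.

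First I would build the underlying pseudofunctor $F : \Mode \to \VV$. On objects, set $F(m) := \PSH{\CC_m}$, a presheaf topos and in particular locally Cartesian closed (size conditions are handled as usual, enlarging the base universe if necessary). On a modality $\Mor[\mu]{n}{m}$, set $F(\mu) := (\LockCx{-})^{*} : \PSH{\CC_n} \to \PSH{\CC_m}$, precomposition with the lock functor; this is a right adjoint with $\LKan{F(\mu)} = \mathrm{Lan}_{\LockCx{-}}$ the left Kan extension, so both conditions of \cref{def:cosmoi:cosmos} hold. Precomposition is (2-)functorial, so the coherence isomorphisms of $\LockCx{-}$ and the key substitutions transport to exhibit $F$ as an object of $\VV$; checking these coherences is routine bookkeeping.

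Next I would equip $F$ with the structure of \cref{def:cosmoi:structured-cosmos}. Clauses (1) and (4) are pure transport: the universe $\El{m}$ and the codes for dependent sums, booleans, and modal dependent products are taken verbatim, forgetting only that $\El{m}$ is representable, and the subuniverse of clause (4) is furnished by the object-level universe $\Uni$ sitting inside the classifier of all types, closed under the same connectives by construction. For the square \cref{diag:models:modal-intro} of \cref{point:cosmoi:quasi-representation-1}, the bottom map $\mathbf{Mod} : F(\mu)(\TY{n}) \to \TY{m}$ is the formation operator: an element of $F(\mu)(\TY{n}) = (\LockCx{-})^{*}\TY{n}$ at stage $\Gamma$ is exactly a type $\IsTy[\LockCx{\Gamma}]{A}<n>$, which $\Modify{-}$ carries to $\IsTy[\Gamma]{\Modify{A}}<m>$; the top map is the introduction operator $\MkBox{-}$, and commutativity is the statement that $\MkBox{M} : \Modify{A}$. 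The induced comparison map $m$ then sends a term $\IsTm[\LockCx{\Gamma}]{M}{A}<o>$ to the pair $(A, \MkBox[\nu]{M})$.

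The step that is not mere transport, and which I expect to be the main obstacle, is \cref{point:cosmoi:quasi-representation-2}: exhibiting the chosen lifting structure $F(\mu)(m) \pitchfork F(\mu\circ\nu)(\TY{o}) \times \El{m}$. Unwinding the exponentials and pullbacks in the definition of internal lifting structure, a section of the resulting canonical map is precisely an operation which, given a motive $\IsTy[\ECx{\Gamma}{\Modify[\nu]{A}}]{B}$ and a method $\IsTm[\ECx{\Gamma}{A}<\mu\circ\nu>]{M_1}{\dots}$, returns a term over $\ECx{\Gamma}{\Modify[\nu]{A}}$ that restricts along $\MkBox[\nu]{\Var{0}}$ to $M_1$ --- that is, the modal eliminator $\LetMod{-}{M_1}<\nu>[\mu]$ with its definitional $\beta$-law, naturality of the section being the substitution-stability of the eliminator. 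The content of this clause is therefore the standard but delicate dictionary between the ``operator plus equations'' presentation of an elimination rule and its ``weak orthogonality / lifting structure'' presentation, carried out across the adjunction $\LKan{F(\mu)} \dashv F(\mu)$; all the choices demanded by the cosmos structure are inherited from those of the algebraic strict model, and nothing beyond this repackaging is needed. Finally, the term model --- the initial object $\InterpSyn{-}$ of the category of strict models of \MTT{} --- is in particular a strict model, so the construction just given yields a cosmos, which establishes the ``in particular'' clause.
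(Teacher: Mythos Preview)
The paper states this theorem without proof, so there is no argument to compare against; your proposal supplies exactly the construction the paper tacitly relies on. The outline is correct: pass to presheaves on each category of contexts to obtain the locally Cartesian closed categories, let each modality act by precomposition with the corresponding lock functor (yielding a right adjoint with left Kan extension as its left adjoint), carry the natural-model universe $\El{m}$ and its codes across verbatim while discarding representability, read the formation and introduction rules as the square of \cref{diag:models:modal-intro}, and---the only genuinely nontrivial step---unwind the internal lifting structure of \cref{point:cosmoi:quasi-representation-2} to see that it is precisely the modal eliminator together with its $\beta$-law, with naturality of the section supplied by substitution-stability.

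One small caution: an object of $\VV$ is a pseudofunctor into $\CAT_g$, whose 2-cells are \emph{invertible}, so the key substitutions $\Key{\alpha}{-}$ (which are not invertible in general) are not part of the data placing $F$ in $\VV$; they enter only through the variable rule and the rest of the cosmos structure. This does not affect the correctness of your argument, but the phrase ``the key substitutions transport to exhibit $F$ as an object of $\VV$'' should be dropped or reworded.
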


\begin{theorem}[Quasi-initiality]
  \label{thm:cosmoi:quasi-initiality}
  Given an arbitrary \MTT{} cosmos $G$ and a map $\Mor[\pi]{G}{\InterpSyn{-}}$, the following holds:
  \begin{enumerate}
  \item For every context $\IsCx{\Gamma}$, there exists an object $\Interp{\Gamma} : G(m)$, together
    with a canonical isomorphism $\alpha_\Gamma : \pi(\Interp{\Gamma}) \cong \Yo{\Gamma}$.
  \item For every type $\IsTy{A}$, there is a morphism $\Mor[\Interp{A}]{\Interp{\Gamma}}{\TY{m}}$
    such that $\pi(\Interp{A}) \circ \alpha_\Gamma = \YoEm{A}$.
  \item For every term $\IsTm{M}{A}$, there is a morphism $\Mor[\Interp{M}]{\InterpGl{M}}{\EL{m}}$
    lying over $\Interp{A}$ such that $\pi(\Interp{M}) \circ \alpha_\Gamma = \YoEm{M}$.
  \end{enumerate}
\end{theorem}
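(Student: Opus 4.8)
The plan is to proceed by induction on the derivations of $\IsCx{\Gamma}$, $\IsTy{A}$, and $\IsTm{M}{A}$ simultaneously, using the universal property of the initial strict model $\InterpSyn{-}$ together with the structure available in the cosmos $G$. The key subtlety, which distinguishes this from ordinary initiality, is that $G$ is \emph{not} a strict CwF model: it lives in the more flexible category $\VV$ where modalities and connectives are only preserved up to isomorphism, and the universes $\TY{m}, \EL{m}$ are not required to be representable. Hence one cannot simply invoke initiality of $\InterpSyn{-}$ to obtain a strict morphism into $G$. Instead, one extracts the data recursively and carries along, at each stage, the coherence isomorphisms $\alpha_\Gamma : \pi(\Interp{\Gamma}) \cong \Yo{\Gamma}$ witnessing that the interpretation agrees with the syntactic one after applying $\pi$.

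First I would set up the context case. Given $\IsCx{\ECx{\Gamma}{A}}$, having already produced $\Interp{\Gamma} : G(m)$ with $\alpha_\Gamma$ and $\Interp{A} : \Mor{}{\Interp{\Gamma}}{\TY{m}}$ (using that $\LockCx{\Gamma}$ is handled via $G(\mu)_!$ applied to $\Interp{\Gamma}$, transported along $\beta_\mu$), I form the context extension as the pullback of $\El{m}$ along $\Interp{A}$ in the locally Cartesian closed category $G(m)$; the isomorphism $\alpha_{\ECx{\Gamma}{A}}$ is assembled from $\alpha_\Gamma$, the hypothesis $\pi(\Interp{A}) \circ \alpha_\Gamma = \YoEm{A}$, and the fact that $\pi$ preserves LCCC structure (pullbacks of the generic element), using that the Yoneda embedding $\Yo$ sends syntactic context extension to the corresponding pullback. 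For the empty context and locks we use preservation of the terminal object and the natural isomorphism $\beta_\mu$ respectively, appealing to the 2-functoriality equations of $\LockCx{-}$ recorded in the excerpt to match the strict syntactic identities against the pseudofunctorial structure of $G$.

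Next I would treat types and terms by case analysis on the connective, in each case using the corresponding chosen structure on the cosmos from \cref{def:cosmoi:structured-cosmos}: dependent sums, booleans, and the universe are interpreted using the chosen codes in $\EL{m} \to \TY{m}$ exactly as in a natural model; the variable rule and substitution are interpreted by composition and pullback along the projections from context extensions. The modal types $\Modify{A}$ use the square \cref{diag:models:modal-intro} to produce the code, and $\MkBox{M}$ uses the factorization through $F(\mu)(\EL{n})$; the elimination rule $\LetMod{-}{-}$ is interpreted using the chosen lifting structure of point \ref{point:cosmoi:quasi-representation-2}, which is precisely the semantic counterpart of weak orthogonality of display maps to the canonical substitution $\Mor{\ECx{\Gamma}{A}<\nu\circ\mu>}{\ECx{\Gamma}{\Modify[\nu]{A}}<\nu>}$. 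At each step the required equation $\pi(\Interp{-}) \circ \alpha_\Gamma = \YoEm{-}$ follows because $\pi$ is a morphism of cosmoi and hence preserves every chosen structure strictly, so the image of each interpretation is literally the structure used to interpret the same connective in $\InterpSyn{-}$, which by initiality coincides with the syntactic one.

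The main obstacle I expect is \emph{coherence}: since $G$ is pseudofunctorial and $\InterpSyn{-}$ is strict, the functoriality equations of $\MTT{}$ (the substitution laws, the 2-functoriality equations for $\LockCx{-}$ such as $\EqCx{\LockCx{\Gamma}<\mu\circ\nu>}{\LockCx{\LockCx{\Gamma}}<\nu>}$ and $\EqCx{\LockCx{\Gamma}<\ArrId{m}>}{\Gamma}$, and the naturality of keys $\Key{\alpha}{\Gamma}$) must be shown to hold \emph{on the nose} for the extracted interpretation, even though $G$ only provides them up to the coherence isomorphisms $\beta_\mu$ and the pseudofunctor's structure 2-cells. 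Resolving this requires a careful bookkeeping argument: one must verify that the chosen isomorphisms cohere (a MacLane-style coherence argument for the weak 2-functor, plus the compatibility condition between $\alpha_\mu$, $\beta_\mu$, and transposition demanded by \cref{def:models:morphism-of-structured-cosmoi}), so that the interpretation descends to a well-defined assignment on the \emph{quotiented} syntax where definitional equalities are imposed. The proof strategy here mirrors the standard technique for strictifying pseudofunctorial semantics, but must be threaded through every modal rule; this is why the statement is phrased as ``quasi-initiality'' and only asserts existence of the data together with the comparison isomorphism, rather than initiality of $G$ in a category of cosmoi.
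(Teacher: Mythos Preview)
Your overall strategy is correct and matches the paper's: construct, for each piece of syntax, an interpretation in $G$ together with a coherence isomorphism relating its image under $\pi$ to the syntactic object, and verify compatibility with all the equations of \MTT{}. The paper packages this as a \emph{displayed model} over the syntactic model (contexts are triples $(X,\Gamma,\alpha)$, types are pairs $(\bar A,A)$ satisfying $\pi(\bar A)=\YoEm{A}\circ\alpha$, etc.), and then invokes initiality of the strict syntax to obtain the section; this is equivalent to your ``induction on derivations'' but organizes the verification of definitional equalities more cleanly.

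The one substantive divergence is your treatment of coherence. You correctly identify the pseudofunctoriality of $G$ as the main obstacle and propose to thread a MacLane-style coherence argument through every modal rule. The paper instead disposes of this in one line: it replaces $G$ by an equivalent \emph{strict} 2-functor so that $\pi$ becomes strictly 2-natural, and then works under that assumption. After this strictification the equations $\LockCx{\Gamma}<\mu\circ\nu>=\LockCx{\LockCx{\Gamma}}<\nu>$ etc.\ hold on the nose in $G$, and the remaining checks (that $\pi$ preserves each connective and commutes with transposition) are exactly the hypotheses of \cref{def:models:morphism-of-structured-cosmoi}. Your approach would work, but the strictification move makes the bookkeeping you anticipate unnecessary.
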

\begin{remark}
  We have used `syntactic' notation ($\IsCx{\Gamma}$, $\IsTm{M}{A{}}$, \etc{}) to denote the
  components of the initial model $\InterpSyn{-}$. This is a mere notational convenience: we do not
  rely on the fact that $\InterpSyn{-}$ may be constructed out of `traditional' syntax.
\end{remark}
\begin{proof}
  For clarity, we write $\SEl$, $\STy$ and $\STm$ instead of $\El{m}$, $\TY{m}$, and $\El{m}$ in the
  syntactic model, reserving the latter exclusively for $G$. We write $\Interp{\mu}$ for the functor
  sending $\Gamma$ to $\LockCx{\Gamma}$. We may replace $G$ by an equivalent strict 2-functor such
  that $\pi$ becomes strictly 2-natural, so we work under this assumption.

  We construct a displayed model of \MTT{}~\parencite{kaposi:qiits:2019} which lies over the
  syntactic model.
  \begin{itemize}
  \item Contexts in mode $m$ are interpreted by triples
    $(X : G(m), \Gamma, \alpha : \pi(X) \cong \Yo{\Gamma})$.
  \item A type in a context $(X, \Gamma, \alpha)$ is a pair of
    $(\Mor[\bar{A}]{X}{\TY{m}}, \IsTy{A})$ such that $\pi(\bar{A}) = \YoEm{A} \circ \alpha$.
  \item A term in a context $(X, \Gamma, \alpha)$ of type $(A^*, A)$ is a pair of
    $\Mor[M^*]{X}{\EL{m}}$ and $\IsTm{M}{A}$ such that $\El{m} \circ M^* = A^*$ and
    $\pi(M^*) = \YoEm{M} \circ \alpha$.
  \item A substitution between $(X, \Gamma, \alpha)$ and $(Y, \Delta, \beta)$ is a pair of
    $\Mor[f]{X}{Y}$ and $\IsSb{\delta}{\Delta}$ such that
    $\beta \circ \pi(f) = \Yo{\delta} \circ \alpha$
  \end{itemize}
  Once this model is constructed, the result is a direct consequence of the initiality of (strict)
  syntax. In fact, the construction of contexts, substitutions, terms, and types is relatively
  routine owing to the fact that $\pi$ is a 2-natural transformation, preserves finite limits,
  and strictly commutes with all connectives. We show a few cases to give a flavor for the
  procedure.

  \paragraph{The action of a modality on a context}
  Given a triple $(X, \Gamma, \alpha)$ at mode $n$ and a modality $\Mor[\mu]{n}{m}$, we define the
  `locked' context to be the following:
  \[
    (\LKan{G(\mu)}(X), \LockCx{\Gamma}, \gamma \circ \LKan{\Interp{\mu}}{\alpha} \circ \beta)
  \]
  Here, $\beta : \pi(\LKan{G(\mu)}(X)) \cong \LKan{\Interp{\mu}}(\pi(X))$ while
  $\gamma : \LKan{\Interp{\mu}}(\Yo{\Gamma}) \cong \Yo{\LockCx{\Gamma}}$.

  \paragraph{Context extension}
  Given a context $(X, \Gamma, \alpha)$ at mode $m$, a modality $\Mor[\mu]{n}{m}$ and a type
  $(A^*, A)$ in the context $(\LKan{G(\mu)}(X), \LockCx{\Gamma}, \beta)$, we form the context
  extension as the triple $(X \times_{G(\mu)(\TY{m})} G(\EL{m}), \ECx{\Gamma}{A}, \beta)$, where
  $\beta$ is the composite:
  \[
    \pi(X \times_{G(\mu)(\TY{m})} G(\mu)(\EL{m})) \cong
    \pi(X) \times_{\Pre{\Interp{\mu}}(\STy)} \Interp{\mu}^*(\STm) \cong
    \Yo{\ECx{\Gamma}{A}}
  \]
  The first isomorphism follows from the fact that $\pi$ preserves finite limits, is strictly
  2-natural, and strictly preserves $\El{m}$ while the second isomorphism is the universal property
  of $\Yo{\ECx{\Gamma}{A}}$.

  \paragraph{Modal types}
  Suppose we are given a context $(X, \Gamma, \alpha)$ and a type $(A^*, A)$ in the context
  $(\LKan{G(\mu)}(\mu)(X), \LockCx{\Gamma}, \gamma \circ \Interp{\mu}^*(\alpha) \circ \beta)$. We
  form the modal type as
  \[
    (\mathbf{Mod}_\mu(\Transpose{A^*}), \Modify{A})
  \]
  It remains to check that these types are coherent. That is, that
  \[
    \pi(\mathbf{Mod}_\mu(\Transpose{A^*})) = \YoEm{\Modify{A}} \circ \alpha
  \]
  By assumption, $\pi(A^*) = \YoEm{A} \circ \gamma \circ \Interp{\mu}^*(\alpha) \circ \beta$. By our
  assumption that $\pi$ commutes with transposition,
  $\pi(\Transpose{A^*}) = \Transpose{\YoEm{A} \circ \gamma} \circ \alpha$. The result follows from
  the fact that $\pi$ preserves $\mathbf{Mod}$.
\end{proof}

\subsection{An initial cosmos}

While the following material is not strictly necessary for our proof of normalization, we record it
for general interest.

\begin{theorem}
  \label{thm:cosmoi:sketchable}
  The category of MTT cosmoi can be sketched over $\VV$.
\end{theorem}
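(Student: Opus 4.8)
The plan is to exhibit the structure of an $\mathbf{MTT}$ cosmos (\cref{def:cosmoi:structured-cosmos}) as a \emph{limit sketch} over the underlying 2-category $\VV$, so that cosmoi become models of an essentially algebraic theory relative to $\VV$. Concretely, I would proceed by analyzing each clause of \cref{def:cosmoi:structured-cosmos} and showing that it asserts the existence of a chosen operation or lifting subject to finite-limit equations, all of which can be packaged as sketch data.

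\begin{proof}[Proof sketch]
  Recall that a \emph{sketch over} a base category $\BB$ is a diagram together with a set of cones and cocones designating which limits and colimits must be preserved; a model is a functor out of the sketch's underlying category into $\Set$ (or here, into the relevant $2$-category) that sends the designated cones to limits. The strategy is to observe that $\VV$ already carries the ambient LCCC structure on each $F(m)$ and the right-adjoint structure on each $F(\mu)$, so it suffices to sketch the \emph{additional} data of \cref{def:cosmoi:structured-cosmos}. We treat the clauses in order. For clause (1), a universe $\Mor[\El{m}]{\EL{m}}{\TY{m}}$ with codes for dependent sums, booleans, and modal dependent products is specified by adjoining to the sketch: two objects $\EL{m}, \TY{m}$, a morphism between them, and, for each connective, a morphism from the appropriate representing object (built using the finite-limit structure already available over $\VV$, \eg{} $f_\mu(\El{n})$ and the polynomial $\Poly{f_\mu(\El{n})}(\El{m})$) into $\TY{m}$, together with equations expressing that decoding these codes yields the corresponding connective up to the chosen isomorphism $\DecIso{-}$. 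All of this is finite-limit data because polynomial functors and the comparison maps in question are constructed from pullbacks and the right adjoints $F(\mu)$. For clause (2), the commuting square \cref{diag:models:modal-intro} is a single additional morphism $\mathbf{Mod}_\mu$ plus one commutativity equation. For clause (3), an internal lifting structure $s : i \pitchfork \tau$ is, by \cref{def:cosmoi:structured-cosmos} together with the definition quoted from \textcite{awodey:2018}, precisely a section of a canonical map; a section is an operation satisfying one equation, and the canonical map itself is assembled from the exponentials and pullbacks already present. Hence clause (3) contributes one more operation symbol and one equation. Clause (4) is entirely parallel: a subuniverse is a mono into $\TY{m}$ together with codes restricting the ones already named, subject to the evident equations.

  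Assembling these pieces, I would define the sketch $\mathbb{S}$ whose underlying category is $\VV$ freely extended by the operation symbols above, with cones declaring that the relevant pullbacks (\eg{} $Y^B \times_{Y^A} X^A$, $F(\mu)(\TY{n})\times_{\TY{n}}\EL{n}$, and the context-extension pullbacks) are genuine limits, and with equations as listed. A model of $\mathbb{S}$ in $\CAT_g$ that restricts to an object of $\VV$ is then, by unwinding definitions, exactly an $\mathbf{MTT}$ cosmos; and a morphism of models that is the identity on the $\VV$-part and respects the chosen operations is exactly a morphism of cosmoi in the sense of \cref{def:models:morphism-of-structured-cosmoi}, once one checks that the "up to isomorphism" clauses ($\beta_\mu$ and the transposition compatibility) are themselves expressible as equations between the sketched operation symbols. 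This last point is where care is needed.

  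The main obstacle I anticipate is precisely the bookkeeping around the \emph{weak} (pseudo) nature of several pieces of structure: $\VV$ is a locally full $2$-subcategory of a pseudofunctor $2$-category, the universes are weak \`a la Tarski, and morphisms of cosmoi preserve the modal action only up to the coherent isomorphism $\beta_\mu$. A limit sketch in the strict sense produces strict models, so to make the statement literally true one must either (a) work with a suitable $2$-dimensional or bicategorical notion of sketch, tracking the invertible $2$-cells of $\CAT_g$ as part of the sketch data, or (b) first strictify---as is in fact done in the proof of \cref{thm:cosmoi:quasi-initiality}, where $G$ is replaced by an equivalent strict $2$-functor---and then sketch the strictified notion. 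I would take route (b), remarking that the strictification is harmless for the downstream application, and then the verification that every clause of \cref{def:cosmoi:structured-cosmos} and \cref{def:models:morphism-of-structured-cosmoi} is finite-limit data over $\VV$ becomes the routine (if lengthy) check sketched above.
\end{proof}
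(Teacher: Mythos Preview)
Your proposal takes a genuinely different route from the paper. The paper's proof is much more structural: it first shows that $\VV$ itself is finitarily 2-monadic over $\Hom{\verts{\Mode}}{\CAT_g}$, by combining the known facts that locally Cartesian closed structure is a finitary 2-monad on $\CAT_g$ and that pseudofunctors whose 1-cells are right adjoints are finitarily 2-monadic. With that in hand, the phrase ``sketched over $\VV$'' acquires a precise meaning via the framework of Kinoshita--Power--Takeyama, and the remaining step is a one-line appeal to their machinery for the constants and equations of \cref{def:cosmoi:structured-cosmos}.

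Your clause-by-clause unpacking is not wrong in spirit---each item of \cref{def:cosmoi:structured-cosmos} does indeed contribute operation symbols and finite-limit equations---but the framing is looser than the paper's in a way that matters. You treat ``sketch over $\VV$'' as if it were an unproblematic 1-categorical limit sketch with $\VV$ as base, and then patch the pseudo aspects after the fact via strictification. The paper instead absorbs all of the 2-categorical weakness (pseudofunctoriality, invertible 2-cells in $\CAT_g$, the right-adjoint condition) into the 2-monad, so that the sketch only has to describe the genuinely new data. This buys a short proof and a precise statement; your route would require you to specify, and justify, a 2-dimensional notion of sketch from scratch, which is exactly what Kinoshita--Power--Takeyama already provides. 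Your route (b) strictification of cosmoi does not obviously suffice here, since the issue is not just that individual cosmoi are pseudo but that $\VV$ itself is a 2-category of pseudofunctors.
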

\begin{proof}
  First we show that $\VV$ is 2-monadic over $\Hom{\verts{\Mode}}{\CAT_g}$. This follows directly
  from the fact that locally Cartesian closure can be realized as a finitary 2-monad on
  $\CAT_g$~\parencite[Section 5.9]{lack:2009} and pseudofunctors whose 1-cells are right adjoints
  are finitarily 2-monadic over $\Hom{\verts{\Mode}}{\CAT_g}$~\parencite[Section
  5.9]{lack:2009}. Combining the operations and equations shows $\VV$ is finitarily 2-monadic over
  $\Hom{\verts{\Mode}}{\CAT_g}$.

  The constants and equations of \MTT{}-cosmoi are sketchable over this 2-monad following the work
  of \textcite{kinoshita-power-takeyama:1999}.
\end{proof}

\begin{corollary}
  \label{cor:cosmoi:initial}
  The category of \MTT{}-cosmoi has a bi-initial object $\ICosmos{-}$.
\end{corollary}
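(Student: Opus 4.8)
The plan is to read off the existence of a bi-initial object directly from the structural description of $\MTT{}$-cosmoi furnished by \cref{thm:cosmoi:sketchable}. That theorem presents the 2-category of $\MTT{}$-cosmoi as the 2-category of models of a sketch on the finitary 2-monad $T$ on $\mathcal{K} := \Hom{\verts{\Mode}}{\CAT_g}$ whose pseudoalgebras, pseudomorphisms, and algebra 2-cells are exactly $\VV$ together with its morphisms. First I would observe that $\mathcal{K}$ is locally finitely presentable as a 2-category: it is a functor 2-category out of the small discrete 2-category $\verts{\Mode}$ into $\CAT_g$, which is the base 2-category of the two-dimensional monad theory recalled in \textcite[Section 5.9]{lack:2009}, and such functor 2-categories are again locally finitely presentable. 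In particular $\mathcal{K}$ is bicategorically cocomplete and trivially has a bi-initial object, namely its initial object $\mathbf{0}$.

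Next I would invoke two-dimensional monad theory: for a finitary 2-monad $T$ on a locally finitely presentable 2-category, the 2-category $T\text{-}\mathrm{Alg}$ of pseudo-$T$-algebras, \emph{pseudomorphisms}, and algebra 2-cells is again bicategorically cocomplete, and the forgetful 2-functor $T\text{-}\mathrm{Alg} \to \mathcal{K}$ admits a left biadjoint (via the pseudomorphism classifier composed with the free algebra functor). It is essential here that the morphisms be pseudomorphisms --- matching \cref{def:models:morphism-of-structured-cosmoi}, where the LCCC structure and the modal data are preserved only up to coherent isomorphism --- so that the relevant notion of colimit is the flexible/bicategorical one; the coherence theorem for pseudoalgebras is precisely what guarantees these bicolimits exist. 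Applying this with $T\text{-}\mathrm{Alg} \simeq \VV$ shows that $\VV$ is bicategorically cocomplete.

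It remains to pass from $\VV$ to the full sub-2-category of $\MTT{}$-cosmoi. By \cref{thm:cosmoi:sketchable} and the analysis of \textcite{kinoshita-power-takeyama:1999}, the constants and equations carving $\MTT{}$-cosmoi out of $\VV$ --- the universes, the codes for the connectives, the squares \cref{diag:models:modal-intro}, and the lifting structures --- assemble into a (two-dimensional, limit-type) sketch, so that the 2-category of its models is again accessible; and since each of these data is preserved strictly by morphisms of $\MTT{}$-cosmoi, the bicolimits computed in $\VV$ restrict, so this 2-category is itself bicategorically cocomplete. A bi-initial object is the bicolimit of the empty diagram, so the 2-category of $\MTT{}$-cosmoi has one, which we name $\ICosmos{-}$.

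The main obstacle is not any isolated computation but the coherence bookkeeping surrounding pseudo- versus strict structure: one must check that the 2-monad of \cref{thm:cosmoi:sketchable} is genuinely finitary so that the machinery of \textcite{lack:2009} applies, that the target universal property is bi-initiality (a contractible groupoid of maps into each cosmos) rather than strict initiality, and that the sketch conditions are of limit type so that the models inherit bicocompleteness rather than mere accessibility. If one prefers to sidestep full bicocompleteness, an alternative is to exhibit $\ICosmos{-}$ semi-explicitly as the image of $\mathbf{0}$ under the left biadjoint to the forgetful 2-functor (corestricted to $\MTT{}$-cosmoi), and to verify by hand that for any cosmos $G$ the hom-category $\Hom{\ICosmos{-}}{G}$ is equivalent to $\mathcal{K}(\mathbf{0}, U G) \simeq \mathbf{1}$.
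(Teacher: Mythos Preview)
Your approach is essentially the paper's: the corollary is stated without proof and is meant to follow immediately from \cref{thm:cosmoi:sketchable} via the standard result (due to \textcite{kinoshita-power-takeyama:1999}) that models of a sketch over a suitable 2-monad admit a bi-initial object. Your elaboration through full bicategorical cocompleteness of $\VV$ is more than the paper needs---the KPT machinery yields the initial model directly without passing through arbitrary bicolimits---and you slightly misread \cref{def:models:morphism-of-structured-cosmoi}: the connectives are required to be preserved \emph{strictly}, not up to isomorphism (it is only the LCCC structure and the modal transposition $\beta_\mu$ that are pseudo), so the emphasis on pseudomorphisms is misplaced for that part of the structure.
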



\section{Foundations of multimodal Synthetic Tait Computability}
\label{sec:foundations}

Eventually, we will wish to work with a collection of models of Synthetic Tait Computability (STC),
connected by a variety of adjunctions. In order for this to work, however, we need to show that
adjunctions between various categories lift to adjunctions between glued categories. The main
result of this section is \cref{thm:foundations:existence}, which states that a collection of glued
categories interconnected by right adjoints supports a model of ``multimodal'' STC.

\begin{theorem}
  \label{thm:foundations:induced}
  Suppose we are given lex functors $\Mor[\rho_i]{\SH{\XTop_i}}{\SH{\YTop_i}}$ and a 2-cell $\alpha$
  witnessing the weak commutativity of the following diagram:
  \[
    \DiagramSquare{
      width = 3cm,
      nw = \SH{\XTop_0},
      ne = \SH{\YTop_0},
      sw = \SH{\XTop_1},
      se = \SH{\YTop_1},
      north = \rho_0,
      south = \rho_1,
      east = \Inv{g},
      west = \Inv{f},
    }
  \]
  Then there is an induced morphism $\Mor[\brackets{f, g}]{\GL{\rho_1}}{\GL{\rho_0}}$ such
  that $\brackets{f,g}$ fits into the following diagram:
  \begin{equation*}
    \begin{tikzpicture}[diagram]
      \node (G1) {$\GL{\rho_1}$};
      \node [left = 3cm of G1] (X1) {$\XTop_1$};
      \node [right = 3cm of G1] (Y1) {$\YTop_1$};
      \node [below = of G1] (G0) {$\GL{\rho_0}$};
      \node [below = of X1] (X0) {$\XTop_0$};
      \node [below = of Y1] (Y0) {$\YTop_0$};
      \path[->] (G1) edge (G0);
      \path[->] (X1) edge (X0);
      \path[->] (Y1) edge (Y0);
      \path[open immersion] (X1) edge (G1);
      \path[embedding] (Y1) edge (G1);
      \path[open immersion] (X0) edge (G0);
      \path[embedding] (Y0) edge (G0);
    \end{tikzpicture}
  \end{equation*}
  Moreover, if both fringe functors $\rho_i$ are continuous and $f$ and $g$ are essential, then
  $\brackets{f,g}$ is also essential.
\end{theorem}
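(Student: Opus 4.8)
The plan is to build the inverse image of $\brackets{f,g}$ by hand as a coordinatewise functor, check that it is lex and cocontinuous so that it is the inverse image of a genuine geometric morphism of topoi, observe that the resulting square commutes with the gluing immersions, and finally promote it to an essential geometric morphism under the stated hypotheses. Concretely, recalling that an object of $\GL{\rho_i}$ is a triple $(B_i, A_i, \Mor[b_i]{B_i}{\rho_i(A_i)})$ with $A_i : \SH{\XTop_i}$ and $B_i : \SH{\YTop_i}$, I would set
\[
  \Inv{\brackets{f,g}}(B_0, A_0, b_0)
  \;=\;
  \bigl(\Inv{g}(B_0),\; \Inv{f}(A_0),\; \alpha_{A_0} \circ \Inv{g}(b_0)\bigr),
\]
where $\alpha_{A_0} \colon \Inv{g}(\rho_0(A_0)) \to \rho_1(\Inv{f}(A_0))$ is the component at $A_0$ of the mediating $2$-cell, taken in the direction $\Inv{g}\circ\rho_0 \Rightarrow \rho_1\circ\Inv{f}$ (the one the application requires), and act on morphisms by applying $\Inv{f}$ and $\Inv{g}$ in each coordinate. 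Functoriality and compatibility with the structure maps fall out of the naturality of $\alpha$. Postcomposing $\Inv{\brackets{f,g}}$ with the inverse images of the open immersion and the embedding of $\GL{\rho_1}$ recovers $\Inv{f}$ and $\Inv{g}$ composed with the corresponding projections out of $\GL{\rho_0}$ on the nose; since these immersions are the canonical ones of the Artin gluing, this is exactly the commutativity of the two squares in the displayed diagram.

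It then remains to see that $\Inv{\brackets{f,g}}$ is the inverse image part of a geometric morphism. For this I would note that finite limits in $\GL{\rho_i}$ are computed coordinatewise in $\SH{\XTop_i}$ and $\SH{\YTop_i}$, the structure map being assembled from the component structure maps through the (invertible, since $\rho_i$ is lex) comparison for $\rho_i$; since $\Inv{f}$, $\Inv{g}$, and $\rho_1$ are lex and $\alpha$ is natural, $\Inv{\brackets{f,g}}$ preserves finite limits. Dually, colimits in $\GL{\rho_i}$ are also coordinatewise --- the inverse images of the two gluing immersions are left adjoints and jointly conservative --- and $\Inv{f}, \Inv{g}$ preserve colimits as left adjoints, so $\Inv{\brackets{f,g}}$ is cocontinuous, the structure maps once more matching by naturality of $\alpha$. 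As each $\GL{\rho_i}$ is a Grothendieck (in fact presheaf) topos, a lex cocontinuous functor $\Inv{\brackets{f,g}} \colon \GL{\rho_0} \to \GL{\rho_1}$ is the inverse image of an essentially unique geometric morphism $\Mor[\brackets{f,g}]{\GL{\rho_1}}{\GL{\rho_0}}$, which supplies the required morphism and makes the stated diagram commute.

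For the final clause, suppose $\rho_0$ and $\rho_1$ are continuous and $f, g$ are essential. Then $\Inv{f}$ and $\Inv{g}$ carry left adjoints $\LKan{\Inv{f}}$ and $\LKan{\Inv{g}}$, hence are right adjoints and preserve \emph{all} small limits; together with the continuity of the $\rho_i$ this upgrades the coordinatewise limit computation above to arbitrary small limits, so $\Inv{\brackets{f,g}}$ preserves all small limits. Since $\GL{\rho_i}$ are locally presentable and $\Inv{\brackets{f,g}}$ is cocontinuous --- in particular accessible --- the adjoint functor theorem produces a left adjoint $\LKan{\Inv{\brackets{f,g}}} \dashv \Inv{\brackets{f,g}}$, so $\brackets{f,g}$ is essential. (Explicitly, this left adjoint is $(B_1, A_1, b_1) \mapsto (\LKan{\Inv{g}}(B_1), \LKan{\Inv{f}}(A_1), -)$ with structure map built from the mate of $\alpha$ along the two adjunctions, but the abstract argument is shorter.)

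The substance here is light: the only thing that takes real care is chasing the structure maps through the coordinatewise $(\text{co})$limit formulas in $\GL{\rho_i}$, which is powered entirely by the naturality of $\alpha$ and the lex/continuity hypotheses. The one conceptual point --- and the crux of the ``moreover'' --- is recognizing that essentialness of $f$ and $g$ is precisely the hypothesis that makes $\Inv{f}$ and $\Inv{g}$ preserve all (not merely finite) limits, which is exactly what the coordinatewise argument needs once the $\rho_i$ are continuous.
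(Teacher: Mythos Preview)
Your proposal is correct and follows essentially the same approach as the paper: define $\Inv{\brackets{f,g}}$ coordinatewise using $\alpha$ to repair the structure map, deduce lex and cocontinuity from the fact that finite limits and colimits in the gluing are computed pointwise, and obtain essentialness from the observation that continuity of the $\rho_i$ makes \emph{all} limits pointwise so that $\Inv{\brackets{f,g}}$ preserves them. The only cosmetic difference is that you invoke the adjoint functor theorem for the last step, whereas the paper leaves this implicit here and supplies the explicit pointwise left adjoint (your parenthetical construction) in the subsequent theorem.
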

\begin{proof}
  Explicitly, $\Inv{\brackets{f,g}}$ sends $\Mor{F}{\rho_0(E)}$ to
  $\Mor{\Inv{g}(F)}{\rho_1(\Inv{f}(E))}$. The 2-cell $\alpha$ is used to obtain this morphism by
  correcting $g\parens{\Mor{F}{\rho_0(E)}}$.

  Colimits and finite limits are determined pointwise in $\SH{\GL{\rho_i}}$, so
  $\Inv{\brackets{f,g}}$ preserves them because $\Inv{f}$ and $\Inv{g}$ both do. If both $\rho_i$
  are continuous, then all limits are determined pointwise, so $\brackets{f,g}$ is essential if both
  $f$ and $g$ are.
\end{proof}

\begin{theorem}
  \label{thm:foundations:essential}
  Suppose that $f$ and $g$ are essential, then $\Ess{\brackets{f,g}}$ is a pointwise application of
  $\Ess{f}$ and $\Ess{g}$.
\end{theorem}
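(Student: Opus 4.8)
The plan is to guess the left adjoint $\Ess{\brackets{f,g}}$ explicitly --- pointwise, as the statement anticipates --- and then appeal to the uniqueness of adjoints. Note first that we are necessarily in the situation of \cref{thm:foundations:induced} in which $\brackets{f,g}$ is essential: beyond $f$ and $g$ being essential, this requires each $\rho_i$ to be continuous, so that limits in $\GL{\rho_i}$ are computed pointwise and the (pointwise) functor $\Inv{\brackets{f,g}}$ preserves them. In particular each $\rho_i$ then has a left adjoint $\lambda_i$. Recall from the proof of \cref{thm:foundations:induced} that $\Inv{\brackets{f,g}}$ sends an object $(F, E, e\colon F \to \rho_0(E))$ to the object with components $\Inv{g}(F)$, $\Inv{f}(E)$ and structure map $\Inv{g}(e)$ corrected by $\alpha$; I would construct a pointwise left adjoint to it directly and then conclude.

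First I would define a candidate functor $L\colon \GL{\rho_1} \to \GL{\rho_0}$ by setting $L(F, E, e) := (\Ess{g}(F), \Ess{f}(E), e')$ for an object $(F, E, e\colon F \to \rho_1(E))$ of $\GL{\rho_1}$, and letting $L$ act componentwise --- via $\Ess{g}$ and $\Ess{f}$ --- on morphisms; functoriality and preservation of the comma squares then follow from naturality of the $2$-cells in play. The only genuine content is the structure map $e'\colon \Ess{g}(F) \to \rho_0(\Ess{f}(E))$, which I would obtain from $\alpha$ by the calculus of mates: transpose $e$ across $\lambda_1 \dashv \rho_1$, apply $\Ess{f}$, compose with the component at $F$ of the iterated mate of $\alpha$ assembled from the adjunctions $\Ess{f} \dashv \Inv{f}$, $\Ess{g} \dashv \Inv{g}$, and $\lambda_i \dashv \rho_i$, and transpose back across $\lambda_0 \dashv \rho_0$. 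Since $L$ and $e'$ are built solely out of $\Ess{f}$, $\Ess{g}$, the $\lambda_i$, and $\alpha$, $L$ is visibly ``a pointwise application of $\Ess{f}$ and $\Ess{g}$'', so it suffices to verify $L \dashv \Inv{\brackets{f,g}}$, after which $\Ess{\brackets{f,g}} \cong L$ by uniqueness of adjoints. That verification I would carry out by an explicit hom-set computation: a morphism $L(F, E, e) \to (G, D, d)$ is a pair of maps $\Ess{g}(F) \to G$ and $\Ess{f}(E) \to D$ filling a commuting square, and transposing these along $\Ess{g} \dashv \Inv{g}$ and $\Ess{f} \dashv \Inv{f}$ yields a pair $F \to \Inv{g}(G)$, $E \to \Inv{f}(D)$ --- precisely the data of a morphism $(F, E, e) \to \Inv{\brackets{f,g}}(G, D, d)$. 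The passage on underlying pairs is an evident bijection; the content is that the two square-commutativity conditions correspond, which --- after unwinding the unit/counit formulas --- reduces to naturality of $\alpha$ together with the single identity between $e'$ and $\alpha$ already forced by instantiating at identity morphisms. This bijection is natural in both variables, so $L$ is left adjoint to $\Inv{\brackets{f,g}}$, which proves the claim.

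The main, and essentially the only non-formal, obstacle is this structure map $e'$ and the coherence identity it must satisfy. Getting the direction of the iterated mate of $\alpha$ correct is delicate, and checking the forced identity is a diagram chase through units, counits, and the naturality square for $\alpha$; this is also exactly the point at which a Beck--Chevalley-type hypothesis on $\alpha$ --- harmless in the intended applications --- would be invoked, should the mate alone not suffice. The remaining work (functoriality of $L$, the bijection on underlying pairs, naturality of the adjunction isomorphism) is the sort of ``avalanche'' of naturality obligations that the internal method is designed to absorb, but here it is a bounded, one-time calculation.
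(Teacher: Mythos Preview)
Your approach is essentially the paper's: define the candidate left adjoint pointwise on objects and morphisms, produce the structure map from a mate of $\alpha$, and verify the adjunction by an explicit hom-set bijection checking that the two commuting-square conditions transpose into one another.

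The one substantive difference is in how you build the structure map. You route through the adjunctions $\lambda_i \dashv \rho_i$, transposing $e$ to the $\lambda_1$-side, pushing through an iterated mate, and transposing back. The paper does something strictly simpler: it applies $\Ess{g}$ directly to $e\colon F \to \rho_1(E)$ and then postcomposes with a single mate $\beta\colon \Ess{g}\circ\rho_1 \Rightarrow \rho_0\circ\Ess{f}$ built only from the adjunctions $\Ess{f}\dashv\Inv{f}$ and $\Ess{g}\dashv\Inv{g}$ together with $\alpha$. No $\lambda_i$ are needed at any point. This also resolves your worry about a Beck--Chevalley-type condition: the paper's equational verification (two short chains of equalities, one per direction of the bijection) goes through using only naturality, the triangle identities for the two adjunctions, and the invertibility of $\alpha$; no additional hypothesis on $\alpha$ is invoked. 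Your construction would also work, but the detour through $\lambda_i$ adds bookkeeping without buying anything.
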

\begin{proof}
  Explicitly, fix a morphism $\Mor[x]{Y}{\rho_0(X)}$. We then define $\Ess{\brackets{f,g}}{x}$ to be
  $\Mor[\tilde{x}]{\Ess{g}(Y)}{\rho_1(\Ess{g}(X))}$, where $\tilde{x} = \beta \circ \Ess{g}(x)$ and
  $\beta$ is the canonical natural transformation determined by the following series of transposes:
  \begin{align*}
    \beta \in \Hom{\Ess{g} \circ \rho_1}{\rho_0 \circ \Ess{f}}
    &\cong \Hom{\rho_1}{\Inv{g} \circ \rho_1 \circ \Ess{f}}
    \\
    &\cong \Hom{\rho_1}{\rho_1 \circ \Inv{f} \circ \Ess{f}}
    \ni \rho_1(\Unit)
  \end{align*}

  It remains to show that this defines an adjoint. We show that there is a bijection
  $\Hom{\Ess{\brackets{f,g}}(x)}{x'} \cong \Hom{x}{\Inv{\brackets{f,g}}(x')}$. Given that
  $\Ess{\brackets{f,g}}$ and $\Inv{\brackets{f,g}}$ are defined pointwise, it suffices to show that
  the transpose operators for $\Ess{f} \Adjoint \Inv{f}$ and $\Ess{g} \Adjoint \Inv{g}$ lift.

  Fix $\Mor[b]{X}{\Inv{f}{X'}}$ and $\Mor[a]{Y}{\Inv{g}(Y')}$ such that
  $\alpha \circ \Inv{g}(x') \circ a = \rho_1(b) \circ x$. We wish to show that
  $x' \circ \Transpose{a} = \rho_0(\Transpose{b}) \circ \tilde{x}$.
  \begin{align*}
    \rho_0(\Transpose{b}) \circ \tilde{x}
    &= \rho_0(\Transpose{b}) \circ \beta \circ \Ess{g}(x)
    \\
    &= \rho_0(\Transpose{b}) \circ \Counit \circ \Ess{g}(\alpha^{-1} \circ \rho_1(\eta)) \circ \Ess{g}(x)
    \\
    &= \rho_0(\Transpose{b}) \circ \Counit \circ \Ess{g}(\alpha^{-1} \circ \rho_1(\eta) \circ x)
    \\
    &= \Counit \circ \Ess{g}(\Inv{g}(\rho_0(\Transpose{b})) \circ \alpha^{-1} \circ \rho_1(\eta) \circ x)
    \\
    &= \Counit \circ \Ess{g}(\alpha^{-1} \circ \rho_1(\Inv{f}(\Transpose{b})) \circ \rho_1(\eta) \circ x)
    \\
    &= \Counit \circ \Ess{g}(\alpha^{-1} \circ \rho_1(b) \circ x)
    \\
    &= \Counit \circ \Ess{g}(\Inv{g}(x') \circ a)
    \\
    &= x' \circ \Transpose{a}
  \end{align*}

  Next, $\Mor[b]{\Ess{f}(X)}{X'}$ and $\Mor[a]{\Ess{g}(Y)}{Y'}$ such that
  $x' \circ a = \rho_0(b) \circ \tilde{x}$.  We wish to show that
  $\alpha \circ \Inv{g}(x') \circ \Transpose{a} = \rho_1(\Transpose{b}) \circ x$.
  \begin{align*}
    \alpha \circ \Inv{g}(x') \circ \Transpose{a}
    &= \alpha \circ \Inv{g}(x') \circ \Inv{g}(a) \circ \Unit
    \\
    &= \alpha \circ \Inv{g}(x' \circ a) \circ \Unit
    \\
    &= \alpha \circ \Inv{g}(\rho_0(b) \circ \tilde{x}) \circ \Unit
    \\
    &= \rho_1(\Inv{f}(b)) \circ \alpha \circ \Inv{g}(\Counit \circ \Ess{g}(\alpha^{-1} \circ \rho_1(\eta) \circ x)) \circ \Unit
    \\
    &= \rho_1(\Inv{f}(b)) \circ \alpha \circ \Inv{g}(\Counit) \circ \Unit \circ \alpha^{-1} \circ \rho_1(\eta) \circ x
    \\
    &= \rho_1(\Inv{f}(b)) \circ \rho_1(\eta) \circ x
    \\
    &= \rho_1(\Transpose{b}) \circ x \qedhere
  \end{align*}
\end{proof}

It is natural to view $\GL{\rho_i}$ as a topos with a subterminal object $\Prop_i$ such that the
open subtopos corresponding to $\Prop_i$ is equivalent to $\XTop_i$, while the closed subtopos is
equivalent to $\YTop_i$. For the sake of easy of readability, we will identify these subtopoi with
$\XTop_i$ and $\YTop_i$ in what follows. Recall that the inverse image of the inclusion
$\EmbMor{\XTop_i}{\GL{\rho_i}}$ is given by $-^{\Prop_i}$. The inverse image of
$\EmbMor{\YTop_i}{\GL{\rho_i}}$ is given by $- \Join \Prop_i$, where $A \Join \Prop_i$ is the
pushout $A \Pushout{A \times \Prop_i} \Prop_i$.

\subsection{Open and closed subtopoi}
\label{sec:foundations:open-closed}

We now devote substantial effort to showing that the decomposition of $\GL{\rho_i}$ is preserved by
$\brackets{f,g}$.

\begin{lemma}
  If $A : \GL{\rho_0}$ is contained in $\SH{\XTop_0}$, then $\brackets{f,g}(A) : \SH{\XTop_1}$.
\end{lemma}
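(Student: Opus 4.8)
The plan is to recall that being "contained in $\SH{\XTop_0}$" means, concretely, that $A$ is a local object for the open subtopos inclusion — equivalently, the canonical map $A \to A^{\Prop_0}$ is an isomorphism, i.e. $A$ is $\Prop_0$-modal. So I would phrase the goal as: if $A \cong A^{\Prop_0}$ in $\GL{\rho_0}$, then $\brackets{f,g}(A) \cong \brackets{f,g}(A)^{\Prop_1}$ in $\GL{\rho_1}$. Since $\brackets{f,g}$ is the inverse image of a geometric morphism (it preserves finite limits and all colimits by \cref{thm:foundations:induced}), it preserves the terminal object and hence subterminals; the first step is therefore to check that $\brackets{f,g}$ sends $\Prop_0$ to (a subobject isomorphic to) $\Prop_1$. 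This follows from the explicit description in the proof of \cref{thm:foundations:induced}: $\Prop_0$, viewed as an object $(\ObjTerm{}, \ObjTerm{}, \mathrm{id})$ glued trivially on the $\XTop_0$ side, is carried by $\Inv{\brackets{f,g}}$ — which acts as $\Inv{g}$ on the "codomain" component and $\Inv{f}$ on the "domain" component, both of which preserve terminal objects — to the analogous trivially-glued subterminal in $\GL{\rho_1}$, which is exactly $\Prop_1$.

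Given that, the main step is a formal manipulation: $\brackets{f,g}$ preserves the exponential $A^{\Prop_0}$ by $\Prop_1$ in the relevant sense. Here one cannot expect $\brackets{f,g}$ to preserve arbitrary exponentials — it is only lex, not cartesian closed — but exponentiation by a subterminal is special. Concretely, $A^{\Prop_0}$ is the partial-map classifier restricted along $\Prop_0$, and it can be built as a finite limit / small-colimit construction that any lex cocontinuous functor respects; alternatively, $- ^{\Prop_0}$ is the reflector onto the open subtopos, and reflectors onto open subtopoi are stable under inverse image of geometric morphisms. I would use whichever of these the paper has set up. Thus $\brackets{f,g}(A^{\Prop_0}) \cong \brackets{f,g}(A)^{\brackets{f,g}(\Prop_0)} \cong \brackets{f,g}(A)^{\Prop_1}$, and applying $\brackets{f,g}$ to the iso $A \xrightarrow{\sim} A^{\Prop_0}$ yields $\brackets{f,g}(A) \xrightarrow{\sim} \brackets{f,g}(A)^{\Prop_1}$, which is precisely the statement that $\brackets{f,g}(A)$ lies in $\SH{\XTop_1}$.

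Alternatively — and this is probably the cleanest route given the diagram in \cref{thm:foundations:induced} — I would argue directly from the compatibility of $\brackets{f,g}$ with the open immersions: the square with $\XTop_1 \hookrightarrow \GL{\rho_1}$ on top, $\XTop_0 \hookrightarrow \GL{\rho_0}$ on the bottom, $\Inv{f}$ on the left and $\brackets{f,g}$ on the right commutes (this commutativity of the left face is part of the data produced in \cref{thm:foundations:induced}). An object of $\GL{\rho_0}$ lies in $\XTop_0$ iff it is in the essential image of the open immersion $\XTop_0 \hookrightarrow \GL{\rho_0}$; so if $A$ comes from some $A' : \SH{\XTop_0}$, then $\brackets{f,g}(A)$ comes from $\Inv{f}(A') : \SH{\XTop_1}$ by commutativity of that face, and hence lies in $\XTop_1$.

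The main obstacle I anticipate is purely bookkeeping: pinning down which precise characterization of "$A$ is contained in $\SH{\XTop_0}$" the paper is using (essential image of the open immersion versus $\Prop$-modality versus being $\Prop$-sheaf) and then citing the matching preservation property of $\brackets{f,g}$ already established — commutativity of the left-hand face of the cube, or preservation of the open reflector. No genuinely new computation is needed; the content is entirely that $\brackets{f,g}$ was built in \cref{thm:foundations:induced} precisely so as to commute with the open immersions, so the lemma is a direct corollary of that square.
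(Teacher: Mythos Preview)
Your second approach is essentially what the paper does, only the paper is even more explicit: it characterizes an object of the open subtopos concretely as one of the form $A \cong (\rho_0(E) \xrightarrow{\mathrm{id}} \rho_0(E))$, and then applies the explicit formula for $\Inv{\brackets{f,g}}$ from the proof of \cref{thm:foundations:induced} (send $F \to \rho_0(E)$ to $\Inv{g}(F) \to \rho_1(\Inv{f}(E))$, correcting by $\alpha$) to see that the result is again an identity map on something of the form $\rho_1(-)$, hence lies in the open subtopos of $\GL{\rho_1}$. Your diagram-chasing formulation via the commuting face with the open immersions is the same argument one level of abstraction up.

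Your first approach, however, has a genuine gap. The step ``exponentiation by a subterminal \ldots\ can be built as a finite limit / small-colimit construction that any lex cocontinuous functor respects'' is not correct: $(-)^{\Prop_0}$ is a right adjoint, and there is no general reason a lex cocontinuous functor should preserve it. The paper makes exactly this point in the remark following \cref{cor:foundations:preserves-open-closed}: preservation of $(-)^{\Prop_0}$ is ``a special fact of $\Prop_0$'' rather than a formal consequence of lex cocontinuity, and it is established only after (and partly using) the present lemma and its closed-subtopos companion. So if you want to run the $\Prop$-modality argument, you would have to supply the computation showing $\Inv{\brackets{f,g}}((-)^{\Prop_0}) \cong (\Inv{\brackets{f,g}}(-))^{\Prop_1}$ directly, at which point you are essentially redoing the explicit calculation the paper performs anyway. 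Stick with your second route.
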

\begin{proof}
  Observe that $A$ is in the open subtopos $\SH{\XTop_0}$ if $A \cong \Mor{\rho_0(E)}{\rho_0(E)}$,
  in which case $\brackets{f,g}(A) \cong \Mor{\rho_1(f(E))}{\rho_1(f(E))}$ which then lies in the
  open subtopos.
\end{proof}

\begin{lemma}
  If $A : \SH{\GL{\rho_0}}$ is contained in $\SH{\YTop_0}$, then $\brackets{f,g}(A) : \SH{\YTop_1}$.
\end{lemma}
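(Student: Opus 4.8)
The plan is to mirror the proof of the preceding lemma, exploiting the characterization of the closed subtopos $\SH{\YTop_0}$ as the essential image of the inverse image functor $\EmbMor{\YTop_0}{\GL{\rho_0}}$. Concretely, $A : \SH{\GL{\rho_0}}$ lies in $\SH{\YTop_0}$ precisely when $A \cong B \Join \Prop_0$ for some $B : \SH{\YTop_0}$, equivalently when $A \cong \pushoutexpr$ where I abbreviate by recalling that $B \Join \Prop_0$ is the pushout $B \Pushout{B \times \Prop_0} \Prop_0$. So the first step is to unfold this condition and reduce the goal to showing that $\brackets{f,g}$ carries such a pushout to an object of the form $B' \Join \Prop_1$.

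\begin{proof}
  Recall that $A$ is in the closed subtopos $\SH{\YTop_0}$ exactly when $A \cong B \Join \Prop_0$
  for some $B : \SH{\YTop_0}$, that is, $A$ is the pushout $B \Pushout{B \times \Prop_0} \Prop_0$.
  Since $\Inv{\brackets{f,g}}$ preserves finite limits and colimits by \cref{thm:foundations:induced}
  (both being computed pointwise), it preserves this pushout. Thus
  \[
    \brackets{f,g}(A)
    \;\cong\;
    \brackets{f,g}(B) \Pushout{\brackets{f,g}(B) \times \brackets{f,g}(\Prop_0)} \brackets{f,g}(\Prop_0).
  \]
  It therefore suffices to observe two things: first, $\brackets{f,g}(\Prop_0) \cong \Prop_1$,
  since $\Prop_i$ is the subterminal object whose open complement is $\XTop_i$, and
  $\brackets{f,g}$ sends the $\XTop_0$-component of the terminal object to the $\XTop_1$-component by
  the previous lemma (equivalently, one checks directly on the explicit description of
  $\Inv{\brackets{f,g}}$ that the map $\Mor{\rho_0(1)}{\rho_0(1)}$ representing $\Prop_0$ is sent to
  $\Mor{\rho_1(1)}{\rho_1(1)}$, which represents $\Prop_1$, using that $\Inv{f}$ and $\Inv{g}$ are
  lex). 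Second, $\brackets{f,g}(B) : \SH{\YTop_1}$ whenever $B : \SH{\YTop_0}$: an object of
  $\SH{\GL{\rho_0}}$ lies in $\SH{\YTop_0}$ iff it is already of the form $B \Join \Prop_0$, and we
  may take $B$ itself to be the image under $\Inv{g}$ of the corresponding object of $\SH{\YTop_0}$,
  so this reduces to the compatibility $\brackets{f,g} \circ (-\Join\Prop_0) \cong (-\Join\Prop_1)
  \circ \Inv{g}$, which again follows by commuting the lex, colimit-preserving functor
  $\Inv{\brackets{f,g}}$ past the pushout defining $-\Join\Prop_i$ together with
  $\brackets{f,g}(\Prop_0) \cong \Prop_1$. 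Combining, $\brackets{f,g}(A) \cong \brackets{f,g}(B)
  \Join \Prop_1$, which lies in $\SH{\YTop_1}$.
\end{proof}

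The only genuine subtlety, and the step I would take the most care over, is the identification
$\brackets{f,g}(\Prop_0) \cong \Prop_1$; everything else is bookkeeping about commuting a lex
colimit-preserving functor past a specific finite colimit. For that step I would fall back on the
explicit formula for $\Inv{\brackets{f,g}}$ from the proof of \cref{thm:foundations:induced}: it
sends $\Mor{F}{\rho_0(E)}$ to $\Mor{\Inv{g}(F)}{\rho_1(\Inv{f}(E))}$, and $\Prop_0$ is the object
$\Mor[\id]{\rho_0(1)}{\rho_0(1)}$; applying the formula and using that $\Inv{f}$, $\Inv{g}$
preserve the terminal object yields $\Mor[\id]{\rho_1(1)}{\rho_1(1)} = \Prop_1$, with the 2-cell
$\alpha$ contributing only an invertible correction.
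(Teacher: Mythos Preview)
Your overall strategy is valid, but it differs substantially from the paper's and contains one concrete error. The paper proceeds far more directly: an object of $\GL{\rho_0}$ lies in the closed subtopos precisely when it has the form $\Mor{F}{\rho_0(\ObjTerm{})}$, i.e.\ when its $\XTop_0$-component is terminal. Applying the explicit formula for $\Inv{\brackets{f,g}}$ (send $\Mor{F}{\rho_0(E)}$ to $\Mor{\Inv{g}(F)}{\rho_1(\Inv{f}(E))}$) and using that $\Inv{f}$ preserves the terminal object immediately yields an object of the same shape in $\GL{\rho_1}$. No pushouts, no $\Prop$, no $\Join$ are needed; these enter only in the subsequent corollary.

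Your route via $-\Join\Prop_0$ anticipates that corollary and is not wrong in spirit, but two points deserve correction. First, your explicit description of $\Prop_0$ is incorrect: you write that $\Prop_0$ is $\Mor[\id]{\rho_0(1)}{\rho_0(1)}$, but since $\rho_0$ is lex this is the terminal object of $\GL{\rho_0}$, not the distinguished subterminal. The paper records $\Prop_0 = \Mor{\ObjInit{\SH{\YTop_0}}}{\rho_0(\ObjTerm{\SH{\XTop_0}})}$, and the computation $\Inv{\brackets{f,g}}(\Prop_0)\cong\Prop_1$ then uses that $\Inv{g}$ preserves the initial object (as a left adjoint) and $\Inv{f}$ preserves the terminal. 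Second, your ``Second'' step is both unnecessary and circular as stated: claiming $\brackets{f,g}(B) : \SH{\YTop_1}$ for $B : \SH{\YTop_0}$ is exactly the lemma under proof. You do not need it: once $\brackets{f,g}(A) \cong \brackets{f,g}(B)\Join\Prop_1$, you are done, since $X\Join\Prop_1$ lies in the closed subtopos for \emph{any} $X$.
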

\begin{proof}
  Again, observe that $A : \SH{\YTop_0}$ if $A \cong \Mor{F}{\rho_0(\ObjTerm{\SH{\XTop_0}})}$, in which
  case
  \[
    \Inv{\brackets{f,g}}(A)
    \cong \Mor{g(F)}{\rho_1(f(\ObjTerm{\SH{\XTop_0}}))}
    \cong \Mor{g(F)}{\rho_1(\ObjTerm{\SH{\XTop_1}})}
  \]
  This then lies in the closed subtopos of $\GL{\rho_1}$.
\end{proof}

In fact, a more general theorem is true. Reflecting an object into either subtopos and applying
$\brackets{f,g}$ is (naturally) isomorphic to applying $\brackets{f,g}$ and then reflecting.
\begin{lemma}
  There is an isomorphism $\Mor{\Prop_1}{\brackets{f,g}(\Prop_0)}$.
\end{lemma}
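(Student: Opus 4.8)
The plan is to compute both sides explicitly in the comma‑category presentation of the two glued topoi and check that they agree on the nose. Recall that, writing $\GL{\rho_i} = \COMMA{\ArrId{\SH{\YTop_i}}}{\rho_i}$, the subterminal $\Prop_i$ is the object whose $\SH{\YTop_i}$‑component is $\ObjInit{\SH{\YTop_i}}$, whose $\SH{\XTop_i}$‑component is $\ObjTerm{\SH{\XTop_i}}$, and whose structure map $u_i : \ObjInit{\SH{\YTop_i}} \to \rho_i(\ObjTerm{\SH{\XTop_i}})$ is the unique morphism out of the initial object; equivalently, $\Prop_i$ is the closed reflection $\ObjInit{\GL{\rho_i}} \Join \Prop_i$ of the initial object. (One computes $(F \to \rho_i(E)) \Join \Prop_i \cong (F \to \rho_i(\ObjTerm{}))$ and $(F \to \rho_i(E))^{\Prop_i} \cong (\rho_i(E) \to \rho_i(E))$, which is exactly why these subtopoi are $\SH{\YTop_i}$ and $\SH{\XTop_i}$.)

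First I would feed this presentation into the explicit description of $\Inv{\brackets{f,g}}$ from \cref{thm:foundations:induced}: it sends $(F \to \rho_0(E))$ to the $\alpha$‑correction of $(\Inv{g}(F) \to \rho_1(\Inv{f}(E)))$. Taking $F = \ObjInit{\SH{\YTop_0}}$ and $E = \ObjTerm{\SH{\XTop_0}}$, the $\SH{\YTop_1}$‑component becomes $\Inv{g}(\ObjInit{\SH{\YTop_0}})$ and the $\SH{\XTop_1}$‑component becomes $\Inv{f}(\ObjTerm{\SH{\XTop_0}})$. Since $\Inv{g}$ is the inverse image of a geometric morphism it preserves all colimits, so $\Inv{g}(\ObjInit{\SH{\YTop_0}}) \cong \ObjInit{\SH{\YTop_1}}$; since $\Inv{f}$ is left exact it preserves the terminal object, so $\Inv{f}(\ObjTerm{\SH{\XTop_0}}) \cong \ObjTerm{\SH{\XTop_1}}$. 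The structure map is then the unique morphism out of $\ObjInit{\SH{\YTop_1}}$; in particular the $\alpha$‑correction plays no role here, as any two maps out of an initial object coincide. Hence $\Inv{\brackets{f,g}}(\Prop_0)$ is isomorphic, as an object of $\GL{\rho_1}$, to $(\ObjInit{\SH{\YTop_1}} \to \rho_1(\ObjTerm{\SH{\XTop_1}})) = \Prop_1$.

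To promote this to the asserted morphism $\Mor{\Prop_1}{\brackets{f,g}(\Prop_0)}$, note that $\Inv{\brackets{f,g}}$ is left exact, so it preserves the terminal object and monomorphisms; applying it to $\EmbMor{\Prop_0}{\ObjTerm{\GL{\rho_0}}}$ exhibits $\brackets{f,g}(\Prop_0)$ as a subterminal object of $\GL{\rho_1}$. As both $\Prop_1$ and $\brackets{f,g}(\Prop_0)$ are subterminal, there is at most one morphism between them in either direction, and the computation above shows the canonical comparison is invertible. The argument is essentially bookkeeping; I expect the only (minor) point requiring care to be keeping the structure maps and the 2‑cell $\alpha$ straight, and ensuring the isomorphism is phrased canonically enough to feed the subsequent, more general statement that $\brackets{f,g}$ commutes with the open and closed reflections — but since everything in sight is forced by the universal properties of $\ObjInit{}$ and $\ObjTerm{}$, this canonicity is automatic.
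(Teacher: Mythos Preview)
Your proof is correct and follows exactly the paper's approach: unfold the explicit presentations of $\Prop_i$ and of $\Inv{\brackets{f,g}}$, then use that $\Inv{g}$ preserves the initial object and $\Inv{f}$ preserves the terminal object. The paper's proof is simply a terser version of what you wrote.
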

\begin{proof}
  Unfolding definitions:
  \begin{align*}
    \Prop_0 &= \Mor{\ObjInit{\SH{\YTop_0}}}{\rho_0(\ObjTerm{\SH{\XTop_0}})}\\
    \Prop_1 &= \Mor{\ObjInit{\SH{\YTop_1}}}{\rho_1(\ObjTerm{\SH{\XTop_1}})}
  \end{align*}
  The result is immediate by computation.
\end{proof}

\begin{corollary}
  \label{cor:foundations:preserves-open-closed}
  \leavevmode
  \begin{enumerate}
  \item $\brackets{f,g}(\Prop_0) \cong \Prop_1$
  \item $\brackets{f,g}\parens{-^{\Prop_0}} \cong \parens{\brackets{f,g}\parens{-}}^{\Prop_1}$
  \item $\brackets{f,g}(\Prop_0 \Join{} -) \cong \Prop_1 \Join{} \brackets{f,g}\parens{-}$
  \end{enumerate}
\end{corollary}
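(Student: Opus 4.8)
The plan is to deduce all three items from the two preceding lemmas together with naturality. The first point is literally the previous lemma, so nothing more is needed beyond recording it. For the second point, I would recall that $-^{\Prop_0}$ is the inverse image of the open inclusion $\EmbMor{\XTop_0}{\GL{\rho_0}}$, so it can be computed as the composite of reflecting $\GL{\rho_0}$ onto its open subtopos $\SH{\XTop_0}$ and then including back. The key input is that $\brackets{f,g}$ restricts to $f$ on the open parts (this is exactly what the diagram in \cref{thm:foundations:induced} asserts, with the open immersions $\XTop_i \to \GL{\rho_i}$ commuting with $\brackets{f,g}$ and $f$). Thus $\brackets{f,g}\parens{A^{\Prop_0}}$, obtained by first mapping into the open subtopos and then along $\brackets{f,g}$, agrees with first applying $\brackets{f,g}$ and then reflecting into the open subtopos of $\GL{\rho_1}$, which is $\parens{\brackets{f,g}(A)}^{\Prop_1}$; the isomorphism is natural because all the functors involved are.

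For the third point I would argue dually, using the concrete description $A \Join \Prop_0 = A \Pushout{A \times \Prop_0} \Prop_0$. Since $\brackets{f,g}$ has a left adjoint $\Inv{\brackets{f,g}}$ only in the essential case, but the inverse image functors in question are the ones that matter, I would instead observe that $\Inv{\brackets{f,g}}$ is the relevant left-exact cocontinuous functor: it preserves the pushout and the product, and it sends $\Prop_0$ to $\Prop_1$ by point (1), so it sends $A \Join \Prop_0$ to $\Inv{\brackets{f,g}}(A) \Join \Prop_1$. Concretely, from the explicit formula for $\Inv{\brackets{f,g}}$ (sending $\Mor{F}{\rho_0(E)}$ to $\Mor{\Inv{g}(F)}{\rho_1(\Inv{f}(E))}$, corrected by $\alpha$) and the fact that $\Inv{f},\Inv{g}$ are lex and cocontinuous, the pushout $A \Pushout{A \times \Prop_0} \Prop_0$ is carried to $\Inv{\brackets{f,g}}(A) \Pushout{\Inv{\brackets{f,g}}(A) \times \Prop_1} \Prop_1$, which is by definition $\Prop_1 \Join \Inv{\brackets{f,g}}(A)$.

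The main obstacle I anticipate is purely notational bookkeeping: the excerpt uses $\brackets{f,g}$ both for the geometric morphism and (by abuse) for its inverse image, and the $\Join$-formula is naturally expressed in terms of the inverse image functor, so I must be careful that the cocontinuity and left-exactness I invoke are genuinely available for the functor I am applying. Once that is pinned down, each clause is a one-line consequence of preservation of finite limits and colimits together with point (1); there is no real mathematical content beyond the two lemmas already proven, which is presumably why the authors phrase it as a corollary.
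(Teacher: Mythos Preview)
Your proposal is correct and aligns with the paper's approach. Point (1) is indeed the preceding lemma; point (3) is exactly what the paper's remark spells out (lex and cocontinuous preserves the finite-limit-and-pushout description of $\Join$); and for point (2) you supply the argument the paper only gestures at when it says exponentiation by $\Prop_0$ is a ``special fact''---namely that $-^{\Prop_0}$ is the open reflection--inclusion composite, and $\Inv{\brackets{f,g}}$ commutes with both pieces by the two lemmas just before the corollary. Your caution about the notational overloading of $\brackets{f,g}$ versus $\Inv{\brackets{f,g}}$ is well placed and you resolve it correctly.
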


\begin{remark}
  The last fact of \cref{cor:foundations:preserves-open-closed} is a consequence of the first;
  $\Join$ is preserved by $\brackets{f,g}$ because as this functor is lex and cocontinuous and
  $\Join$ is defined by finite products and pushouts. Exponentiation, however, is not generally
  preserved by $\brackets{f,g}$. It is a special fact of $\Prop_0$ that exponentiation by $\Prop_0$
  is preserved.
\end{remark}

As idempotent lex modalities, both of these actions internalize into the internal type theory of the
topoi. In particular, they have actions on a family given by applying the operation to the entire
family, then pulling back along the unit. This is automatically preserved by $\Inv{\brackets{f,g}}$,
as $\Inv{\brackets{f,g}}$ preserves each step in this construction. We write $\Open$ for the action
of $-^{\Prop_0}$ on families, and $\Closed$ for $- \Join \Prop_0$. We will abusively use the same
notation for $-^{\Prop_0}$ and $-^{\Prop_1}$ as well as $- \Join \Prop_0$ and $- \Join \Prop_1$.

\begin{theorem}
  Both $\Open$ and $\Closed$ are preserved by $\Inv{\brackets{f,g}}$
\end{theorem}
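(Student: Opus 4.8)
The plan is to reduce the statement about the family-level operations $\Open$ and $\Closed$ to the already-established facts about the corresponding topos-level operations $-^{\Prop_0}$ and $-\Join\Prop_0$, exploiting that $\Inv{\brackets{f,g}}$ is lex and cocontinuous. Recall that, as a lex idempotent modality, $\Open$ acts on a family $\Mor[p]{X}{\Gamma}$ by first applying the modality to the total space and base, $\Mor{X^{\Prop_0}}{\Gamma^{\Prop_0}}$, and then pulling back along the unit $\Mor[\eta_\Gamma]{\Gamma}{\Gamma^{\Prop_0}}$; similarly $\Closed$ acts via $\Mor{X\Join\Prop_0}{\Gamma\Join\Prop_0}$ pulled back along $\Mor{\Gamma}{\Gamma\Join\Prop_0}$. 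So in both cases the construction is built from: applying the endofunctor $-^{\Prop_0}$ (resp.\ $-\Join\Prop_0$) to two objects, a map between them, the naturality unit, and a pullback square.

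First I would invoke \cref{cor:foundations:preserves-open-closed}, which gives natural isomorphisms $\brackets{f,g}(-^{\Prop_0})\cong(\brackets{f,g}(-))^{\Prop_1}$ and $\brackets{f,g}(\Prop_0\Join-)\cong\Prop_1\Join\brackets{f,g}(-)$. Next I would check that these isomorphisms are compatible with the units: the unit $\Mor{A}{A^{\Prop_0}}$ is sent by $\Inv{\brackets{f,g}}$ to a map $\Mor{\brackets{f,g}(A)}{\brackets{f,g}(A^{\Prop_0})}\cong(\brackets{f,g}(A))^{\Prop_1}$, and because $\brackets{f,g}$ is a morphism of topoi (hence preserves the subterminal $\Prop$ and the geometric structure determining the modality) this composite is the $\Prop_1$-unit of $\brackets{f,g}(A)$; the same argument applies to the $\Closed$ unit via the pushout presentation $A\Join\Prop = A\Pushout{A\times\Prop}\Prop$. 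Finally, since $\Inv{\brackets{f,g}}$ preserves pullbacks (it is lex), applying it to the pullback square defining $\Open p$ (resp.\ $\Closed p$) yields the pullback square defining $\Open$ (resp.\ $\Closed$) of $\brackets{f,g}(p)$, up to the already-produced coherent isomorphisms; pasting everything together gives the desired natural isomorphism on families.

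The main obstacle I anticipate is bookkeeping rather than conceptual: one must verify that the isomorphisms of \cref{cor:foundations:preserves-open-closed} are \emph{coherent with the units and the idempotency structure}, not merely present objectwise, so that the pullback squares actually match up on the nose after transporting along them. This is where the hypothesis that $\brackets{f,g}$ comes from a genuine morphism of topoi (as opposed to an arbitrary lex cocontinuous functor) is essential — it is exactly what guarantees that the subterminal $\Prop_0$ and its associated open/closed factorization are transported compatibly, as was already exploited in the proof that $\brackets{f,g}(\Prop_0)\cong\Prop_1$. Once that compatibility is in hand, the remaining verifications are a routine diagram chase of the kind already carried out in \cref{thm:foundations:essential}.
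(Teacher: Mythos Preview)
Your proposal is correct and matches the paper's approach: the paper gives no proof block for this theorem because the argument is contained in the preceding paragraph, which observes that the family-level action of each modality is obtained by applying the object-level endofunctor and then pulling back along the unit, and that $\Inv{\brackets{f,g}}$ preserves both steps (the first by \cref{cor:foundations:preserves-open-closed}, the second by lexness). Your elaboration of the unit-compatibility check is exactly the bookkeeping the paper suppresses.
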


We would like to show a slightly stronger result, namely that the dependent versions of these
modalities are preserved~\parencite{rijke:2020}. Prior to this, we must address the
question of universes and models of type theory in this situation.

As a morphism of logoi, we know that $\brackets{f,g}$ is a cocontinuous left exact functor. We would
like to show that it induces a dependent right adjoint~\parencite{birkedal:2020}, but this is
complicated by the murky definition of a model of dependent type theory with universes in an
arbitrary logos. Accordingly, prove a theorem which is sufficient for the case we have in mind.

\begin{definition}
  A functor $\Mor[F]{\PSH{\CC}}{\PSH{\DD}}$ between presheaf categories is said to preserve a
  Grothendieck universe $\UU$ when it sends $\UU$-small families in $\PSH{\CC}$ to $\UU$-small
  families in $\PSH{\DD}$.
\end{definition}

\begin{theorem}
  A right adjoint $\Mor[F]{\PSH{\CC}}{\PSH{\DD}}$ which preserves Grothendieck universes larger
  than $\verts{\CC}$ and $\verts{\DD}$ induces a dependent right adjoint.
\end{theorem}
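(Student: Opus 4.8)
The plan is to exhibit a dependent right adjoint (DRA) structure on $F$ in the sense of Birkedal et al. by unwinding what such a structure amounts to in a presheaf category and checking that a universe-preserving right adjoint supplies exactly this data. Recall that a DRA relative to a model of type theory is given by: (i) an action on types, sending a type $A$ in context $\Gamma$ (in the domain) to a type $F(A)$ in context $F(\Gamma)$ (in the codomain); (ii) an action on terms, sending $M : A$ to $F(M) : F(A)$; and (iii) these actions are natural and stable under substitution, with the term action a bijection $\Hom{F(\Gamma)}{F(A)} \cong \Hom{\Gamma}{A}$ (or, more precisely, the universal property packaging $F$ as right adjoint to the context operation that is its left adjoint). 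Since a model of type theory in $\PSH{\CC}$ with universe $\UU$ is carried by the representing map $\Mor[\El{}]{\EL{}}{\TY{}}$ classifying $\UU$-small maps, the plan is: first, transport the universe $\Mor[\El{}]{\EL{\CC}}{\TY{\CC}}$ in $\PSH{\CC}$ along $F$ to obtain a map $\Mor{F(\EL{\CC})}{F(\TY{\CC})}$ in $\PSH{\DD}$; second, observe that because $F$ preserves $\UU$-small families and $F$ is a right adjoint (hence preserves the pullbacks along which smallness is tested), this transported map is again classified by the universe $\Mor{\EL{\DD}}{\TY{\DD}}$ of $\PSH{\DD}$, yielding a classifying square; third, read off from this square the type- and term-actions of the DRA, and check the required naturality and substitution-stability from the fact that $F$ is a functor preserving the relevant limits.

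\smallskip

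More concretely, I would proceed as follows. Let $\LKan{F}$ denote the left adjoint of $F$ (which exists and acts on the context categories). Given a context $\Gamma : \PSH{\DD}$ and a type $A$ over $\LKan{F}(\Gamma)$, i.e.\ a classifying map $\Mor{\LKan{F}(\Gamma)}{\TY{\CC}}$, its transpose is a map $\Mor{\Gamma}{F(\TY{\CC})}$; composing with the classifying map $\Mor{F(\TY{\CC})}{\TY{\DD}}$ obtained in step two gives a type $F(A)$ over $\Gamma$. The key point is that the fiber of $F(A)$ at a generalized element of $\Gamma$ is $F$ applied to the fiber of $A$, because $F$ preserves the pullback defining the fiber and the universe-preservation hypothesis guarantees $F(\EL{\CC})$ computes the correct total space. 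The term action is then the adjunction bijection $\Hom[\PSH{\DD}]{\Gamma}{F(\El{\CC} \text{-fiber})} \cong \Hom[\PSH{\CC}]{\LKan{F}(\Gamma)}{\El{\CC}\text{-fiber}}$, which is exactly the DRA transposition. Naturality in $\Gamma$ and stability under substitution are immediate from naturality of the adjunction unit/counit and functoriality of $F$.

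\smallskip

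\textbf{The main obstacle} I anticipate is the coherence of the classifying square in step two: a priori $F$ applied to a $\UU$-small map in $\PSH{\CC}$ is merely \emph{some} $\UU$-small map in $\PSH{\DD}$, and one must produce a \emph{chosen}, pullback-stable classifying square $\Mor{F(\EL{\CC})}{F(\TY{\CC})} \to \Mor{\EL{\DD}}{\TY{\DD}}$ rather than a mere existence statement — otherwise the DRA actions will not be natural on the nose. The resolution is that the universe of $\PSH{\DD}$ is, by the usual Hofmann--Streicher construction, the \emph{generic} $\UU$-small family, so any $\UU$-small map admits a canonical classifying square, and these canonical choices are automatically stable under pullback (and hence under the reindexing coming from $F$'s action on contexts). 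The hypothesis that the preserved universe is \emph{larger than both} $\verts{\CC}$ and $\verts{\DD}$ is exactly what guarantees both $\PSH{\CC}$ and $\PSH{\DD}$ carry such Hofmann--Streicher universes classifying $\UU$-small families, so that the construction makes sense on both sides. Once this coherence is in hand, verifying the DRA equations is routine bookkeeping with adjoint transposes, analogous to the transposition calculations already carried out in the proof of \cref{thm:foundations:essential}.
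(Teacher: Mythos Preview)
Your proposal is correct and follows essentially the same approach as the paper: the core step in both is to observe that $F(\El{\CC})$ is a $\UU$-small map (by the universe-preservation hypothesis), hence is classified by the generic $\UU$-small map $\El{\DD}$ via a pullback square, from which the weak CwF morphism/DRA structure is read off. The paper is terser—it simply notes the pullback square and invokes that context extension is pullback—whereas you unwind the transposition via the left adjoint explicitly and linger on the coherence of the classifying square, but this is a difference of exposition rather than strategy.
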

\begin{remark}
  To be precise, we must specify the particular models of dependent type theory we consider in
  $\PSH{\CC}$ and $\PSH{\DD}$. We take the model of type theory which a type to be an element of the
  universe of small presheaves~\parencite{hofmann-streicher:1997}.
\end{remark}
\begin{proof}
  Fix a Grothendieck universe $\VV$ which is large enough to contain both $\CC$ and $\DD$. We wish
  to show that $F$ induces a weak CwF morphism. A type in $\PSH{\CC}$ is given by a generalized
  element of $\TY{\CC}$, the Hofmann-Streicher universe of $\VV$-small presheaves, so it suffices to
  show that there is a pullback square of the following shape:
  \[
    \DiagramSquare{
      nw = F(\EL{\CC}),
      nw/style = pullback,
      west = F(\El{\CC}),
      ne = \EL{\DD},
      sw = F(\TY{\CC}),
      se = \TY{\DD},
      east = \El{\DD},
    }
  \]
  Proving this is equivalent to showing that $F(\El{\CC})$ is $\UU$-small because $\El{\DD}$ is
  generic for such maps. This immediately defines a weak CwF morphism using the observation that
  context extension in these models is defined by pullback.
\end{proof}

\begin{theorem}
  A cocontinuous lex functor $\Mor[F]{\PSH{\CC}}{\PSH{\DD}}$ preserves a Grothendieck universe $\UU$
  if and only if it restricts to a functor $\Mor{\PSH[\UU]{\CC}}{\PSH[\UU]{\DD}}$.
\end{theorem}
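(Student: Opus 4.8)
The plan is to prove both directions of the equivalence directly, unwinding the definitions of universe preservation and of the restriction $\PSH[\UU]{\CC} \to \PSH[\UU]{\DD}$. Recall that $\PSH[\UU]{\CC}$ is the full subcategory of presheaves taking values in $\UU$-small sets, and that a family $\Mor[p]{X}{Y}$ in $\PSH{\CC}$ is $\UU$-small precisely when all its fibers $X_y$ over generalized elements $y$ are $\UU$-small --- equivalently, when $p$ is classified by a map $Y \to \TY{\CC}$ into the Hofmann--Streicher universe of $\UU$-small presheaves. So the statement to prove is: $F$ sends $\UU$-small families to $\UU$-small families iff $F$ restricts to the $\UU$-small subcategories.

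For the forward direction, I would first observe that an object $X : \PSH{\CC}$ is $\UU$-small iff the unique map $\Mor{X}{\ObjTerm{}}$ is a $\UU$-small family. Since $F$ is lex it preserves the terminal object, so $F(\Mor{X}{\ObjTerm{}}) \cong \Mor{F(X)}{\ObjTerm{}}$; hence if $F$ preserves $\UU$-small families it sends $\UU$-small objects to $\UU$-small objects, which is exactly the claim that $F$ restricts. For the reverse direction, suppose $F$ restricts to $\PSH[\UU]{-}$ and let $\Mor[p]{X}{Y}$ be a $\UU$-small family; I want $F(p)$ to be $\UU$-small. The key move is that $\UU$-smallness of a family is detected fiberwise, and every fiber can be extracted as a pullback along a generalized element, i.e. along a map out of a representable $\Yo{c}$. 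Concretely, for each $c : \CC$ and each $y \in Y(c)$, the fiber $X_y := \Yo{c} \times_Y X$ is a $\UU$-small presheaf; applying $F$, which is lex and hence preserves this pullback, gives $F(X_y) \cong F(\Yo{c}) \times_{F(Y)} F(X)$. By hypothesis $F(X_y)$ is $\UU$-small. Then I use the fact that $F$ is \emph{cocontinuous}: every presheaf $F(Y)$ is a colimit of representables $F(\Yo{c})$ (the image under $F$ of the canonical colimit presentation of $Y$), and since colimits in $\PSH{\DD}$ are computed pointwise, the fibers of $F(p)$ over arbitrary generalized elements of $F(Y)$ are built by pullback from the $F(X_y)$, hence are $\UU$-small. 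Therefore $F(p)$ is $\UU$-small.

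The main obstacle I anticipate is the reverse direction's colimit bookkeeping: one must be careful that "$F(Y)$ is a colimit of the $F(\Yo c)$" is used correctly --- it is the image of the colimit $Y \cong \colim_{\Yo c \to Y} \Yo c$ under the cocontinuous $F$ --- and that pullback along an arbitrary element of $F(Y)$ factors through one of these colimit components in a way that exhibits the fiber as $\UU$-small. The cleanest way to organize this is to note that the $\UU$-small presheaves on $\DD$ are closed under $\UU$-small colimits and under finite limits inside $\PSH{\DD}$, and that $F(X) \cong \colim F(X_y)$ over the (at most $\UU$-sized) diagram of elements of $Y$, with the map to $F(Y) \cong \colim F(\Yo c)$ respecting this; $\UU$-smallness of each fiber of $F(p)$ then follows because the fiber over any element sits inside the $\UU$-small colimit $F(X)$ as a summand of one of the $F(X_y)$. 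Alternatively, and perhaps more slickly, I would invoke the classifying map: $p$ corresponds to $\Mor{Y}{\TY{\CC}}$, so $X$ is the pullback $Y \times_{\TY{\CC}} \EL{\CC}$; then $F(X) \cong F(Y) \times_{F(\TY{\CC})} F(\EL{\CC})$ by lex-ness, and it suffices that $F(\El{\CC})$ itself be $\UU$-small, which — since $\El{\CC}$ is the "universal $\UU$-small family" over the $\UU$-small presheaf $\TY{\CC}$ — follows from $F$ restricting to $\PSH[\UU]{-}$ together with the fiberwise characterization of smallness applied to the single family $\El{\CC}$.
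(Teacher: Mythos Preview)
Your main argument is correct and is essentially the paper's proof: write $F(Y)$ as the colimit of the $F(\Yo{c})$ (using cocontinuity), observe that any generalized element $\Yo{d} \to F(Y)$ factors through one of the $F(\Yo{c})$ (since colimits of presheaves are pointwise), and then use lex-ness to identify the relevant fiber with $F(X \times_Y \Yo{c})$, which is $\UU$-small by the restriction hypothesis. The paper phrases exactly this, so on the main line you are aligned.

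One warning about your proposed ``slicker'' alternative at the end: it does not go through as stated. The Hofmann--Streicher universe $\TY{\CC}$ and the universal family $\Mor[\El{\CC}]{\EL{\CC}}{\TY{\CC}}$ are \emph{not} $\UU$-small presheaves (indeed $\TY{\CC}(c)$ is the large set of $\UU$-small presheaves on $\CC/c$), so the hypothesis that $F$ restricts to $\PSH[\UU]{-}$ tells you nothing about $F(\El{\CC})$ directly. Showing that $F(\El{\CC})$ is a $\UU$-small family is exactly the content of the theorem you are trying to prove, so this route is circular. Stick with your first argument.
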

\begin{proof}
  The only if direction is clear, so it remains to show that preserving $\UU$-objects ensures that
  $F$ preserves $\UU$-families. Fix a $\UU$-small family $\Mor[f]{X}{Y} : \PSH{\CC}$.

  First, we observe that $F(Y) = \Colim_i F(\Yo{C_i})$ using the canonical decomposition of $Y$
  into colimits. By Yoneda, a morphism $\Mor{\Yo{D}}{F(Y)}$ then factors through some $F(\Yo{C_i}$),
  followed by $\Mor[F(\YoEm{y_i})]{F(\Yo{C_i})}{F(Y)}$, so it suffices to show that the pullback of
  $F(f)$ to $F(\Yo{C_i})$ is a small family.

  As $F(f)$ and $F(\YoEm{y_i})$ are both in the image of $F$, this pullback can be computed in
  $\PSH{\CC}$. Therefore, we must show that $\Mor{F(X \times_{Y} \Yo{C_i})}{F(\Yo{C_i})}$ is a small
  family. Using the assumption that $f$ was a small family, we observe that $X \times_Y \Yo{C_i}$ is
  a small object, so $F(X \times_Y \Yo{C_i})$ is $\UU$-small. Therefore, this family is a small
  family, completing the proof.
\end{proof}

While these results are limited to presheaf topoi, they are sufficient for our purposes; we are
interested in gluing together presheaf topoi along a continuous functor. The following result shows
that the resultant topos is of presheaf type:

\begin{theorem}[\textcite{sga:4,carboni:1995}]
  Gluing together presheaf topoi along a continuous functor results in a presheaf topos.
\end{theorem}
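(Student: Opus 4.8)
The plan is to invoke the classical recollement of presheaf topoi, reducing the statement to the recognition of $\GL{\rho}$ as a category of presheaves on a \emph{collage}. Write the two presheaf topoi as $\PSH{\CC}$ and $\PSH{\DD}$, and the continuous gluing functor as $\Mor[\rho]{\PSH{\CC}}{\PSH{\DD}}$, so that $\GL{\rho} = \COMMA{\ArrId{\PSH{\DD}}}{\rho}$ has as objects triples $(D, C, \Mor{D}{\rho(C)})$.

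First I would produce a left adjoint for $\rho$. The category $\PSH{\CC}$ is complete, locally small, well-powered, and admits a small cogenerating family --- for instance the cofree presheaves $\{c_{*}(\mathbf{2})\}_{c : \CC}$, where $c_{*}$ is right adjoint to evaluation at $c$ --- so the special adjoint functor theorem applies to the continuous functor $\rho$ and yields $\Mor[\lambda]{\PSH{\DD}}{\PSH{\CC}}$ with $\lambda \Adjoint \rho$. Transposing along this adjunction identifies $\GL{\rho}$ with $\COMMA{\lambda}{\ArrId{\PSH{\CC}}}$, whose objects are triples $(D, C, \Mor{\lambda(D)}{C})$.

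Next I would replace $\lambda$ by profunctor data. Being a left adjoint, $\lambda$ is cocontinuous, hence is the left Kan extension along $\Yo$ of its restriction to representables; concretely $\lambda \cong \widehat{P}$ for the profunctor $\Mor[P]{\CC^{\mathrm{op}} \times \DD}{\mathbf{Set}}$ given by $P(c, d) = \lambda(\Yo{d})(c)$, where $\widehat{P}(Y)(c) = \int^{d : \DD} Y(d) \times P(c, d)$. Let $\EE$ be the collage of $P$: its objects are those of $\CC$ together with those of $\DD$; its hom-sets restrict to those of $\CC$ and of $\DD$, with $\Hom[\EE]{c}{d} = P(c, d)$ and $\Hom[\EE]{d}{c} = \emptyset$, and composition supplied by the two functorial actions of $P$. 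A presheaf on $\EE$ restricts along the evident inclusions $\CC, \DD \hookrightarrow \EE$ to presheaves $X_{\CC}$ on $\CC$ and $X_{\DD}$ on $\DD$, and the one remaining datum --- the action of the morphisms in $P(c, d)$ --- is a family of maps $X_{\DD}(d) \times P(c, d) \to X_{\CC}(c)$, dinatural in $d$ and natural in $c$, which by the universal property of the coend is precisely a morphism $\widehat{P}(X_{\DD}) \to X_{\CC}$ in $\PSH{\CC}$. This assignment is an equivalence $\PSH{\EE} \simeq \COMMA{\widehat{P}}{\ArrId{\PSH{\CC}}} \simeq \COMMA{\lambda}{\ArrId{\PSH{\CC}}} \simeq \GL{\rho}$, exhibiting $\GL{\rho}$ as a presheaf topos.

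The step I expect to require the most care is the final equivalence: verifying that ``restrict and read off the $P$-action'' is fully faithful and essentially surjective, and that the variances in the collage (contravariance of $P$ in $c$, covariance in $d$, and the actions entering composition) are arranged so that $\widehat{P}$ agrees with $\lambda$ on representables, hence everywhere. This is routine but is where a transcription error would hide; in the paper I would therefore simply cite \parencite{sga:4,carboni:1995} --- who in fact establish the sharper statement that preservation of connected limits already suffices --- and record only the reduction sketched above.
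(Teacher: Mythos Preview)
The paper does not supply its own proof of this theorem; it is stated with attribution to SGA~4 and Carboni--Johnstone and used as a black box, with the collage description surfacing only later (in the proof of internal realignment, where the glued topos is presented as presheaves on the collage of $\Hom{-}{\EmbRen(-)}$). Your sketch reconstructs exactly the standard argument from those references---pass to a left adjoint of the continuous functor, transpose the comma category, and identify it with presheaves on the collage of the induced profunctor---so there is nothing further to compare, and your proposal is both correct and aligned with the cited literature and the paper's own downstream use of the result.
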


Motivated by this, we now assume that $\EE_i$ and $\FF_i$ are all presheaf logoi, and all morphisms
involved preserve all Grothendieck universes larger than $\UU$. In this case, we may then pick a
Grothendieck universe $\UU$ large enough to be preserved by both $f$ and $g$ and so the bases of
$\GL{\rho_1}$ and $\GL{\rho_1}$ are small for this universe. In this case, $\brackets{f,g}$ is a DRA
which preserves all universes larger than $\UU$.

We now show that $\brackets{f,g}$ preserves the dependent version of $\Open$ and
$\Closed$.

\begin{theorem}
  $\Inv{\brackets{f,g}}$ preserves $\Dep{\Open}$ and $\Dep{\Closed}$ on any universe preserved by
  $\Inv{\brackets{f,g}}$.
\end{theorem}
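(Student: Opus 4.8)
The plan is to reduce the statement about the dependent modalities $\Dep{\Open}$ and $\Dep{\Closed}$ to the already-established facts that $\Inv{\brackets{f,g}}$ preserves the (non-dependent) modalities $\Open = -^{\Prop_0}$ and $\Closed = - \Join \Prop_0$, that it preserves $\Prop_0 \cong \Prop_1$, and that (by the preceding paragraph) it is a DRA preserving all Grothendieck universes larger than $\UU$. Recall that $\Dep{\Open}$ on a family $\Mor[p]{X}{\Gamma}$ is computed by applying $\Open$ to the total space, then pulling back along the unit $\Gamma \to \Open{\Gamma}$; dually $\Dep{\Closed}$ applies $\Closed$ to the total space and pulls back along $\Gamma \to \Closed{\Gamma}$. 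Since $\Inv{\brackets{f,g}}$ is lex it preserves the pullback in each construction, and since it preserves $\Open$ and $\Closed$ on objects it preserves the outer application; the only genuinely new content is coherence with the universe structure, i.e. that $\Inv{\brackets{f,g}}$ sends the \emph{code} of $\Dep{\Open}$ (a classifying map into the universe $\UU$) to the code of $\Dep{\Open}$ in the target.

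Concretely, I would argue as follows. First, fix a universe $\UU$ preserved by $\Inv{\brackets{f,g}}$, so by the theorem on universe-preserving right adjoints between presheaf topoi, $\Inv{\brackets{f,g}}$ is a weak CwF morphism: it commutes with $\El{}$, $\TY{}$ and context extension up to the canonical comparison isomorphisms. Second, recall that $\Dep{\Open}$ is definable internally purely from the lex idempotent modality $\Open$ and its unit, with no additional choices: on a displayed type it is the composite ``take the total space, apply $\Open$, pull back the unit''. Third, I apply $\Inv{\brackets{f,g}}$ to this definition step by step. It preserves total spaces (context extension), it preserves $\Open$ on the resulting object by \cref{cor:foundations:preserves-open-closed}(2), it preserves the unit map (the unit is natural in the lex-modality structure, which is generated by $\Prop_0$, and $\Inv{\brackets{f,g}}(\Prop_0) \cong \Prop_1$ by \cref{cor:foundations:preserves-open-closed}(1)), and it preserves the final pullback since it is lex. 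Chasing the comparison isomorphisms of the weak CwF morphism through each stage shows that the code produced in $\GL{\rho_1}$ agrees, up to canonical isomorphism, with $\Dep{\Open}$ computed there. The argument for $\Dep{\Closed}$ is identical, using \cref{cor:foundations:preserves-open-closed}(3) in place of (2) and the unit $\Gamma \to \Gamma \Join \Prop_1$ in place of the $\Open$-unit.

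The main obstacle I anticipate is bookkeeping rather than mathematics: one must carefully verify that the weak CwF morphism structure of $\Inv{\brackets{f,g}}$ interacts coherently with the lex-modality structure, so that the comparison isomorphism between $\Inv{\brackets{f,g}}(\EL{})$ and $\EL{}$ is compatible with the factorization of the $\Open$- and $\Closed$-unit maps through their respective universes. Since $\Open$ and $\Closed$ are lex, they act on the universe in a way that is itself generated by pullback and $\Prop_i$, so this compatibility follows formally once one observes that $\Inv{\brackets{f,g}}$ preserves $\Prop_0$, preserves pullbacks, and preserves the universe; but stating this cleanly requires being explicit about which presheaf model of type theory is in play (the Hofmann--Streicher universe of $\UU$-small presheaves, as in the remark above). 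No further ideas are needed beyond those already deployed in \cref{cor:foundations:preserves-open-closed} and the universe-preservation theorems.
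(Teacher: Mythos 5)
Your proposal is correct and follows essentially the same route as the paper: both reduce the claim to lexness of $\Inv{\brackets{f,g}}$, lexness of $\Open$/$\Closed$, preservation of the non-dependent modalities, and preservation of the Hofmann--Streicher universe, with the only real content being that the classifying map of $\Dep{\Open}$ is carried to the classifying map of $\Dep{\Open}$ in the target. The paper merely makes your final ``chasing the comparison isomorphisms'' step concrete by pasting the two pullback squares over $\TY{1}$ and observing that each composite is a pullback (one because $\Open$ is lex, the other because $\Inv{\brackets{f,g}}$ is).
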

\begin{proof}
  We will show this for $\Dep{\Open}$, though the proof is identical for $\Dep{\Closed}$ and indeed
  for any lex modality preserved by $\brackets{f,g}$.

  Let us pick some universe $\UU$ large enough to be preserved by $\brackets{f,g}$ and write
  $\Mor[\El{0}]{\EL{0}}{\TY{0}}$ for the Hofmann-Streicher universe it induces in $\SH{\GL{\rho_0}}$
  and $\Mor[\El{1}]{\EL{1}}{\TY{1}}$ for the Hofmann-Streicher universe in $\SH{\GL{\rho_1}}$. By
  assumption, we have a pullback squares
  \[
    \DiagramSquare{
      nw = \Inv{\brackets{f,g}} \EL{0},
      nw/style = pullback,
      sw = \Inv{\brackets{f,g}} \TY{0},
      west = \Inv{\brackets{f,g}} \El{0},
      ne = \EL{1},
      se = \TY{1},
      east = \El{1},
      south = i,
    }
    \qquad
    \DiagramSquare{
      nw = \Open \EL{i},
      nw/style = pullback,
      sw = \Open \TY{i},
      west = \Open \El{i},
      ne = \EL{i},
      se = \TY{i},
      east = \El{i},
      south = \Dep{\Open},
    }
  \]
  Finally, we note that $\Inv{\brackets{f,g}}$ preserves $\Open$, meaning that there is a natural
  isomorphism $\Mor[\alpha]{\Inv{\brackets{f,g}} \circ \Open}{\Open \circ \Inv{\brackets{f,g}}}$.

  We may now rephrase: our goal is to show that $i \circ \Inv{\brackets{f,g}}(\Dep{\Open})$ and
  $\Dep{\Open} \circ \Open i \circ \alpha$ represent the same type. First, compute the type
  represented by $\Dep{\Open} \circ \Open i \circ \alpha$:
  \[
    \begin{tikzpicture}[diagram]
      \node[pullback] (F-modal-tm) {$\Inv{\brackets{f,g}} \Open \EL{0}$};
      \node[pullback, right = 3cm of F-modal-tm] (modal-tm) {$\Open \EL{1}$};
      \node[right = of modal-tm] (tm) {$\EL{1}$};
      \node[below = of F-modal-tm] (F-modal-ty) {$\Inv{\brackets{f,g}} \Open \TY{0}$};
      \node[below = of modal-tm] (modal-ty) {$\Open \TY{1}$};
      \node[below = of tm] (ty) {$\TY{1}$};

      \path[->] (F-modal-tm) edge (modal-tm);
      \path[->] (modal-tm) edge (tm);

      \path[->] (F-modal-ty) edge node[below] {$\Open i \circ \alpha$} (modal-ty);
      \path[->] (modal-ty) edge node[below] {$\Dep{\Open}$} (ty);

      \path[->] (F-modal-tm) edge (F-modal-ty);
      \path[->] (modal-tm) edge (modal-ty);
      \path[->] (tm) edge (ty);
    \end{tikzpicture}
  \]
  The left-hand square is a pullback because $\Open$ is lex. Next, we compute
  $i \circ \Inv{\brackets{f,g}}(\Dep{\Open})$:
  \[
    \begin{tikzpicture}[diagram]
      \node[pullback] (F-modal-tm) {$\Inv{\brackets{f,g}} \Open \EL{0}$};
      \node[pullback, right = 3cm of F-modal-tm] (F-tm) {$\Inv{\brackets{f,g}} \EL{1}$};
      \node[right = of F-tm] (tm) {$\EL{1}$};
      \node[below = of F-modal-tm] (F-modal-ty) {$\Inv{\brackets{f,g}} \Open \TY{0}$};
      \node[below = of F-tm] (F-ty) {$\Inv{\brackets{f,g}} \TY{1}$};
      \node[below = of tm] (ty) {$\TY{1}$};

      \path[->] (F-modal-tm) edge (F-tm);
      \path[->] (F-tm) edge (tm);

      \path[->] (F-modal-ty) edge node[below] {$\Inv{\brackets{f,g}} \Dep{\Open}$} (F-ty);
      \path[->] (F-ty) edge node[below] {$i$} (ty);

      \path[->] (F-modal-tm) edge (F-modal-ty);
      \path[->] (modal-tm) edge (modal-ty);
      \path[->] (tm) edge (ty);
    \end{tikzpicture}
  \]
  The left-hand square is now a pullback because $\Inv{\brackets{f,g}}$ is lex.
\end{proof}

\subsection{Interpreting \MTT{} in glued topoi}

The results of the previous subsection show that $\brackets{f,g}$ preserves the open and closed
modalities of a glued topos. As a dependent right adjoint, moreover, $\brackets{f,g}$ can be
internalized into a model of \MTT{}. It is this model of \MTT{} that we use to substantiate the
language of `multimodal' synthetic Tait computability.

\begin{theorem}
  \label{thm:foundations:existence}
  Fix a 2-category $\Mode$ and a 2-functor $\Mor[\Interp{-}]{\Op{\Mode} \times \SIMP^1}{\CAT}$ which
  assigns 0-cells to presheaf topoi and 1-cells to essential geometric morphisms between them. There
  exists a model of \MTT{} with $\Mode$ such that mode $m$ is given by
  $\GL{\Inv{\Interp{\Mor{m \times 0}{m \times 1}}}}$ and the modality $\mu$ is interpreted by
  $\brackets{\Inv{\Interp{\mu \times 0}}, \Inv{\Interp{\mu \times 1}}}$.

  Furthermore, each mode has an open and closed modality preserved by the \MTT{} modalities such
  that the open (resp. closed) subtopos of mode $m$ is equivalent to $\Interp{m \times 0}$,
  (resp. $\Interp{m \times 1}$).
\end{theorem}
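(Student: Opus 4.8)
The plan is to read the modes and modalities straight off the statement and verify, using the lemmas of this section, that they constitute a model of \MTT{} in the sense of \textcite{gratzer:journal:2020}; almost all of the substantive work has already been done, so what remains is bookkeeping.

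First I would fix the modes. For each $m : \Mode$ put $\rho_m := \Inv{\Interp{\Mor{m \times 0}{m \times 1}}}$; since this is the inverse image of an essential geometric morphism it is a continuous lex functor, so $\GL{\rho_m}$ is a presheaf topos by the cited result of \textcite{sga:4,carboni:1995} and hence carries the standard model of extensional Martin-L{\"o}f type theory with a cumulative Hofmann--Streicher hierarchy~\parencite{hofmann-streicher:1997}. As $\Mode$ is small, fix a single Grothendieck universe $\UU$ preserved by every geometric morphism in the image of $\Interp{-}$. By \cref{sec:foundations:open-closed}, each $\GL{\rho_m}$ comes with fibered modalities $\Open$ and $\Closed$ (and their dependent versions $\Dep{\Open}$, $\Dep{\Closed}$ on $\UU$) whose open and closed subtopoi are $\Interp{m \times 0}$ and $\Interp{m \times 1}$; this already establishes the ``furthermore'' clause up to preservation by the modalities.

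Next I would interpret each modality $\Mor[\mu]{n}{m}$ by $\brackets{f_\mu, g_\mu}$, where $f_\mu, g_\mu$ are the essential geometric morphisms with inverse images $\Inv{\Interp{\mu \times 0}}$ and $\Inv{\Interp{\mu \times 1}}$. The invertible 2-cell needed to apply \cref{thm:foundations:induced} -- filling the square with horizontal edges $\rho_n, \rho_m$ and vertical edges $\Inv{\Interp{\mu \times 0}}, \Inv{\Interp{\mu \times 1}}$ -- is obtained by applying the 2-functor $\Interp{-}$ to the commuting square in $\Op{\Mode} \times \SIMP^1$ with sides $\mu \times 0$, $\mu \times 1$, $\Mor{m \times 0}{m \times 1}$, $\Mor{n \times 0}{n \times 1}$ and passing to inverse images. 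Then \cref{thm:foundations:induced} yields $\brackets{f_\mu, g_\mu}$; since $\rho_n, \rho_m$ are continuous and $f_\mu, g_\mu$ essential it is essential, its left adjoint is computed pointwise by \cref{thm:foundations:essential}, and by the discussion preceding this theorem it is moreover a dependent right adjoint~\parencite{birkedal:2020} preserving every universe containing $\UU$. Hence $\Inv{\brackets{f_\mu, g_\mu}}$ is a universe-preserving right adjoint between presheaf topoi and so induces a dependent right adjoint on the Hofmann--Streicher models; this interprets $\Modify[\mu]{-}$, while $\Ess{\brackets{f_\mu, g_\mu}}$ interprets the lock $\LockCx{-}$.

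Finally I would assemble the 2-cell structure. A 2-cell $\alpha : \mu \Rightarrow \nu$ of $\Mode$ is sent by $\Interp{-}$ to a 2-cell of essential geometric morphisms, which via the pointwise description of $\brackets{-,-}$ from \cref{thm:foundations:essential} induces an invertible natural transformation between the corresponding modality functors, interpreting the key substitutions $\Key{\alpha}{-}$; compatibility with the correcting 2-cells is a diagram chase from the coherence of $\Interp{-}$. Pasting the defining squares makes $\brackets{-,-}$ pseudofunctorial, so $m \mapsto \GL{\rho_m}$ together with the above data is, after the strictification already used in the proof of \cref{thm:cosmoi:quasi-initiality}, exactly a model of \MTT{}; the ``furthermore'' clause then follows from \cref{cor:foundations:preserves-open-closed} and the preservation of $\Dep{\Open}$ and $\Dep{\Closed}$ by $\Inv{\brackets{f,g}}$ established in \cref{sec:foundations:open-closed}. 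I expect the main obstacle to be exactly this last stage: checking that the 2-cell assignment is well defined relative to the correcting 2-cells and satisfies \MTT{}'s 2-functoriality equations, and arranging a universe $\UU$ uniformly over all modalities -- the genuinely categorical content (existence, essentiality and DRA structure of $\brackets{-,-}$, and preservation of the open/closed decomposition and of universes) being already in hand.
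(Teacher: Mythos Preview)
Your proposal is essentially correct and follows the route the paper intends: the paper does not supply a separate proof of \cref{thm:foundations:existence} but treats it as the capstone of \cref{sec:foundations}, to be read off from \cref{thm:foundations:induced}, \cref{thm:foundations:essential}, \cref{cor:foundations:preserves-open-closed}, and the intervening discussion of universe preservation and dependent right adjoints. Your assembly of these pieces is the right one.

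One small slip: you write that a 2-cell $\alpha$ of $\Mode$ induces an \emph{invertible} natural transformation between the modality functors. It need not be invertible; 2-cells of $\Mode$ are arbitrary, and the resulting key substitutions $\Key{\alpha}{-}$ are not isomorphisms in general. The construction you want is simply that $\Interp{-}$ sends $\alpha$ to a 2-cell between the inverse-image functors, and the explicit pointwise description of $\Inv{\brackets{-,-}}$ in \cref{thm:foundations:induced} (not \cref{thm:foundations:essential}, which concerns the left adjoint) makes this a natural transformation $\Inv{\brackets{f_\nu,g_\nu}} \Rightarrow \Inv{\brackets{f_\mu,g_\mu}}$; transposing along the adjunctions then yields the map of locks. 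Your identification of the remaining work---2-functoriality of the assignment and a uniform choice of $\UU$---is accurate, and neither presents a genuine obstacle.
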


In this model, each mode supports a model of synthetic Tait computability and the extra structure of
this language (the open and closed modalities) is preserved by the modalities between modes.


\section{The category of renamings for \MTT{}}
\label{sec:renamings}

We now isolate a class of renamings: $\Ren{m}$. We also define the neutral and normal forms of terms
in \MTT{}. Unlike the syntax of \MTT{}, neutral and normal forms are not taken up to a complex
equivalence relation, and so their conversion problem is immediately reducible to conversion in the
mode theory.

The judgments for these are slightly atypical, in that the typing judgments for both neutral and
normal forms are defined (1) with formal telescopes as contexts and (2) inductive-recursively with
the inclusion of neutral and normals into terms, and renamings into contexts and substitutions.
This dependence of the unquotiented raw syntax on actual terms is necessary in order to ensure that
the typing relation can be sensibly defined.

The full collection of neutral and normal forms is given in
\cref{sec:appendix:neutrals-and-normals}. We present a few representative cases below.

\begin{mathparpagebreakable}
  \JdgFrame{\IsTele{\Theta}{\Gamma} \qquad \IsRen{\psi}{\Psi}{\gamma}}
  \\
  \inferrule{
    \IsTele{\Theta}{\Gamma}
  }{
    \IsTele{\LockTele{\Theta}}{\LockCx{\Gamma}}<n>
  }
  \and
  \inferrule{
    \IsTele{\Gamma, \Delta}{\Theta,\Psi}
    \\
    \IsRen{r}{\Psi}{\delta}
  }{
    \IsRen[\LockTele{\Theta}]{\LockRen{r}}{\LockTele{\Psi}}<n>{\LockSb{\delta}}
  }
  \and
  \inferrule{
    \IsTele{\Theta}{\Gamma}
    \\
    \Mor[\mu, \nu]{n}{m}
    \\
    \Mor[\alpha]{\nu}{\mu}
  }{
    \IsRen[\LockTele{\Theta}]{\KeyRen{\alpha}{\Theta}}{\LockTele{\Theta}<\nu>}<n>{\Key{\alpha}{\Gamma}}
  }
  \\
  \JdgFrame{\IsNfTy{\tau}{A}}
  \\
  \inferrule{
    \IsNfTy{\tau}<n>{A}
    \\
    \IsNfTy[\ETele{\Theta}{A}<\ArrId{m}>]{\sigma}{B}
  }{
    \IsNfTy{\NfProd{\tau}{\sigma}}{\Sig{A}{B}}
  }
  \and
  \inferrule{
    \IsNfTy[\LockTele{\Theta}]{\tau}<n>{A}
  }{
    \IsNfTy{\NfModify{\tau}}{\Modify{A}}
  }
  \and
  \inferrule{
    \IsTele{\Theta}{\Gamma}
    \and
    \IsNf{u}{\NfUni}{A}
  }{
    \IsNfTy{\NfDec{u}}{A}
  }
  \\
  \JdgFrame{
    \IsNe{e}{A}{M}
    \qquad
    \IsNf{u}{A}{M}
  }
  \\
  \inferrule{
    \IsTele{\Theta}{\Gamma}
    \\
    \Theta(k) = (\DeclNameless{A})
    \\
    \Locks{\Theta}{k} = \nu
    \\
    \Mor[\alpha]{\mu}{\nu}
  }{
    \IsNe{\NeVar{k}{\alpha}}{
      \Sb{A}{\Key{\alpha}{} \circ (\LockSb{\Wk}<\nu_{k-1}>) \dots \circ (\LockSb{\Wk}<\nu_{0}>)}
    }{
      \Sb{\Var{0}}{\Key{\alpha}{} \circ (\LockSb{\Wk}<\nu_{k-1}>) \dots \circ (\LockSb{\Wk}<\nu_{0}>)}
    }
  }
  \and
  \inferrule{
    \IsTele{\Theta}{\Gamma}
    \\
    \IsTy[\LockCx{\Gamma}]{A}<n>
    \\
    \IsTy[\ECx{\Gamma}{A}]{B}
    \\\\
    \IsNf[\ETele{\Theta}{A}]{u}{B}{M}
  }{
    \IsNf{\NfLam{u}}{\Fn{A}{B}}{\Lam{M}}
  }
  \and
  \inferrule{
    \IsTele{\Theta}{\Gamma}
    \\
    \IsTy[\LockCx{\Gamma}]{A}<n>
    \\
    \IsTy[\ECx{\Gamma}{A}]{B}
    \\\\
    \IsNe{e}{\Fn{A}{B}}{M}
    \\
    \IsNf{u}{A}{N}
  }{
    \IsNe{\NeApp{e}{u}}{\Sb{B}{\ESb{\ISb}{N}}}{\App{M}{N}}
  }
  \and
  \inferrule{
    \IsTele{\Theta}{\Gamma}
    \\
    \IsTy[\LockCx{\Gamma}]{A}<n>
    \\
    \IsNf[\LockTele{\Theta}]{u}{A}<n>{M}
  }{
    \IsNf{\NfMkBox{u}}{\Modify{A}}{\MkBox{M}}
  }
  \and
  \inferrule{
    \IsTele{\Theta}{\Gamma}
    \\
    \IsTy[\LockCx{\Gamma}]{A}<n>
    \\
    \IsNe{e}{\Modify{A}}{M}
  }{
    \IsNf{\NfInj{e}}{\Modify{A}}{M}
  }
  \and
  \inferrule{
    \Mor[\nu]{o}{n}
    \\
    \Mor[\mu]{n}{m}
    \\
    \IsTele{\Theta}{\Gamma}
    \\
    \IsTy[\LockCx{\LockCx{\Gamma}}<\nu>]{A}<o>
    \\
    \IsNe[\LockTele{\Theta}]{u}{\Modify[\nu]{A}}<n>{M}
    \\
    \IsNfTy[\ETele{\Theta}{\Modify[\nu]{A}}]{\tau}{B}
    \\
    \IsNf[\ETele{\Theta}{A}<\mu \circ \nu>]{u}{\Sb{B}{\ESb{\Wk}{\MkBox[\nu]{\Var{0}}}}}
  }{
    \IsNe{\NeLetMod{\mu}{\nu}{\tau}{e}{u}}{\Sb{B}{\ESb{\ISb}{M}}}{\LetMod{M}{N}}
  }
  \and
  \inferrule{
    \IsTele{\Theta}{\Gamma}
    \\
    \IsNe{e}{\Dec{\BoolCode}}{M}
  }{
    \IsNe{\NfDecIso{e}}{\Bool}{\DecIso{M}}
  }
  \and
  \inferrule{
    \IsTele{\Theta}{\Gamma}
    \\
    \IsNf{u}{\Bool}{M}
  }{
    \IsNf{\NfDecIso*{u}}{\Dec{\BoolCode}}{\DecIso*{M}}
  }
\end{mathparpagebreakable}

There are no equations imposed on any of these generators, except for renamings which have the
necessary equations to organize them into a 2-functor into $\CAT$. We further define a substitution
action applying a renaming to a normal or neutral form, which must show respects these
equations. This operation and its pertinent proofs are largely standard, with the exception being
the case for variables. We reproduce the definition of the substitution action on variables here:
\begin{align*}
  \Sb{\NeVar{k}{\alpha}}{\IRen} &= \NeVar{k}{\alpha}\\
  \Sb{\NeVar{k}{\alpha}}{\LockRen{\WkRen}} &= \NeVar{k + 1}{\alpha}\\
  \Sb{\NeVar{k}{\alpha}}{\LockRen{\ERen{r}{\NeVar{j}{\beta}}}} &=
  \begin{cases}
    \NeVar{j}{(\beta \Whisker \ArrId{\mu})\circ\alpha} & k = 0\\
    \Sb{\NeVar{k-1}{\alpha}}{\LockRen{r}} & \text{otherwise}
  \end{cases}\\
  \Sb{\NeVar{k}{\alpha}}{\LockRen{(r \circ s)}} &= \Sb{\Sb{\NeVar{k}{\alpha}}{\LockRen{r}}}{\LockRen{s}}\\
  \Sb{\NeVar{k}{\alpha}}{\KeyRen{\beta}{\Theta}} &= \NeVar{k}{\alpha\circ\beta}
\end{align*}

\begin{remark}
  Note that these normal forms do not necessarily enjoy decidable equality. Rather, the problem of
  deciding when two normal forms are convertible is precisely the problem of deciding whether
  certain 1- and 2-cells of the mode theory $\Mode$ are equal. Accordingly, it is possible that
  \MTT{} may enjoy normalization, but not decidable type-checking. The mode theories used in
  instantiations of \MTT{} thus far are evidently decidable, so we do not expect this to be an issue
  for practical usage.
\end{remark}


\section{Prerequisites for the normalization model}
\label{sec:prereq}

\subsection{Key syntactic categories}

Recall from \cref{sec:renamings} that for each mode there is a category $\Ren{m}$ of telescopes and
renamings, together with a functor $\Mor[\EmbRen]{\Ren{m}}{\Cx{m}}$ which sends a telescope the
corresponding context and formal renaming to substitution. The collection of these functors induces
a natural transformation between the 2-functors $\Mor[\Cx{-},\Ren{-}]{\Coop{\Mode}}{\CAT}$.

The functor $\EmbRen$ induces a functor $\Mor[\InvRen]{\PSH{\Cx{m}}}{\PSH{\Ren{m}}}$. Moreover,
because $\PSH{-}$ preserves strict equalities, this gives rise to a 2-functor
$\Mor{\Mode \times \SIMP^1}{\CAT}$. As a morphism defined by precomposition each $\InvRen$ has a
left adjoint and a right adjoint: $\ExtEmbRen \Adjoint \InvRen \Adjoint \Dir{\EmbRen}$. The left
adjoint extends $\EmbRen$ in the sense that $\ExtEmbRen{\Yo{\Theta}} \cong \Yo{\EmbRen(\Theta)}$. We
may now apply \cref{thm:foundations:existence}.

\begin{theorem}
  \label{thm:prereq:model}
  There exist a model $\InterpGl{-}$ of \MTT{} with a hierarchy of cumulative universes in
  $\GL{\InvRen}$.

  In this model, $\InterpGl{m}$ is $\GL{\InvRen}$, and $\InterpGl{\mu}$ is interpreted by
  precomposition with $\mu$ in both components. Furthermore, there is a (dependent) open and closed
  modality in $\InterpGl{m}$, preserved by all modalities.
\end{theorem}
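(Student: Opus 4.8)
The statement I'm proving: there's a model of MTT with cumulative universes in $\GL{\InvRen}$, where each mode is $\GL{\InvRen}$, each modality $\mu$ acts by precomposition in both components, with open/closed modalities preserved by all modalities.

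The approach is to verify the hypotheses of Theorem \ref{thm:foundations:existence} and then invoke it. That theorem takes as input a 2-category $\Mode$ together with a 2-functor $\Mor{\Op{\Mode} \times \SIMP^1}{\CAT}$ sending 0-cells to presheaf topoi and 1-cells to essential geometric morphisms, and outputs exactly a model of the desired shape. So the plan is: first, assemble the input 2-functor; second, check its codomain lands in presheaf topoi with essential geometric morphisms between them; third, read off the conclusion and match it against the claimed description of $\InterpGl{m}$ and $\InterpGl{\mu}$.

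For the first step, I would take the 2-functor $\Mor{\Mode \times \SIMP^1}{\CAT}$ built in the preceding paragraph: on the $\SIMP^1$ factor it picks out the arrow $\Mor[\EmbRen]{\Ren{m}}{\Cx{m}}$ at mode $m$ (after applying $\PSH{-}$, this is the essentially geometric morphism whose inverse image is $\InvRen$), and on the $\Mode$ factor it uses the 2-functorial structure of $\Cx{-}$ and $\Ren{-}$ valued in $\CAT$ over $\Coop{\Mode}$, which transports under $\PSH{-}$ to a 2-functor out of $\Mode$ (the variance flip is exactly what pairs with the $\Coop{}$ in the source). The key facts needed here are already recorded: $\PSH{-}$ preserves strict equalities, so the composite is a genuine strict 2-functor; $\EmbRen$ induces an adjoint string $\ExtEmbRen \Adjoint \InvRen \Adjoint \Dir{\EmbRen}$, so each $\InvRen$ is the inverse image of an essential geometric morphism; and $\PSH{\Cx{m}}$, $\PSH{\Ren{m}}$ are presheaf topoi by construction. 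For the modality direction, precomposition functors between presheaf categories are always essential (left and right Kan extension along the functor give both adjoints), so the 1-cells of $\Mode$ are sent to essential geometric morphisms as required.

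With the input verified, Theorem \ref{thm:foundations:existence} directly yields the model: mode $m$ is $\GL{\Inv{\Interp{\Mor{m\times 0}{m\times 1}}}} = \GL{\InvRen}$, the modality $\mu$ is $\brackets{\Inv{\Interp{\mu\times 0}}, \Inv{\Interp{\mu\times 1}}}$, which unwinds to $\brackets{}{}$ of the precomposition functors on $\PSH{\Ren{-}}$ and $\PSH{\Cx{-}}$ respectively — i.e. precomposition with $\mu$ in both components — and the theorem's final clause gives the open and closed modalities in each mode preserved by all $\InterpGl{\mu}$. The cumulative hierarchy of universes comes from the fact that $\GL{\InvRen}$ is a presheaf topos (Theorem of \textcite{sga:4,carboni:1995}, using that $\EmbRen$ and hence the comparison functor is continuous on the relevant side), so it supports a model of extensional type theory with cumulative universes in the standard way, and the preceding subsection shows these universes are preserved appropriately so that the modalities and open/closed modalities interact correctly. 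The only genuinely nontrivial point — and the place I'd slow down to check carefully — is that the 2-functoriality data of $\Cx{-}$ and $\Ren{-}$ together with the natural transformation $\EmbRen$ really does package into a single 2-functor $\Mor{\Mode \times \SIMP^1}{\CAT}$ after applying $\PSH{-}$, with all the variances lining up; but this is bookkeeping of the kind the paper has flagged as routine, not a mathematical obstacle.
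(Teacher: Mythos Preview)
Your proposal is correct and matches the paper's approach exactly: the paper's ``proof'' is the single sentence ``We may now apply \cref{thm:foundations:existence}'' immediately preceding the theorem statement, and you have spelled out in detail the verification of that theorem's hypotheses (presheaf topoi at each vertex, essential geometric morphisms for each 1-cell via left/right Kan extension, the 2-functor packaging via $\PSH{-}$ on the $\EmbRen$ natural transformation). Your additional remark that the cumulative universe hierarchy comes from $\GL{\InvRen}$ being a presheaf topos is also what the paper relies on from the surrounding Section~\ref{sec:foundations} discussion.
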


In what follows, we will freely use \MTT{} to work with this model. As the interpretation of
identity types in this model supports equality reflection, we will work with extensional equality
when reasoning with \MTT{}.

\subsection{Basic properties of $\InterpGl{m}$}

The \MTT{} modalities in this model are particularly well-behaved because the arise from the inverse
image of essential geometric morphisms. In particular, they preserve colimits internally to the
theory.
\begin{theorem}
  \label{thm:prereq:internal-left-adj}
  For each $\mu$ there is an equivalence $\Modify{A + B} \Equiv \Modify{A} + \Modify{B}$.
\end{theorem}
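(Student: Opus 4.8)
The plan is to exploit the fact established in \cref{thm:prereq:model} that $\InterpGl{\mu}$ is interpreted by an \emph{essential} geometric morphism --- concretely, by $\Inv{\brackets{f,g}}$ where both $f$ and $g$ are precomposition functors and hence have left adjoints. By \cref{thm:foundations:essential}, $\brackets{f,g}$ admits a left adjoint $\Ess{\brackets{f,g}}$. The key semantic input is therefore: an \MTT{} modality arising this way is not merely a dependent right adjoint but an \emph{internal} left adjoint in the sense that $\Modify{-}$ preserves colimits inside the type theory of $\InterpGl{m}$. I would first make this precise: because $\Inv{\brackets{f,g}}$ is the inverse image of an essential geometric morphism, it is cocontinuous (as a left exact left adjoint between presheaf topoi), and more than that, since $\brackets{f,g}$ itself has a left adjoint, $\Inv{\brackets{f,g}}$ preserves the \emph{dependent-product-style} structure needed to conclude that the corresponding \MTT{} modality commutes with dependent sums and coproducts when read internally.

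The main body of the argument is then a routine ``type-theoretic'' construction carried out internally to \MTT{} (which we are entitled to use by the paragraph following \cref{thm:prereq:model}, working with extensional equality). The equivalence $\Modify{A + B} \Equiv \Modify{A} + \Modify{B}$ is constructed by giving maps in both directions and checking they are mutually inverse. The right-to-left map is the obvious one: $\Modify{-}$ is a functor, so the two coproduct injections $A \to A+B$ and $B \to A+B$ yield $\Modify{A} \to \Modify{A+B}$ and $\Modify{B} \to \Modify{A+B}$, hence a map $\Modify{A} + \Modify{B} \to \Modify{A+B}$ by the universal property of the coproduct. The left-to-right map uses modal elimination: given $x : \Modify{A+B}$, we eliminate to the case $\MkBox{c}$ for $c : A + B$ under the lock, case-split on $c$ (this case split is legal because we are under a lock, i.e.\ in mode $n$, where coproducts behave normally), and return $\mathsf{inl}(\MkBox{a})$ or $\mathsf{inr}(\MkBox{b})$ accordingly. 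That the two composites are the identity follows from the $\beta$- and $\eta$-laws for $\Modify{-}$ together with those for $+$; concretely, one composite is checked by modal elimination followed by a coproduct case split, the other by a coproduct case split followed by modal elimination, and in each branch both reductions fire to give the identity.

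The step I expect to be the genuine obstacle --- and the only place where the essential/cocontinuity hypothesis is actually used rather than bare functoriality --- is verifying that the left-to-right map is well-defined as a map into an honest coproduct, i.e.\ that ``eliminating out of $\Modify{A+B}$ into $\Modify{A}+\Modify{B}$'' typechecks. In general a modality need not commute with $+$ at all (the purely syntactic $\Modify{-}$ certainly does not), so the construction cannot be purely formal: the modal eliminator $\LetMod{-}{-}$ only lets us eliminate into a type in the \emph{unlocked} context, and we must know the motive $\Modify{A} + \Modify{B}$ receives the value correctly, which amounts to knowing the unit map $A + B \to \Modify[\mathrm{restriction}]{(A+B)}$ interacts well with the coproduct structure. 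The clean way to discharge this is to observe externally that $\Inv{\brackets{f,g}}$ preserves colimits (\cref{thm:foundations:induced} gives that it preserves finite limits and all colimits, being a morphism of presheaf topoi with a further left adjoint), so the canonical comparison map $\Modify{A} + \Modify{B} \to \Modify{A+B}$ is an isomorphism at the level of the ambient topos; the internal statement then follows because the model's interpretation of $+$ and of $\Modify{-}$ agree with the ambient coproduct and the ambient $\Inv{\brackets{f,g}}$ respectively. I would present the internal construction for intuition and implementability, but lean on the external cocontinuity fact for the actual proof that the comparison is an equivalence.
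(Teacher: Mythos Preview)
Your proposal is correct but takes a different route from the paper. The paper's argument stays within the \MTT{} framework: since each $\Inv{\brackets{f,g}}$ is simultaneously a left and a right adjoint, the model of \cref{thm:prereq:model} extends to a model over the enlarged mode theory $\AdjMode$ obtained by adjoining a right adjoint $\bar\mu$ to every $\mu$. In that extended theory, the \emph{crisp induction} principles derived in the \MTT{} journal paper let one case-split on a term of type $A+B$ that lives under a modal annotation $\mu$ --- precisely the step you correctly flagged as the obstacle in your attempted internal construction. The equivalence is then derived syntactically in \MTT{} over $\AdjMode$ and hence holds in the model (even though it is not derivable over the original $\Mode$).

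Your argument instead unfolds the semantics directly: the interpreting functor preserves coproducts because it is itself a left adjoint, coproducts are universal in a topos, so the comparison map is an external isomorphism and hence an internal equivalence. This is valid and arguably more elementary, though you gloss over the check that the DRA interpretation of $\Modify{-}$ and the model's interpretation of $+$ line up with the external operations (they do: one uses universality of coproducts to commute the pullback in the DRA construction past the sum). The paper's approach buys a reusable internal reasoning principle (crisp induction) at the cost of invoking the mode-theory-extension machinery; yours is a one-shot semantic calculation that avoids that machinery but does not yield a principle you can reuse elsewhere in the internal language.
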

\begin{proof}
  Recall that each modality in this model is interpreted by a morphism which is simultaneously a
  left and right adjoint. Therefore, this model can be extended to the mode theory $\AdjMode$, which
  extends $\Mode$ by adding a right adjoint $\bar{\mu}$ for each $\mu$ from
  $\Mode$~\parencite[Section 10.2]{gratzer:journal:2020}.

  The \emph{crisp induction} principles derived in \textcite[Section 10.5]{gratzer:journal:2020} now
  show that $\Modify{-}$ preserves coproducts internally with this extended mode theory, so the
  equivalence is validated by the model from \cref{thm:prereq:model} (though is no longer internally
  derivable).
\end{proof}

\begin{remark}
  We note that this model of \MTT{} enjoys
  $\ECx{\Gamma}{A}<\mu\circ\nu> \cong \ECx{\Gamma}{\Modify[\nu]{A}}$: the modalities are interpreted
  by proper dependent right adjoints. We therefore blur the distinction between a variable
  $\DeclVar{x}{A}$ and a variable $\DeclVar{x}{\Modify{A}}<\ArrId{}>$.
\end{remark}

\begin{notation}
  It's convenient to recall that $\Open$ is defined by exponentiation with a subterminal (a
  proposition) $\Prop$. We accordingly write $\Open[z : \Prop] A(z)$.
\end{notation}

\begin{remark}
  Strictly speaking, $\Prop$ should contain a mode annotation. However,
  $\Modify{\Prop_n} \cong \Prop_m$, and as both are subterminal we identify $\Prop_m$ and
  $\Modify{\Prop_n}$.
\end{remark}

Finally, we note also that $\Open$ and $\Closed$ enjoy two important properties:
\begin{theorem}[Fracture~\parencite{sga:4}]
  \label{thm:prereq:fracture}
  For any type $A$ we have $A \cong \Open A \times_{\Closed\Open A} \Closed A$.
\end{theorem}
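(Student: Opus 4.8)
The plan is to argue internally in the language supplied by \cref{thm:prereq:model}, working in the topos $\InterpGl{m} = \GL{\InvRen}$, where $\Open$ and $\Closed$ are idempotent lex modalities arising from the open and closed subtopoi. First I would exhibit the canonical \emph{fracture} comparison map $\phi_A \colon A \to \Open A \times_{\Closed\Open A}\Closed A$: the square with sides $\eta^{\Open}_A \colon A \to \Open A$, $\eta^{\Closed}_A \colon A \to \Closed A$, $\eta^{\Closed}_{\Open A}\colon \Open A \to \Closed\Open A$, and $\Closed(\eta^{\Open}_A)\colon \Closed A \to \Closed\Open A$ commutes by naturality of $\eta^{\Closed}$, hence induces $\phi_A$ into the pullback. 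It then remains to show $\phi_A$ is an equivalence.

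The key device is that $\Open$ and $\Closed$ are \emph{jointly conservative}: a map is an equivalence once its images under $\Open$ and under $\Closed$ are. This is immediate from the description of $\GL{\InvRen}$ as a comma category --- $\Open$ recovers the $\PSH{\Cx{m}}$-component of an object, $\Closed$ recovers the $\PSH{\Ren{m}}$-component, and the comma structure map is pinned down once both components are fixed (cf.\ \parencite{sga:4}). So it suffices to check that $\Open\phi_A$ and $\Closed\phi_A$ are equivalences. For $\Open\phi_A$: since $\Open$ is lex it preserves the defining pullback, and using the standard identity $\Open\Closed B \cong \ObjTerm{}$ (the closed part becomes contractible once $\Prop$ is assumed, and $\Open(-) = (-)^{\Prop}$) together with idempotence of $\Open$, the pullback collapses to $\Open\Open A \cong \Open A$; tracing through, $\Open\phi_A$ is identified with the unit $\Open\eta^{\Open}_A$ at a $\Open$-modal type, hence an equivalence. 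For $\Closed\phi_A$: again $\Closed$ is lex, so $\Closed(\Open A \times_{\Closed\Open A}\Closed A) \cong \Closed\Open A \times_{\Closed\Open A}\Closed A$, and since $\Closed(\eta^{\Closed}_{\Open A})$ is an equivalence (a unit is inverted by its own modality) this pullback collapses to $\Closed A$, with $\Closed\phi_A$ identified with $\Closed\eta^{\Closed}_A$ --- again an equivalence. Joint conservativity then yields the claim.

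The main obstacle is not any single computation but the bookkeeping: one must verify that the identifications above are made \emph{coherently with} $\phi_A$ --- i.e.\ that $\phi_A$ itself, not merely some abstract isomorphism, is carried to the relevant unit under each of $\Open$ and $\Closed$ --- rather than only that the two sides are abstractly isomorphic. Beyond this, the sole external inputs are the joint conservativity of $\Open$ and $\Closed$ and the identity $\Open\Closed B \cong \ObjTerm{}$; both are standard features of the open/closed factorization of a topos, and may simply be cited from \parencite{sga:4}.
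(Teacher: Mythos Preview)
The paper does not actually prove this statement: it is stated with a citation to \parencite{sga:4} and no proof environment follows. Your sketch is a correct instance of the standard argument for the recollement/fracture square in a glued topos, and the two external facts you isolate---joint conservativity of $\Open$ and $\Closed$ on the comma category, and $\Open\Closed B \simeq \ObjTerm{}$---are exactly the ingredients one cites from \parencite{sga:4}. The coherence bookkeeping you flag (that $\Open\phi_A$ and $\Closed\phi_A$ are identified with the respective units, not merely that domain and codomain are abstractly isomorphic) is genuine but routine: it follows by chasing the universal property of the pullback after applying each lex modality, since the projections of the pullback compose with $\phi_A$ to give $\eta^{\Open}_A$ and $\eta^{\Closed}_A$ by construction.
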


\begin{theorem}[Internal realignment~\parencite{orton:2018,sterling:modules:2020}]
  \label{thm:prereq:realignment}
  Let us denote the type of types isomorphic to $A$ as $\Iso(A)$:
  \[
    \Iso(A) = \Sum{B : \Uni} \parens{A \cong B}
  \]
  There is a section $\Realign$ to $\Mor[\eta]{\Iso(A)}{\Open \Iso(A)}$.
\end{theorem}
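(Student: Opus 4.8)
The plan is to establish this semantically, recognizing the statement as the realignment property of the Hofmann--Streicher universe along monomorphisms. Recall from \cref{thm:prereq:model} and the surrounding discussion that $\GL{\InvRen}$ is equivalent to a presheaf topos $\PSH{\mathcal{C}}$ \parencite{sga:4,carboni:1995}, that each $\Uni$ is (a stage of) the Hofmann--Streicher universe of $\UU$-small presheaves for a Grothendieck universe $\UU$ large enough that the base $\mathcal{C}$ is $\UU$-small, and that $\Open$ is by definition the modality $(-)^{\Prop}$ for the subterminal object $\Prop$. Since $\Prop$ is subterminal, for every context $\Gamma$ the projection $\Gamma\times\Prop\hookrightarrow\Gamma$ is a monomorphism of presheaves, and $\Realign$ will be obtained from realignment of $\Uni$ along such monomorphisms.

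First I would unfold the data. By the universal property of $(-)^{\Prop}$, an element of $\Open\Iso(A) = \Iso(A)^{\Prop}$ in a context $\Gamma$ is an element of $\Iso(A)$ in context $\Gamma\times\Prop$; since $\Iso(A)$ restricted to $\Gamma\times\Prop$ is $\Iso(A|_{\Gamma\times\Prop})$, this is exactly a type $B_0$ over $\Gamma\times\Prop$ together with an isomorphism $\iota_0\colon A|_{\Gamma\times\Prop}\cong B_0$, and the unit $\eta$ sends a genuine pair $(B,\iota)\in\Iso(A)$ to the restriction of its data along $\Gamma\times\Prop\hookrightarrow\Gamma$. Thus a section $\Realign$ of $\eta$ amounts to the following strictification problem, natural in $\Gamma$: given $(B_0,\iota_0)$ as above, produce a type $B$ over $\Gamma$ with $B|_{\Gamma\times\Prop}=B_0$ \emph{on the nose} together with an isomorphism $\iota\colon A\cong B$ whose restriction to $\Gamma\times\Prop$ is $\iota_0$.

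Next I would solve this problem by gluing presheaves. A type over $\Gamma$ is a $\UU$-small presheaf on the category of elements $\int_{\mathcal{C}}\Gamma$, and because $\Gamma\times\Prop\hookrightarrow\Gamma$ is a subpresheaf, $\int_{\mathcal{C}}(\Gamma\times\Prop)$ is the full subcategory of $\int_{\mathcal{C}}\Gamma$ spanned by the elements at which $\Prop$ holds, closed under restriction maps. Define $B(\gamma)=B_0(\gamma)$ when $\Prop$ holds at $\gamma$ and $B(\gamma)=A(\gamma)$ otherwise; for a restriction $f\colon\gamma'\to\gamma$ let $B(f)$ be the corresponding restriction of $B_0$ (if $\Prop$ holds at $\gamma$, hence at $\gamma'$), of $A$ (if $\Prop$ holds at neither), or the composite $A(\gamma)\xrightarrow{A(f)}A(\gamma')\xrightarrow{(\iota_0)_{\gamma'}}B_0(\gamma')$ in the remaining case. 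Functoriality is a short calculation from the naturality of $\iota_0$; $B$ is $\UU$-small since $A$ and $B_0$ are; and $B$ restricts strictly to $B_0$. Taking $\iota$ to be the identity where $\Prop$ fails and $\iota_0$ where it holds yields an isomorphism $A\cong B$ restricting strictly to $\iota_0$. Since monomorphisms, subpresheaves, and the pointwise clauses above are stable under the reindexing functors along maps $\Gamma'\to\Gamma$, the assignment $(B_0,\iota_0)\mapsto(B,\iota)$ is natural in $\Gamma$ and hence packages into an internal map $\Realign\colon\Open\Iso(A)\to\Iso(A)$, with the strictness clauses saying precisely that $\eta\circ\Realign$ is the identity. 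The step I expect to demand the most care is exactly this translation between the internal statement and the external gluing data together with the naturality check; everything else is a routine presheaf computation, and the resulting fact is the realignment property documented by \textcite{orton:2018} and \textcite{sterling:modules:2020}.
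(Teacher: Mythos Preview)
Your argument is essentially the Orton--Pitts strictification construction spelled out directly, and the paper's proof simply cites that construction (their Theorem~8.4) rather than reproducing it. In that sense the two approaches coincide at the level of ideas.

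There is, however, one point you glide over that the paper singles out as the actual content of the proof. Your definition ``$B(\gamma)=B_0(\gamma)$ when $\Prop$ holds at $\gamma$ and $B(\gamma)=A(\gamma)$ otherwise'' presupposes that for each object of the base category you can \emph{decide} whether $\Prop$ holds there. In a classical metatheory this is automatic for any subterminal, but the paper wants the argument to be constructive (it later relies on this to conclude that the normalization function is effective). Accordingly, the paper's proof does not reprove the gluing construction at all; instead it verifies the hypothesis of Orton--Pitts by exhibiting $\InterpGl{m}$ as presheaves on the collage of $\Hom{-}{\EmbRen(-)}$, whose objects are the disjoint union of those of $\Cx{m}$ and $\Ren{m}$, and observing that $\Prop$ is $\{\star\}$ on the former and $\emptyset$ on the latter, hence levelwise decidable. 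If you add this observation, your pointwise case split becomes constructively valid and your proof is complete and matches the paper's.
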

\begin{remark}
  The following constructive proof is due to Christian Sattler.
\end{remark}
\begin{proof}
  This follows directly from the argument given by \textcite[Theorem 8.4]{orton:2018}. It suffices
  to show that $\Prop$ is levelwise decidable externally. This is immediate when we unfold the
  presentation of $\InterpGl{m}$ as a presheaf topos.

  Explicitly, \textcite{carboni:1995} show that $\InterpGl{m}$ can be presented as
  $\PSH{\COLLAGE{\Hom{-}{\EmbRen\parens{-}}}}$. While the precise definition of
  $\COLLAGE{\Hom{-}{\EmbRen\parens{-}}}$ is given by \textcite{carboni:1995}, for our purposes it
  suffices to recall that the collection of objects of this \emph{collage} is the disjoint union of
  the objects of $\Cx{m}$ and $\Ren{m}$. Under this presentation moreover, $\Prop$ can be defined as
  follows:
  \[
    \Prop(\Theta) = \emptyset \qquad \Prop(\Gamma) = \braces{\star}
  \]
  The levelwise decidability of $\Prop$ is then immediate.
\end{proof}

In anticipation of using this model of \MTT{} to construct the normalization model in the next
section, we also introduce several constants interpreted by specific presheaves in $\InterpGl{m}$.

\begin{definition}
  There is a constant $\IsTm[]{\Ty{m}}{\Open \Uni}$ in $\InterpGl{m}$ given by $\SynTy{m}$.
\end{definition}

\begin{definition}
  There is a type $z : \Prop, \Ty{m}(z) \vdash \Tm{m} : \Uni$ in $\InterpGl{m}$ whose total space is
  given by $\Mor[\SynEl{m}]{\SynTm{m}}{\SynTy{m}}$.
\end{definition}

\begin{definition}
  There is pair of types $\Open[z] \Ty{m}(z) \vdash \Nf{m}$ and $\Open[z] \Ty{m}(z) \vdash \Ne{m}$
  in $\InterpGl{m}$ given by normal and neutral forms viewed as a presheaf over
  $\InvRen{\SynTm{m}}$. Similarly, there is a closed type $\NfTy{m}$ of normal types lying over
  $\InvRen{\SynTy{m}}$.
\end{definition}

\begin{definition}
  There is a type $\Open[z] \Ty{m}(z) \vdash \Vars{m}$ given by viewing the presheaf of variables as
  lying over $\InvRen{\SynTm{m}}$. There is an induced inclusion $\EmbMor{\Vars{m}(A)}{\Ne{m}(A)}$
  which we shall treat as silent.
\end{definition}

Using realignment if necessary, we choose $\Nf{m}$, $\Ne{m}$, $\Vars{m}$ such that under the
assumptions $z : \Prop$ and $A : \Open[z] \Ty{m}(z)$, we have the following equations:
\[
  \Vars{m}(A) = \Ne{m}(A) = \Nf{m}(A) = \Tm{m}(z, A(z))
\]
Similarly, under just the assumption of $z$, we have $\NfTy{m} = \Ty{m}(z)$.

\subsection{An \MTT{} cosmos, internally}
\label{sec:prereq:internal-cosmos}

In \cref{sec:normalization-model} we will construct a model of \MTT{} in $\InterpGl{-}$ lying over
the syntactic model. To that end, we recast the existence of a model of \MTT{} into a structure we
can express in the internal language of $\InterpGl{-}$. Explicitly, we rephrase
\cref{def:cosmoi:cosmos} as a sequence of constants in the internal type theory \eg{}, a pair
$\Ty{m}$ and $\Ty{m} \vdash \Tm{m}$, combinators for all the type codes, \etc{} The full list of
constants is described in \cref{sec:appendix:cosmos}.

\begin{remark}
  There is potential for confusion here: we are using the internal language of $\InterpGl{-}$
  (extensional \MTT{}) as a framework to express the structure of an \MTT{} cosmos. This is not a
  circularity; the interpretation of extensional \MTT{} into $\InterpGl{-}$ is formulated in terms
  of the models and metatheory developed already in \textcite{gratzer:journal:2020}.
\end{remark}

Already, we can immediately obtain an \MTT{} cosmos internal to $\InterpGl{-}$ by restricting to the
open subtopos: $\InterpSyn{-}$. In this cosmos, types are interpreted as elements of the constant
$\Ty{m}(z)$ while terms are elements of $\Tm{m}(z, -)$. This the \emph{syntactic} \MTT{}
cosmos. Externally, this is the \MTT{} cosmos $\InterpSyn{-}$ where universes are interpreted by the
traditional representable natural transformations~\parencite{awodey:2018}. Eventually, we will
construct the normalization model as a second cosmos in $\InterpGl{-}$ which lies strictly over the
syntactic cosmos.

We will use the ``unqualified'' names such as $\PiConst(z)$ to refer to elements of this syntactic
cosmos. So, for instance, under the assumption $z : \Prop$ we the following:
\begin{align*}
  \ModConst &: \DeclVar{A}{\Ty{n}(z)} \to \Ty{m}(z)
  \\
  \ModIntroConst &: \DeclVar{A}{\Ty{n}(z)}<\mu>\DeclVar{a}{\Tm{n}(z, A)} \to \Tm{m}(\ModConst(A))
  \\
  \ModElimConst &:
  \DeclVar{A}{\Ty{n}(z)}<\nu\circ\mu>\brackets{\DelimMin{1}B : \Modify{\Tm{n}(z, \ModConst(A))} \to \Ty{o}(z)} \to\\
  &\qquad \brackets{\DelimMin{1}\DeclVar{x}{\Tm{n}(z, A)}<\nu\circ\mu> \to \Tm{o}(z, B(\ModIntroConst(A, x)))} \to\\
  &\qquad \DeclVar{a}{\Tm{m}(z, \ModConst(A))}<\nu> \to\\
  &\qquad \Tm{o}(z, B(a))\\
  \_ &:
  \DeclVar{A}{\Ty{n}(z)}<\nu\circ\mu>\brackets{\DelimMin{1}B : \Modify{\Tm{n}(z, \ModConst(A))} \to \Ty{o}(z)} \to\\
  &\qquad \brackets{\DelimMin{1}b : \DeclVar{x}{\Tm{n}(z, A)}<\nu\circ\mu> \to B(\ModIntroConst(A, x))} \to\\
  &\qquad \DeclVar{a}{\Tm{n}(z, A)}<\nu\circ\mu> \to \ModElimConst(A, B, b, \ModIntroConst(A, a)) = b(a)
\end{align*}

Now, because $\Open$ commutes with dependent products, sums, and modalities, we may equivalently
view, \eg{} $\PiConst$ as a constant with the following type:
\[
  \brackets{\Sum{A : \Modify{\Open[z] \Ty{n}(z)}} \Modify{\Open[z] \Tm{n}(z, A)} \to \Open[z] \Ty{m}(z)}
  \to \Open[z] \Ty{m}(z)
\]
Similarly, we will abusively write $\PiConst(A,B)$ for $\lambda z.\ \PiConst(A(z), B(-,
z))$.

\subsection{Higher-order abstract syntax for neutral and normal forms}
\label{sec:prereq:hoas}

The types $\Ne{m}$, $\Nf{m}$ can be used to encode a form of HOAS inside of
$\InterpGl{m}$~\parencite{hofmann:1999}. We record the constants that result from this in
\cref{sec:appendix:hoas}.

In the proof of normalization we will take advantage of the fact that under an open modality, a
normal and neutral form decode to the appropriate terms. Therefore, under the open modality
$\CProj{0}(\CPair(M,N))$ is well-typed and equal to $M$ along with other expected equations.


\section{The normalization model}
\label{sec:normalization-model}

We now construct a model of \MTT{} in $\InterpGl{-}$ which lies over the over the syntactic model of
\MTT{} built around $\Ty{m}$ and $\Tm{m}$ introduced in
\cref{sec:prereq:internal-cosmos}. Concretely, this means that we must construct a series of
constants ($\Ty*{m}$, $\Tm*{m}$, $\PiConst*$, \etc{}) in $\InterpGl{-}$ such that under $z : \Prop$
these constants are equal to their corresponding syntactic components ($\Ty{m}(z)$, $\Tm{m}(z, -)$,
$\PiConst(z)$, \etc{}). This family of constants will define a model in $\InterpGl{-}$ and the
`alignment' condition ensures that there is a morphism of models
(\cref{thm:normalization-model:model}). See \cref{sec:appendix:cosmos} for the full list of
constants. Throughout this section we will use extensional \MTT{} as an internal language for
$\InterpGl{-}$ (\cref{thm:prereq:model}) to define these constants. The strict equations will follow
from repeated applications of the realignment theorem (\cref{thm:prereq:realignment}). Many of
these computations will be familiar to readers experienced with STC. The main novelties are modal
connectives \cref{lem:normalization-model:pi,lem:normalization-model:mod}.

\begin{definition}
  \label{def:normalization-model:ty}
  We define $\Ty*{m}$ as the realignment of the following over $\Ty{m}$:
  \begin{equation*}
    \begin{make-rcd}{T}{\Ext{\Uni[2]}{z : \Prop}{\Ty{m}}}
      \Code{} : \NfTy{m}
      \\
      \Pred{} : \Ext{\Uni[1]}{z : \Prop}{\Tm{m}(z, \Code{})}
      \\
      \Reflect{} : \Ext{\Ne{m}(\Open \Code{}) \to \Pred{}}{z : \Prop}{\ArrId{}}
      \\
      \Reify{} : \Ext{\Pred{} \to \Nf{m}(\Open \Code{})}{z : \Prop}{\ArrId{}}
    \end{make-rcd}
  \end{equation*}
  Taking advantage of the isomorphism $\Ty*{m} \cong T$, we construct elements of $\Ty*{m}$ by
  specifying $\Code{}$, $\Pred{}$, $\Reflect{}$, and $\Reify{}$. By realignment,
  $\Unit(\braces{\Code{}; \Pred{}; \Reify{}; \Reflect{}}) = \Unit(\Code{})$.
\end{definition}

\begin{definition}
  \label{def:normalization-model:tm}
  We define $\IsTy[A : \Ty*{m}]{\Tm*{m}(A)}$ by $\Tm*{m}(\braces{\_;\Pred{};\_;\_}) = \Pred{}$.
\end{definition}

\begin{lemma}
  \label{lem:normalization-model:tm-align}
  $\Tm*{m}$ lies strictly over $\Tm{m}$. Explicitly for each $A : \Ty*{m} = \Ty{m}(z)$,
  $\Tm*{m}(A) = \Tm{m}(z, A)$.
\end{lemma}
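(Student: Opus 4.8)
The plan is to unfold the two definitions and then read off the conclusion from the coherence properties of internal realignment (\cref{thm:prereq:realignment}). Recall that $\Ty*{m}$ is the realignment of the record type $T$ of \cref{def:normalization-model:ty} over the partial type $\Ty{m}$, so that $\Ty*{m}$ is \emph{strictly} equal to $\Ty{m}$ under $z : \Prop$ and comes equipped with a global isomorphism $\Ty*{m} \cong T$; the crucial extra fact, supplied by the section law for $\Realign$, is that this isomorphism restricts under $z : \Prop$ to the very isomorphism $T \cong \Ty{m}(z)$ used to set up the realignment in the first place.

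The first step, then, is to pin down that isomorphism explicitly. Under the hypothesis $z : \Prop$, each of the three fields $\Pred{}$, $\Reflect{}$, and $\Reify{}$ of $T$ inhabits an extension type $\Ext{B}{z : \Prop}{b}$, which becomes a singleton (contractible with centre $b$) once $z : \Prop$ is assumed. Using the open-subtopos equations $\NfTy{m} = \Ty{m}(z)$, $\Ne{m}(A) = \Nf{m}(A) = \Tm{m}(z, A)$, and the triviality of $\Open$ under $z : \Prop$, these centres are $\Tm{m}(z, \Code{})$, $\ArrId{}$, and $\ArrId{}$ respectively. Hence the restriction of $T$ along $z : \Prop$ is isomorphic to $\Ty{m}(z)$ by projecting onto the $\Code{}$ field, the inverse sending $A$ to $\braces{\Code{} = A;\ \Pred{} = \Tm{m}(z, A);\ \Reflect{} = \ArrId{};\ \Reify{} = \ArrId{}}$ --- this is precisely the content of the equation $\Unit(\braces{\Code{};\Pred{};\Reify{};\Reflect{}}) = \Unit(\Code{})$ noted after \cref{def:normalization-model:ty}.

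It then remains to chase the definition of $\Tm*{m}$. Fix $z : \Prop$ and $A : \Ty*{m}$; by the strict alignment of $\Ty*{m}$ we have $A : \Ty{m}(z)$. By \cref{def:normalization-model:tm}, $\Tm*{m}(A)$ is the $\Pred{}$-component of the record corresponding to $A$ under $\Ty*{m} \cong T$; since this isomorphism restricts to the $\Code{}$-projection described above, the record corresponding to $A$ has $\Pred{}$-component $\Tm{m}(z, A)$, and therefore $\Tm*{m}(A) = \Tm{m}(z, A)$, as claimed.

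I do not anticipate any genuine difficulty: the one point requiring care is to cite \cref{thm:prereq:realignment} in exactly the form that guarantees the global isomorphism $\Ty*{m} \cong T$ restricts, under $\Open$, to the intended canonical one; everything else is the routine observation that the extension-typed fields of $T$ are forced under $z : \Prop$. This lemma is the template for the analogous strict-alignment claims made for each connective in the remainder of \cref{sec:normalization-model}.
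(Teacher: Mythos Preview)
Your proposal is correct and follows essentially the same route as the paper: both arguments use the extension-type boundary on $\Pred{}$ to get $\Tm*{m}(A) = \Pred{A} = \Tm{m}(z,\Code{A})$, and then invoke the realignment equation $\Unit(\braces{\Code{};\Pred{};\Reify{};\Reflect{}}) = \Unit(\Code{})$ to identify $A$ with $\Code{A}$ under $z : \Prop$. The paper's version is terser, but you have unfolded exactly the same two ingredients.
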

\begin{proof}
  Using the boundary condition on $\Pred{}$, we compute
  $\Tm*{m}(A) = \Pred{A} = \Tm{m}(z, \Code{A})$. Under the hypothesis $z : \Prop$, we obtain an
  equality $A = \Code{A}$ because $\Unit(A) = \Unit(\Code{A})$ by definition. The conclusion now
  follows.
\end{proof}

\begin{lemma}
  \label{lem:normalization-model:sig}
  Fixing $T_0 : \Ty*{m}$ and $T_1 : \Tm*{m}(T_0) \to \Ty*{m}$, there exists a pair of constants:
  \begin{align*}
    \SigConst* &: \Ext{\Ty*{m}}{z : \Prop}{\SigConst(z, T_0, T_1)}\\
    \alpha_{\SigConst*} &:
    \Ext{\Tm{m}(\SigConst*(T_0, T_1)) \cong \Sum{t : \Tm*{m}(T_0)} \Tm*{m}(T_1(t))}{z : \Prop}{\alpha_{\SigConst}(z,T_0,T_1)}
  \end{align*}
\end{lemma}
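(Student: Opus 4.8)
The plan is to follow the recipe of \cref{def:normalization-model:ty}: it suffices to exhibit the four fields $\Code{}$, $\Pred{}$, $\Reflect{}$, $\Reify{}$ of a member of the record $T$ satisfying their prescribed boundaries, whereupon \cref{thm:prereq:realignment} delivers $\SigConst*(T_0,T_1) : \Ty*{m}$ lying strictly over $\SigConst(z,T_0,T_1)$. For the code I would take $\Code{\SigConst*}$ to be the HOAS normal-type former for $\Sig{}{}$ (\cref{sec:prereq:hoas}) applied to $\Code{T_0}$ and the family $\lambda(x : \Vars{m}(\Open\Code{T_0})).\,\Code{T_1(\Reflect{T_0}(x))}$; since the HOAS normal-type formers decode to the syntactic type formers under $\Open$ and, there, $\Reflect{T_0}$ is the identity, this carries the required boundary $\Code{\SigConst*} = \SigConst(z,T_0,T_1)$ under $z:\Prop$. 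For the predicate I would let $\Pred{\SigConst*}$ be the realignment over $\Tm{m}(z,\Code{\SigConst*})$ of $\Sum{t:\Tm*{m}(T_0)}\Tm*{m}(T_1(t))$: by \cref{lem:normalization-model:tm-align} together with the boundary just established, this type is isomorphic under $\Prop$ to $\Tm{m}(z,\SigConst(z,T_0,T_1))$ via $\alpha_{\SigConst}(z,T_0,T_1)$, so realignment produces $\Pred{\SigConst*}$ with the strict boundary along with a global isomorphism $\Pred{\SigConst*} \cong \Sum{t:\Tm*{m}(T_0)}\Tm*{m}(T_1(t))$ restricting to $\alpha_{\SigConst}$; this isomorphism is $\alpha_{\SigConst*}$.

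For $\Reflect{\SigConst*}$, given $e : \Ne{m}(\Open\Code{\SigConst*})$ I would apply the HOAS neutral first projection to obtain a neutral over $\Open\Code{T_0}$, reflect it to get $t := \Reflect{T_0}(\mathsf{fst}(e)) : \Tm*{m}(T_0)$, then apply the HOAS neutral second projection to obtain a neutral over $\Open\Code{T_1(t)}$, reflect it to get $s := \Reflect{T_1(t)}(\mathsf{snd}(e)) : \Tm*{m}(T_1(t))$, and return $(t,s)$, transported along $\alpha_{\SigConst*}$ into $\Pred{\SigConst*}$. For $\Reify{\SigConst*}$, given such a pair I would output the HOAS normal pair $\langle \Reify{T_0}(t), \Reify{T_1(t)}(s)\rangle$. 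In both clauses the required boundary ($\ArrId{}$ under $z:\Prop$) holds because there $\Reflect{T_i}$ and $\Reify{T_i}$ are identities, the HOAS projections and pairing decode to the syntactic ones, and the $\eta$-law for $\Sig{}{}$ in the syntactic cosmos identifies $(\mathsf{fst}(e),\mathsf{snd}(e))$ with $e$ and $(t.\mathsf{fst}, t.\mathsf{snd})$ with $t$.

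The one delicate point is the typing in the $\Reflect{}$ clause: a priori the HOAS second projection of $e$ lives over $\Open\Code{T_1(-)}$ evaluated at the decoding of $\mathsf{fst}(e)$, and one must identify this with $\Open\Code{T_1(\Reflect{T_0}(\mathsf{fst}(e)))}$; this is exactly the content of the decoding equations of \cref{sec:prereq:hoas} (under $\Open$ a neutral decodes to its reflection) combined with the boundary condition on $\Code{T_1}$. Everything else is bookkeeping — the universe levels match since $\Tm*{m}(T_i) : \Uni[1]$, and the realignment steps are threaded exactly as in \cref{lem:normalization-model:tm-align}. This is the standard STC construction of dependent sums; the genuinely new modal arguments are carried out in \cref{lem:normalization-model:pi,lem:normalization-model:mod}.
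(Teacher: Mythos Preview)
Your proposal is correct and follows essentially the same approach as the paper's proof: realign $\Sum{t:\Tm*{m}(T_0)}\Tm*{m}(T_1(t))$ along $\alpha_{\SigConst}$ to obtain $\Pred{\SigConst*}$ together with $\alpha_{\SigConst*}$, set $\Code{\SigConst*} = \CSig(\Code{T_0}, \lambda v.\ \Code{T_1(\Reflect{T_0} v)})$, and define $\Reflect{}$/$\Reify{}$ via the HOAS projections/pairing composed with $\Reflect{T_i}$/$\Reify{T_i}$ and the isomorphism $\alpha_{\SigConst*}$. Your discussion of the typing subtlety for the second projection is a bit more explicit than the paper's, but the argument is the same.
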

\begin{proof}
  Let us start by apply \cref{thm:prereq:realignment} to $\alpha_{\SigConst(z)}$ and
  $\Sum{t_0 : \Pred{T_0}} \Pred{T_1(t_0)}$ to produce $\Psi : \Uni[1]$ such that
  $z : \Prop \vdash \Psi = \SigConst(z, T_0, T_1)$ and
  $\alpha_{\SigConst*} : \Psi \cong \Sum{t_0 : \Pred{T_0}} \Pred{T_1(t_0)}$ which restricts to
  $\alpha_{\SigConst}$ under $z : \Prop$.

  We now define $\SigConst*(T_0,T_1)$ as follows:
  \begin{align*}
    \Code{\SigConst*(T_0,T_1)} &= \CSig(\Code{T_0}, \lambda v.\ \Code{T_1(\Reflect{T_0} v)})
    \\
    \Pred{\SigConst*(T_0,T_1)} &= \Psi
    \\
    \Reflect{\SigConst*(T_0,T_1)} &=
    \lambda e.\ \alpha_{\SigConst*}^{-1} \gls{\Reflect{T_0}(\CProj{0}(e)), \Reflect{T_1(\Reflect{T_0}(\CProj{0}(e)))}(\CProj{1}(e))}
    \\
    \Reify{\SigConst*(T_0,T_1)} &= \lambda t.\ \CPair(\Reify{T_0}(\alpha_{\SigConst*}(t)_0), \Reify{T_1(\alpha_{\SigConst*}(t)_0)}(\alpha_{\SigConst*}(t)_1))
  \end{align*}
  The fact that $\Reify{}$ and $\Reflect{}$ lie over the identity follows directly from the $\beta$
  and $\eta$ laws of dependent sums in \MTT{}. We show the calculations for $\Reflect{}$. Fix $z :
  \Prop$:
  \begin{align*}
    \Reflect{\SigConst*(T_0,T_1)}(e)
    &= \alpha_{\SigConst*}^{-1} \gls{\Reflect{T_0}(\CProj{0}(e)), \Reflect{T_1(\Reflect{T_0}(\CProj{0}(e)))}(\CProj{1}(e))}\\
    &= \alpha_{\SigConst}^{-1} \gls{\CProj{0}(e), \CProj{1}(e)} \\
    &= \alpha_{\SigConst}^{-1} \gls{\alpha_{\SigConst(T_0,T_1)}(e)_0, \alpha_{\SigConst(T_0,T_1)}(e)_1} \\
    &= e
  \end{align*}

  The fact that $\Code{\SigConst(T_0,T_1)}$ and
  $\Pred{\SigConst*(T_0,T_1)}$ lie over $\SigConst(T_0,T_1)$ and $\Tm{m}(z, \SigConst(z, T_0, T_1))$
  respectively follows from definition and realignment.
\end{proof}

\begin{lemma}
  \label{lem:normalization-model:pi}
  Fixing $T_0 : (\mu \mid \Ty*{n})$ and $T_1 : \Fn{\Tm*{n}(T_0)}{\Ty*{m}}$, there exists a pair of constants:
  \begin{align*}
    \PiConst* &: \Ext{\Ty{m}}{z : \Prop}{\PiConst*(z, T_0, T_1)}\\
    \alpha_{\PiConst*} &:
    \Ext{\Tm{m}(\PiConst(T_0, T_1)) \cong \Prod{t : \Modify{\Tm*{n}(T_0)}} \Tm*{m}(T_1(t))}{z : \Prop}{\alpha_{\PiConst}(z,T_0,T_1)}
  \end{align*}
\end{lemma}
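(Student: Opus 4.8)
The plan is to follow the template of \cref{lem:normalization-model:sig} almost verbatim; the only genuinely new content is the bookkeeping forced by the modality $\mu$ and the need to use the higher-order abstract syntax combinators for \emph{modal} dependent products recorded in \cref{sec:prereq:hoas}. First I would invoke realignment (\cref{thm:prereq:realignment}) on $\alpha_{\PiConst}(z,T_0,T_1)$ and on the semantic product $\Prod{t : \Modify{\Tm*{n}(T_0)}} \Tm*{m}(T_1(t))$, obtaining $\Psi : \Uni[1]$ together with an isomorphism $\alpha_{\PiConst*} : \Psi \cong \Prod{t : \Modify{\Tm*{n}(T_0)}} \Tm*{m}(T_1(t))$ which under $z : \Prop$ degenerates to $\Psi = \PiConst(z,T_0,T_1)$ and to $\alpha_{\PiConst*} = \alpha_{\PiConst}(z,T_0,T_1)$. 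To see that the two realigned types agree under $z : \Prop$ one uses \cref{lem:normalization-model:tm-align} together with the fact (from \cref{thm:prereq:model}) that $\Open$ commutes with $\Modify{-}$. Setting $\Pred{\PiConst*(T_0,T_1)} := \Psi$ then supplies both $\Pred{}$ and the required $\alpha_{\PiConst*}$.

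Next I would fill in the remaining three fields. For $\Code{}$: working under the lock, $\Reflect{T_0}$ turns a fresh $\mu$-annotated variable $v$ of code $\Code{T_0}$ into an element $\Reflect{T_0}(v) : \Tm*{n}(T_0)$ behind $\mu$, so I set $\Code{\PiConst*(T_0,T_1)}$ to be the HOAS code for the modal dependent product over $\mu$ applied to $\Code{T_0}$ and $\lambda v.\ \Code{T_1(\MkBox{\Reflect{T_0}(v)})}$. For $\Reflect{}$: send a neutral $e$ to $\alpha_{\PiConst*}^{-1}$ of the function which, on input $t : \Modify{\Tm*{n}(T_0)}$, treats $t$ as $\MkBox{a}$ for a $\mu$-annotated $a : \Tm*{n}(T_0)$ — legitimate because $\Modify{-}$ is a proper dependent right adjoint here, cf.\ the remark following \cref{thm:prereq:model} — forms the normal form $\Reify{T_0}(a)$ behind $\mu$, applies $e$ to it via the HOAS modal application on neutrals, and reflects the result along $\Reflect{T_1(\MkBox{a})}$. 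For $\Reify{}$: $\eta$-expand, sending $f : \Psi$ to the HOAS modal $\lambda$-abstraction whose body at a $\mu$-annotated variable $v$ of code $\Code{T_0}$ is $\Reify{T_1(\MkBox{\Reflect{T_0}(v)})}\bigl(\alpha_{\PiConst*}(f)(\MkBox{\Reflect{T_0}(v)})\bigr)$.

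Finally I would check the boundary conditions. That $\Code{\PiConst*(T_0,T_1)}$ and $\Pred{\PiConst*(T_0,T_1)}$ lie over $\PiConst(T_0,T_1)$ and $\Tm{m}(z,\PiConst(z,T_0,T_1))$ is immediate from the definitions and realignment, exactly as in \cref{lem:normalization-model:sig}. That $\Reflect{}$ and $\Reify{}$ restrict to identities under $z : \Prop$ is the same style of computation displayed there: under $z : \Prop$ the HOAS combinators for modal application and modal abstraction degenerate to the corresponding term formers of \MTT{} (\cref{sec:prereq:hoas}), the nested $\Reify{T_0},\Reflect{T_0},\Reify{T_1(-)},\Reflect{T_1(-)}$ calls collapse by their own boundary conditions, and the residue vanishes by the $\beta$- and $\eta$-laws for the modal dependent product in \MTT{}. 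The associated modal $\lambda$-abstraction and application constants, and their computation rules, are then read off from $\alpha_{\PiConst*}$ together with the ambient $\beta$/$\eta$-laws, and do not require separate work.

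I expect the main obstacle to be this modal bookkeeping rather than anything conceptual: reification is forced to emit a \emph{modal} $\lambda$-abstraction, hence to work under a lock with a fresh $\mu$-annotated variable, while reflection is forced to perform modal neutral application, hence to destructure a term of type $\Modify{\Tm*{n}(T_0)}$ and carry its content behind the lock — and for all the types to line up one needs precisely the two structural facts that $\Open$ commutes with $\Modify{-}$ and that $\Modify{-}$ is interpreted by a genuine dependent right adjoint in $\InterpGl{-}$, both of which we have. The only remaining friction is verifying that the HOAS modal combinators satisfy the boundary equations invoked above, which is a finite, appendix-level check analogous to the ones already carried out for the non-modal connectives.
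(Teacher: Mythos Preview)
Your proposal is correct and follows essentially the same route as the paper: realign the semantic modal product along $\alpha_{\PiConst}$ to obtain $\Psi$ and $\alpha_{\PiConst*}$, then populate $\Code{}$, $\Reflect{}$, and $\Reify{}$ with exactly the HOAS combinators $\CPi$, $\CApp$, and $\CLam$ you describe, verifying the boundary conditions via the $\beta/\eta$ laws for modal dependent products. The only cosmetic difference is that the paper silently invokes the identification $\ECx{\Gamma}{A}<\mu> \cong \ECx{\Gamma}{\Modify{A}}<\ArrId{}>$ (recorded just after \cref{thm:prereq:internal-left-adj}) and so writes $T_1(\Reflect{T_0}\,v)$, $\Reify{T_0}(t)$, \etc{} without your explicit $\MkBox{-}$ wrappers; your more explicit bookkeeping is harmless and arguably clearer.
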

\begin{proof}
  Apply \cref{thm:prereq:realignment} to $\alpha_{\PiConst(z)}$ and
  $\FnV{t_0}{\Pred{T_0}}{\Pred{T_1(t_0)}}$ to produce $\Psi : \Uni[1]$ such that
  $z : \Prop \vdash \Psi = \PiConst(z, T_0, T_1)$ and
  $\alpha_{\PiConst*} : \Psi \cong \FnV{t_0}{\Pred{T_0}}{\Pred{T_1(t_0)}}$ which restricts to
  $\alpha_{\PiConst}$ under $z : \Prop$. We now define $\PiConst(T_0,T_1)$ as follows:
  \begin{align*}
    \Code{\PiConst*(T_0,T_1)} &= \CPi(\Code{T_0}, \lambda v.\ \Code{T_1(\Reflect{T_0} v)})
    \\
    \Pred{\PiConst*(T_0,T_1)} &= \Psi
    \\
    \Reflect{\PiConst*(T_0,T_1)} &= \lambda e.\ \alpha_{\PiConst}^{-1}(\lambda t.\ \Reflect{T_1(t)}(\CApp(e)(\Reify{T_0}(t))))
    \\
    \Reify{\PiConst*(T_0,T_1)} &= \lambda t.\ \CLam(\lambda v.\ \Reflect{T_1(\Reflect{T_0}(v))}(\alpha_{\PiConst}(t)(\Reflect{T_0}(v))))
  \end{align*}
  Once again, the fact that everything lies over the correct terms follows from the $\beta$ and
  $\eta$ laws for dependent products, as well as the conclusions of realignment.
\end{proof}

\begin{lemma}
  \label{lem:normalization-model:mod}
  Fixing $T : (\mu \mid \Ty*{n})$, the following constants exist:
  \begin{align*}
    \ModConst* &: \Ext{\Ty{m}}{z : \Prop}{\ModConst(z, T)}
    \\
    \ModIntroConst* &:
    \DeclVar{a}{\Tm*{n}(T)} \to \Ext{\Tm*{m}(\ModConst*(T))}{z : \Prop}{\ModIntroConst(z, T, a)}
    \\
    \ModElimConst* &: \brackets{\DelimMin{1}B : \Modify[\nu]{\Tm*{n}(\ModConst*(T))} \to \Ty*{o}}\\
    &\to \brackets{b : \DelimMin{1}\DeclVar{x}{\Tm*{n}(T)}<\nu\circ\mu> \to B(\ModIntroConst*(T, x))}\\
    &\to \DeclVar{a}{\Tm*{m}(\ModConst*(T))}<\nu>\\
    &\to \Ext{\Tm*{o}(B(a))}{z : \Prop}{\ModElimConst(z, T, B, b, a)}
    \\
    \_ &: \brackets{\DelimMin{1}B : \Modify[\nu]{\Tm*{n}(\ModConst*(T))} \to \Ty*{o}}\\
    &\to \brackets{\DelimMin{1}b : \DeclVar{x}{\Tm*{n}(T)}<\nu\circ\mu> \to \Tm*{o}(B(\ModIntroConst*(T, x)))}\\
    &\to \DeclVar{a}{\Tm*{n}(T)}<\nu\circ\mu>\\
    &\to \ModElimConst*(B, b, \ModIntroConst*(a)) = b(a)
  \end{align*}
\end{lemma}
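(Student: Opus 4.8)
The plan is to produce the four components of an element of $\Ty*{m}$ --- a code, a predicate, and reflect/reify maps --- following the recipe used for $\SigConst*$ and $\PiConst*$ in \cref{lem:normalization-model:sig,lem:normalization-model:pi}, with the modal connective requiring genuinely new ingredients only in the predicate and the eliminator. For the code I would take $\Code{\ModConst*(T)} := \NfModify{\Code{T}}$, the normal-type former for $\Modify[\mu]{-}$ applied under the lock to $\Code{T}$; under $z : \Prop$ it decodes to $\ModConst(z, T)$, so by the realignment built into \cref{def:normalization-model:ty} the whole of $\ModConst*(T)$ restricts correctly. The real content is the predicate. Unlike $\SigConst*$ and $\PiConst*$, the modality is a \emph{positive} connective: its computability structure must be closed under neutrals, and there is no isomorphism presenting $\Tm{m}(z, \ModConst(z, T))$ as a purely semantic type to realign against (which is why no $\alpha_{\ModConst*}$ appears in the statement). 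Instead I would assemble $\Pred{\ModConst*(T)}$ by fracture (\cref{thm:prereq:fracture}): take the open part to be the syntactic $\Tm{m}(z, \ModConst(z, T))$, and the closed part to carry $\Closed$-modal data witnessing that an element is either $\MkBox{-}$ of a computability datum in $\Modify[\mu]{\Tm*{n}(T)}$ or an injected neutral in $\Ne{m}(\Open \Code{\ModConst*(T)})$. Since the closed witness is trivial under $z : \Prop$, this type restricts to $\Tm{m}(z, \ModConst(z, T))$ up to the evident isomorphism, which \cref{thm:prereq:realignment} promotes to a strict equality. With $\Pred{}$ fixed, $\ModIntroConst*(a)$ is the box witness together with $\NfMkBox{\Reify{T}(a)}$, $\Reflect{\ModConst*(T)}$ sends a neutral $e$ to the neutral witness together with $\NfInj{e}$, and $\Reify{\ModConst*(T)}$ cases on the witness, emitting a box normal form built from the reified contents or $\NfInj{e}$ (using \cref{sec:prereq:hoas} for the higher-order-abstract-syntax constructors).

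The heart of the proof is the eliminator, which I would again build by fracture. The open component of $\ModElimConst*(B, b, a)$ is \emph{declared} to be the syntactic $\ModElimConst(z, T, B, b, a)$, which immediately discharges the boundary condition; the closed component is defined by $\Closed$-induction on the witness carried by $a$, which is legitimate because the closed part of the target $\Tm*{o}(B(a))$ is itself $\Closed$-modal. In the box case the witness exhibits $a$ as $\ModIntroConst*$ of some $a'$; here the composition laws for the locks $\LockCx{-}$ arrange that the contents of a box destructed behind the outer modality carry precisely the composite annotating the bound variable of $b$, so we return $b(a')$, whose type $B(\ModIntroConst*(a'))$ is definitionally $B(a)$. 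In the neutral case the elimination is stuck: we reflect, at type $B(a)$, the neutral form for the stuck modal elimination assembled from the neutral $e$, the reified motive $B$, and the normal form obtained by reifying $b$ evaluated at a reflected variable. Gluing the two components with \cref{thm:prereq:fracture} yields $\ModElimConst*$, and the $\beta$-law follows because on the open part it is the syntactic $\beta$-law for $\ModConst$, while on the closed part $\ModIntroConst*(a)$ carries the box witness, so the case split selects the first branch and returns $b(a)$.

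I expect the main obstacle to be bookkeeping, on two fronts. The first is modal arithmetic: one must check that every modality annotation lines up, which comes down to verifying that $\Modify[\mu]{-}$ commutes suitably with $\Open$, with $\Closed$, and with the reify and reflect maps --- so that, for instance, $\Reify{T}$ applied under $\Modify[\mu]{-}$ still restricts to the identity under $z : \Prop$, using that $\Reify{T}$ does and that $\Modify[\mu]{-}$ preserves $\Open$ --- and that the composition and $2$-cell laws of $\Mode$ make the box-contents modality (both in the box branch of the eliminator and in the method stored in the stuck neutral) agree with the modality that $b$ expects. The second is fracture coherence: the open component (the syntactic eliminator) and the closed component (the $\Closed$-case split) must be compatible over $\Closed\Open$ of the result type, which in the box branch again reduces to the syntactic $\beta$-law and in the neutral branch to the definitional behaviour of $\Reflect{}$ on a reflected variable. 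Verifying these, together with the residual strictness obligations cleared by \cref{thm:prereq:realignment}, is where the genuine work lies; the non-modal connectives require no such analysis and go through as in the standard \MLTT{} development.
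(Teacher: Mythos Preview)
Your plan matches the paper's: take $\Pred{\ModConst*(T)}$ to be a normal form of modal type paired with $\Closed$-hidden evidence that it is either an injected neutral or $\CMkBox$ of a reified element of $\Pred{T}$, realign, and build the eliminator by induction on that evidence. Two differences are worth flagging. The paper stores the normal form as an explicit first $\Sigma$-component, with equations tying each witness branch to it, so that $\Reify{\ModConst*(T)}$ is literally first projection rather than a case split; your fracture-based phrasing is equivalent but obscures that a concrete carrier must be written down and then realigned, and your reify-by-casing is redundant once the normal form is carried. More importantly, the step you file under ``bookkeeping'' is where the genuine content lies: the scrutinee of the eliminator lives in $\Modify[\nu]{\Tm*{m}(\ModConst*(T))}$, and before any case analysis on the witness is possible one must commute $\Modify[\nu]{-}$ past the $\Sigma$-type, past $\Closed$, and past the binary coproduct. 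Commutation with $\Closed$ is \cref{thm:prereq:model}; commutation with $+$ is \cref{thm:prereq:internal-left-adj} and uses specifically that each modality in $\InterpGl{-}$ is also a \emph{left} adjoint. This is not generic modal arithmetic, and the argument does not go through without it.
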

\begin{proof}
  As is typical now, we will proceed by realigment. We define $\Phi$ to be the realignment of the
  following type along the subsequent isomorphism:
  \begin{align*}
    \Psi
    &= \Sum{m : \Nf{m}(\ModConst(T))}
      \Closed \brackets{
        \parens{\Sum{e : \Ne{m}(\ModConst(T))} \CNfInj(e) = m}
      + \parens{\Sum{a : \Modify{\Pred{T}}} \CMkBox(\Reify{T} a) = m)}
      }
    \\
    \alpha_{\Open} &:
    \Open \Psi \cong \Sum{m : \Open \Nf{m}(\ModConst(T))} \ObjTerm{} \cong \Open[z] \Tm{m}(z, \ModConst(z, T))
  \end{align*}
  This gives a type $\Phi$ and an isomorphism $\alpha : \Phi \cong \Psi$ such that
  $\alpha^{-1}(\gls{m, \dots}) = m$ in a context with $z : \Prop$.

  We now use this to define $\ModConst*(T)$:
  \begin{align*}
    \Code{\ModConst*(T)} &= \CModify(\Code{T})
    \\
    \Pred{\ModConst*(T)} &= \Phi
    \\
    \Reflect{\ModConst*(T)} &= \lambda e. \alpha^{-1}\gls{\CNfInj(e), \Unit(\In{0}(\gls{e, \Ax}))}
    \\
    \Reify{\ModConst*(T)} &= \pi_0 \circ \alpha
  \end{align*}

  Unlike with dependent sums and products, however, this is not the end of the story. We must also
  define the introduction and elimination forms for this type. First, the intro form:
  \[
    \ModIntroConst*(T, m) = \alpha^{-1}\gls{\CMkBox(\Reify{T}(m)), \Unit(\In{1}(m, \Ax))}
  \]
  In order to define the elimination form, we need the elimination principle for $\Closed A$. Recall
  $\Closed A = \Prop \Join A$ so we may use the induction principle of a pushout on $\Closed A$.

  With this in mind, we define the elimination rule $\ModElimConst*(T, T_m, b, s)$ as
  follows. First, recall that $s : \Modify[\nu]{\Tm*{m}(\ModConst*(T))}$. We use the fact that
  $\Modify[\nu]{-}$ preserves dependent sums, the closed modality (\cref{thm:prereq:model}), and
  coproducts (\cref{thm:prereq:internal-left-adj}) to decompose $s$ into two terms:
  \begin{align*}
    m &: \Modify[\nu]{\Nf{n}(\ModConst(T))}
    \\
    q &:
    \Closed
    \brackets{\Sum{e : \Modify[\nu]{\Ne{m}(\ModConst(T))}} \Modify[\nu]{\CNfInj(e) = m}
    + \Sum{a : \Modify[\nu\circ\mu]{\Pred{T}}} \Modify[\nu]{\CMkBox(\Reify{T} a) = m})}
  \end{align*}
  We may now perform induction on $q$:
  \[
    \begin{cases}
      \ModElimConst(z, T, T_m, b, s) & q = \In{0}(z)
      \\
      \Reify{T_m(s)}\ \CLetMod(T, T_m, \lambda x.\ \Reify{T_m(\ModIntroConst(\Reflect{T} x))} b(\Reflect{T} x), e)
      & q = \In{1}(\In{0}(\MkBox[\nu]{e}, \_))
      \\
      b(a) & q = \In{1}(\In{1}(\MkBox[\nu]{a}, \_))
    \end{cases}
  \]
  The fact that these agree on overlaps follows from the $\beta$ rule for modal types from
  \MTT{}. By construction, the elimination and introduction forms lie over their syntactic
  counterparts, and calculation shows that the $\beta$ equation holds.
\end{proof}

\begin{lemma}
  \label{lem:normalization-model:bool}
  The following constants exist:
  \begin{align*}
    \BoolConst* &: \Ext{\Ty*{m}}{z : \Prop}{\BoolConst(z)}
    \\
    \TrueConst* &: \Ext{\Tm*{m}(\BoolConst)}{z : \Prop}{\TrueConst}
    \\
    \FalseConst* &: \Ext{\Tm*{m}(\BoolConst)}{z : \Prop}{\FalseConst}
    \\
    \IfConst* &: (T : \Tm*{m}(\BoolConst(z)) \to \Ty*{m}) \to\\
    &\qquad \Tm*{m}(T(\TrueConst*)) \to \Tm*{m}(T(\FalseConst*)) \to (b : \Tm*{m}(\BoolConst*)) \to\\
    &\qquad \braces{\Tm*{m}(T(b)) \mid z : \Prop \mapsto \IfConst(T, t, f, b)}
    \\
    \_ &: (T : \Tm*{m}(\BoolConst*) \to \Ty*{m})(t : \Tm*{m}(T(\TrueConst*)))(f : \Tm*{m}(T(\FalseConst*))) \to\\
    &\qquad (\IfConst*(T, t, f, \TrueConst*) = t) \times (\IfConst*(T, t, f, \FalseConst*) = f)
  \end{align*}
\end{lemma}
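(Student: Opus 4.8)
The plan is to reuse the realignment recipe from the proofs of \cref{lem:normalization-model:sig,lem:normalization-model:pi} and to follow the structure of \cref{lem:normalization-model:mod}, but with the simplification that booleans do not interact with the modal structure: this case is an instance of the standard STC treatment of a (very simple) inductive type, transported unchanged into the multimodal setting.

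First I would fix the computability datum at the boolean type. Writing $\iota$ for the inclusion of neutral booleans into normal forms (which exists since $\BoolConst$ has no $\eta$-law), set
\begin{equation*}
  \Psi \;=\; \Sum{u : \Nf{m}(\BoolConst)}\ \Closed\parens{ (u = \mathsf{tt}) + (u = \mathsf{ff}) + \Sum{e : \Ne{m}(\BoolConst)} (\iota(e) = u) }.
\end{equation*}
Since the open modality annihilates the closed modality and $\Open\Nf{m}(\BoolConst) \cong \Open[z]\Tm{m}(z, \BoolConst(z))$ by the chosen equations of \cref{sec:prereq}, there is an isomorphism $\alpha_\Open : \Open\Psi \cong \Open[z]\Tm{m}(z, \BoolConst(z))$; realigning $\Psi$ along $\alpha_\Open$ via \cref{thm:prereq:realignment} yields $\Phi \cong \Psi$ with $\Phi$ strict over $\Tm{m}(z,\BoolConst(z))$ under $z : \Prop$, exactly as in the $\Sigma$- and $\Pi$-cases. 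I then build $\BoolConst*$ as an element of $\Ty*{m}$ by taking $\Code{\BoolConst*}$ to be the boolean normal type, $\Pred{\BoolConst*} = \Phi$, $\Reify{\BoolConst*}$ the underlying-normal-form projection (through $\Phi \cong \Psi$), and $\Reflect{\BoolConst*}(e)$ the element with underlying normal form $\iota(e)$ and closed datum $\Unit$ of the third injection; the record realignment built into $\Ty*{m}$ (\cref{def:normalization-model:ty}) then makes $\BoolConst*$ strict over $\BoolConst$. The introduction forms $\TrueConst*$, $\FalseConst*$ are the elements with underlying normal forms $\mathsf{tt}$, $\mathsf{ff}$ and closed data $\Unit$ of the first, resp.\ second, injection. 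That $\Reflect{\BoolConst*}$ and $\Reify{\BoolConst*}$ lie over the identity, and that $\TrueConst*$, $\FalseConst*$ lie over $\TrueConst$, $\FalseConst$, follows from the realignment boundary conditions, since under $\Open$ the closed datum is trivial and normal and neutral forms decode to the corresponding terms (\cref{sec:prereq:hoas}).

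The eliminator demands a little more work. Given $T$, $t$, $f$ and $b : \Tm*{m}(\BoolConst*)$, transport $b$ across $\Phi \cong \Psi$ to a pair $(u, q)$ with $q$ an element of the closed-modal coproduct, and define $\IfConst*(T, t, f, b)$ by the induction principle for $\Closed$ coming from its presentation as the pushout $\Prop \Join (-)$. On the $\Prop$-corner (so $z : \Prop$) return the syntactic $\IfConst(T, t, f, b)$, which is well-typed because under $z : \Prop$ all glued data collapse onto their syntactic counterparts. On the coproduct corner, case split: the first summand forces $b = \TrueConst*$, so return $t$; the second forces $b = \FalseConst*$, so return $f$; for a neutral $e$ return the reflection $\Reflect{T(b)}$ of the ``if'' neutral form assembled, via the HOAS interface of \cref{sec:prereq:hoas}, from the motive $\lambda e'.\,\Code{T(\Reflect{\BoolConst*}(e'))}$, the branches $\Reify{T(\TrueConst*)}(t)$ and $\Reify{T(\FalseConst*)}(f)$, and the scrutinee $e$. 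The coherence condition of the pushout induction---that these three branches agree, over $z : \Prop$, with the $\Prop$-branch---holds by the $\beta$-laws for syntactic booleans (as $\TrueConst*$, $\FalseConst*$ restrict to $\TrueConst$, $\FalseConst$) and, in the neutral case, because $\Reflect{T(b)}$ restricts to the identity and the ``if'' neutral decodes to $\IfConst$ under $\Open$. The two $\beta$-laws for $\IfConst*$ then hold on the nose: $\TrueConst*$ and $\FalseConst*$ exhibit $q$ through a coproduct injection, so the pushout recursor computes to the first, resp.\ second, branch, yielding $t$, resp.\ $f$; and $\IfConst*$ lies over $\IfConst$ because over $z : \Prop$ the recursor selects the $\Prop$-branch. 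Strictness of $\BoolConst*$, $\TrueConst*$, $\FalseConst*$, $\IfConst*$ over their syntactic counterparts follows from realignment and the boundary conditions, exactly as before.

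I expect the one genuine subtlety to be the bookkeeping around the ``if'' neutral in the eliminator: one must check that the motive $\lambda e'.\,\Code{T(\Reflect{\BoolConst*}(e'))}$, once reified as a normal type over a telescope extended by a boolean variable and then specialised at the two boolean constants, produces exactly the normal types $\Code{T(\TrueConst*)}$ and $\Code{T(\FalseConst*)}$ demanded as the types of the two branches---this is where the substitution lemmas for normal forms from \cref{sec:renamings} together with the $\beta$-law of $\Reflect{\BoolConst*}$ enter---and that the whole family of constants meets the realignment boundary conditions definitionally. Beyond this, the construction is a routine instance of the STC pattern already executed for dependent sums, dependent products, and modal types, in line with the remark in the introduction that connectives not interacting with modalities transfer to the multimodal setting without alteration.
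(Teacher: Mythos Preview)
Your proposal is correct and follows essentially the same approach as the paper: realign a closed-modal tagged sum of normal forms to obtain the computability predicate, take reification as the first projection and reflection via the neutral injection, and define the eliminator by pushout induction on $\Closed$ with the syntactic $\IfConst$ on the $\Prop$-corner, $t$/$f$ on the canonical summands, and reflection of the $\CIf$ neutral on the neutral summand. The only cosmetic differences are that the paper packages the two canonical cases as a single summand indexed by $\mathbf{2}$ rather than your ternary coproduct, and that the paper's $\CIf$ HOAS constant takes a \emph{syntactic} motive $A : \Open\Tm{m}(\BoolConst) \to \Open\Ty{m}$ rather than a normal-type motive, so your anticipated ``subtlety'' about specialising $\lambda e'.\,\Code{T(\Reflect{\BoolConst*}(e'))}$ at the constants does not in fact arise---one simply passes $T$ (viewed under $\Open$) as the motive.
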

\begin{proof}
  The proof proceeds much as \cref{lem:normalization-model:mod}, so we present the relevant definitions
  without much commentary. First, we realign $\Psi$ along $\alpha_{\Open}$:
  \begin{align*}
    \Psi &= \Sum{m : \Nf{m}(\BoolConst*)}
    \Closed \brackets{
      \parens{\Sum{e : \Ne{m}(\BoolConst)} \CNfInj(e) = m}
      +
      \parens{\Sum{b : \mathbf{2}} m = \mathsf{rec}_{\mathbf{2}}(b; \NfTrue; \NfFalse))}
    }
    \\
    \alpha_{\Open} &: \Open[z] \Psi \cong \Sum{m : \Nf{m}(\BoolConst*)} \ObjTerm{} \cong \Tm{m}(z, \BoolConst(z))
  \end{align*}
  As a result, we obtain $\Phi$ along with $\alpha : \Phi \cong \Psi$ such that
  $z : \Prop \vdash \alpha^{-1}\gls{m, \dots} = m$. We may now define $\BoolConst*$:
  \begin{align*}
    \Code{\BoolConst*} &= \CBool
    \\
    \Pred{\BoolConst*} &= \Phi
    \\
    \Reflect{\BoolConst*} &= \lambda e. \alpha^{-1}\gls{\CNfInj(e), \Unit(\In{0}(e, \Ax))}
    \\
    \Reify{\BoolConst*} &= \pi_0 \circ \alpha
  \end{align*}

  The true and false constants are defined as follows:
  \begin{align*}
    \TrueConst* &= \gls{\CTrue, \Unit(\In{1}(0, \Ax))}\\
    \FalseConst* &= \gls{\CFalse, \Unit(\In{1}(1, \Ax))}
  \end{align*}

  Again, the elimination principle is defined using the induction principle for $\Closed A$.
  \begin{align*}
    \IfConst*&(T_m, t_0, t_1, s = \alpha^{-1} \gls{b, q}) =\\
    &
    \begin{cases}
      \IfConst(z, T_m, t_0, t_1, m) & q = \In{0}(z)
      \\
      \Reify{T_m(s)} \CIf(T_m, \Reify{T_m(\TrueConst*)} t_0, \Reify{T_m(\FalseConst*)} t_1, e)
      & q = \In{1}(\In{0}(e, \Ax)), \CNfInj(e) = b
      \\
      t_i & q = \In{1}(\In{1}(i, \Ax)), \mathbf{rec}(i; \NfTrue; \NfFalse) = b
    \end{cases}
  \end{align*}
  The boundary conditions and the computation rule follow from computation.
\end{proof}

\begin{lemma}
  \label{lem:normalization-model:uni}
  The glued model supports a universe. Specifically, this means that the following constants exist:
  \begin{align*}
    \UniConst* &: \Ext{\Ty*{m}}{z : \Prop}{\UniConst}
    \\
    \DecConst* &: (T : \Tm*{m}(\UniConst*)) \to \Ext{\Ty*{m}}{z : \Prop}{\DecConst(T)}
  \end{align*}

  Moreover, $\UniConst*$ is closed under various type-formers:
  \begin{align*}
    \SigCodeConst* &:
    \Ext{
      \brackets{\Sum{A : \Tm*{m}(\UniConst*)} \Tm*{m}(\DecConst*(A)) \to \Tm*{m}(\UniConst)} \to \Tm*{m}(\UniConst*)
    }{z : \Prop}{\SigCodeConst}
    \\
    \PiCodeConst* &:
    \Ext{
      \brackets{\Sum{A : \Modify{\Tm*{n}(\UniConst*)}} \brackets{\Modify{\Tm*{n}(\DecConst*(A))} \to \Ty*{m}}} \to \Tm*{m}(\UniConst*)
    }{z : \Prop}{\PiCodeConst}
    \\
    \BoolCodeConst* &: \Ext{\Tm*{m}(\UniConst*)}{z : \Prop}{\BoolCodeConst}
    \\
    \ModCodeConst* &: \Ext{\Modify{\Tm*{n}(\UniConst*)} \to \Tm*{m}(\UniConst*)}{z : \Prop}{\ModCodeConst}
    \\
    \DecIsoConst*_{\SigCodeConst} &:
    (A : \Tm*{m}(\UniConst*))(B : \Tm*{m}(\DecConst*(A)) \to \Tm*{m}(\UniConst*)) \to\\
    &\qquad \Tm*{m}(\DecConst*(\SigCodeConst*(A,B))) \cong \Tm*{m}(\SigConst*(\DecConst*(A), \DecConst* \circ B))
    \\
    \DecIsoConst*_{\PiCodeConst} &:
    \DeclVar{A}{\Tm*{m}(\UniConst*)}(B : \Fn{\Tm*{n}(\DecConst*(A))}{\Tm*{m}(\UniConst*)}) \to\\
    &\qquad \Tm*{m}(\DecConst*(\PiCodeConst*(A,B))) \cong \Tm*{m}(\PiConst*(\DecConst*(A), \DecConst* \circ B))
    \\
    \DecIsoConst*_{\BoolCodeConst} &: \Tm*{m}(\DecConst*(\BoolCodeConst*)) \cong \Tm*{m}(\BoolConst*)
    \\
    \DecIsoConst*_{\ModCodeConst} &:
    \DeclVar{A}{\Tm*{m}(\UniConst*)} \to
    \Tm*{m}(\DecConst*(\ModCodeConst*(A))) \cong \Tm*{m}(\ModConst*(\DecConst*(A)))
  \end{align*}
  Furthermore, when given $z : \Prop$ we require the following equations:
  \begin{align*}
    \DecIsoConst*_{\SigCodeConst} &= \DecIsoConst_{\SigCodeConst}(z)
    &
    \DecIsoConst*_{\PiCodeConst} &= \DecIsoConst_{\PiCodeConst}(z)
    &
    \DecIsoConst*_{\BoolCodeConst} &= \DecIsoConst_{\BoolCodeConst}(z)
    &
    \DecIsoConst*_{\ModCodeConst} &= \DecIsoConst_{\ModCodeConst}(z)
  \end{align*}
\end{lemma}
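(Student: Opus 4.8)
The plan is to construct $\UniConst*$ by the realignment pattern of \cref{lem:normalization-model:mod,lem:normalization-model:bool}, but with the twist --- anticipated in \cref{sec:stc:nbg} --- that the predicate on a syntactic code $A : \Tm{m}(z,\UniConst(z))$ must carry the full \emph{computability datum of the decoded type}. Working at the appropriate level of the cumulative hierarchy of \cref{thm:prereq:model}, I would realign
\[
  \Psi_\Uni = \Sum{\hat A : \Ty*{m}}\Sum{u : \Nf{m}(\UniConst*)} \parens{\Code{\hat A} =_{\NfTy{m}} \NfDec{u}}
\]
along the isomorphism $\alpha_\Open : \Open\Psi_\Uni \cong \Open[z]\Tm{m}(z,\UniConst(z))$ which, under $z : \Prop$, uses $\Ty*{m} = \Ty{m}(z)$, $\Nf{m} = \Tm{m}$, $\NfTy{m} = \Ty{m}(z)$, and $\NfDec{u} = \Dec u$ to contract the singleton $\Code{\hat A} = \NfDec u$ and project onto $u$. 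This yields $\Phi_\Uni$ with $\alpha : \Phi_\Uni \cong \Psi_\Uni$ and $z : \Prop \vdash \alpha^{-1}\gls{\hat A, u, \_} = u$. We then set $\Code{\UniConst*} = \NfUni$ (the normal type denoting the universe), $\Pred{\UniConst*} = \Phi_\Uni$, and $\DecConst*(\alpha^{-1}\gls{\hat A, u, \_}) = \hat A$; the boundary condition $z : \Prop \vdash \DecConst*(A) = \Dec A$ is immediate from the contraction.

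The crux is $\Reflect{\UniConst*}$, which requires a \emph{type-level} reflection operation not present in \cref{def:normalization-model:ty}: from a neutral code $e : \Ne{m}(\Open\NfUni)$ we must manufacture the computability datum of the neutral type it denotes. I would define an auxiliary $\mathsf{neTy}(e) : \Ty*{m}$ whose code is the normal type $\NfDec{e}$ (using that a neutral code is in particular a normal code), whose predicate is $\Ne{m}(\Open\Code{\mathsf{neTy}(e)})$ (realigned so as to equal $\Tm{m}(z,-)$ under $z : \Prop$), whose reflect map is the identity, and whose reify map is the inclusion of neutral terms at a neutral type into normal forms (\cref{sec:appendix:neutrals-and-normals}). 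Then $\Reflect{\UniConst*}(e) = \alpha^{-1}\gls{\mathsf{neTy}(e), e, \_}$ and $\Reify{\UniConst*}(\alpha^{-1}\gls{\hat A, u, \_}) = u$; both restrict to the identity under $z : \Prop$ by construction, since there $\alpha^{-1}\gls{\hat A, u, \_}$ reduces to $u$ and $\mathsf{neTy}(e)$ reduces to $\Dec e$.

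It remains to supply the codes and the decoding isomorphisms. For each connective I would take the normal-form component of the code constructor to be the corresponding normal code built from the normal codes of the arguments, and take its decoding component $\DecConst*$ to be an element of $\Ty*{m}$ whose code is exactly that normal code and whose predicate, reflect, and reify maps are those of the matching type former of \cref{lem:normalization-model:sig,lem:normalization-model:pi,lem:normalization-model:mod,lem:normalization-model:bool}, transported along the \emph{syntactic} isomorphism $\DecIsoConst_{(-)}$ and realigned (\cref{thm:prereq:realignment} on $\Iso(-)$) so as to lie strictly over $\DecConst_{(-)}$ of the syntactic constant; the coherence $\Code{\hat A} = \NfDec u$ holds by construction. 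Each $\DecIsoConst*_{(-)}$ is then the resulting transport isomorphism, whose restriction under $z : \Prop$ is the syntactic $\DecIsoConst_{(-)}(z)$ by the realignment, discharging the stated equations. I expect the main obstacle to be the neutral-type construction $\mathsf{neTy}$ --- equivalently, recognizing that the computability data of the universe must recursively package computability data of types --- together with checking that $\Reify{\mathsf{neTy}(e)}$, and hence $\Reify{\UniConst*}$, genuinely lands in normal forms with the correct boundary. The rest is realignment bookkeeping of the same flavour as the earlier lemmas, made slightly more delicate than in the strict-universe setting because the decoding laws of \MTT{} are only isomorphisms; notably, unlike \cref{lem:normalization-model:mod,lem:normalization-model:bool}, no $\Closed$-modality case split is needed, since the universe has no dependent eliminator --- only the decoding operation and its weak computation laws.
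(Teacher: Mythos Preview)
Your approach is essentially the paper's, and all the conceptual ingredients are right: the predicate of the universe carries a full computability datum, $\DecConst*$ projects it out, reflection at the universe fabricates the ``neutral type'' whose predicate is neutrals (your $\mathsf{neTy}$ is exactly the paper's $\Reflect{\UniConst*}(e) = \alpha^{-1}\gls{\CNfInj(e); \Ne{m}; \ArrId{}; \CNfInj}$), and each code is built by realigning the corresponding type-former's predicate along the syntactic $\DecIsoConst_{(-)}$.

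The one packaging difference worth flagging is the universe level. You write $\hat A : \Ty*{m}$, but $\Ty*{m}$ was realigned as an element of $\Uni[2]$ with $\Pred{} : \Uni[1]$; your $\Psi_\Uni$ then lives in $\Uni[2]$ and cannot serve as $\Pred{\UniConst*} : \Uni[1]$. Your phrase ``working at the appropriate level of the cumulative hierarchy'' gestures at the fix but does not supply it. The paper handles this by \emph{not} reusing $\Ty*{m}$: it introduces a fresh record $T$ identical in shape to the one defining $\Ty*{m}$ except that $\Code{} : \Nf{m}(\UniConst)$ (a normal \emph{code} rather than a normal type) and $\Pred{} : \Uni[0]$, so that $T : \Uni[1]$. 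Your triple $(\hat A, u, \Code{\hat A} = \NfDec{u})$ is equivalent to this record once you replace $\Ty*{m}$ by such a small copy; the paper's choice simply avoids the coherence field by deriving the normal-type code as $\NfDec{\Code{}}$ inside $\DecConst*$. With that adjustment your argument goes through, and the rest of your sketch (including the observation that no $\Closed$ case-split is needed) matches the paper.
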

\begin{proof}
  At this point we take advantage of the fact that $\Pred{\Ty*{m}}$ is an element of $\Uni[1]$; in
  particular, we use the fact that is a universe $\Uni[0]$ small enough to fit inside $\Uni[1]$.

  \paragraph{The universe and $\Dec{-}$ constant.}
  We may then define $\Psi$ by realigning the following element of $\Uni[1]$ along the evident
  isomorphism to $\Tm*{m}(z,\UniConst(z))$:
  \begin{equation}
    \begin{make-rcd}{T}{\Uni[1]}
      \Code{} : \Nf{m}(\UniConst)
      \\
      \Pred{} : \Ext{\Uni[0]}{z : \Prop}{\Tm{m}(z, \DecConst(z, \Code{}))}
      \\
      \Reflect{} : \Ext{\Ne{m}(\Open[z] \DecConst(z, \Code{})) \to \Pred{}}{z : \Prop}{\ArrId{}}
      \\
      \Reify{} : \Ext{\Pred{} \to \Nf{m}(\Open[z] \DecConst(z, \Code{}))}{z : \Prop}{\ArrId{}}
    \end{make-rcd}
  \end{equation}
  As a result, we obtain $\Psi : \Uni[1]$ and $\alpha : \Psi \cong T$ such that
  $\alpha^{-1}(\braces{\Code{}; \dots}) = \Code{}$ given $z : \Prop$. With $\Psi$ in hand, we may
  define $\UniConst*$:
  \begin{align*}
    \Code{\UniConst*} &= \CUni
    \\
    \Pred{\UniConst*} &= \Psi
    \\
    \Reflect{\UniConst*} &= \lambda e.\ \alpha^{-1}\gls{\CNfInj(e); \Ne{m}; \ArrId{}; \lambda e.\ \CNfInj(e)}
    \\
    \Reify{\UniConst*} &= \pi_0 \circ \alpha
  \end{align*}

  The definition of $\DecConst*$ is essentially ``just'' cumulativity:
  \[
    \DecConst*(\alpha^{-1}\gls{\Code{}; \Pred{}; \Reify{}; \Reflect{}})
    =
    \alpha\gls{\NfDec{\Code{}}; \Pred{}; \Reify{}; \Reflect{}}
  \]
  Here we have used both the isomorphism realigning $\UniConst*$ and the isomorphism realigning
  $\Ty*{m}$. When restricting with $z : \Prop$, $\DecConst*$ restricts to $A \mapsto \DecConst(A)$.

  It remains to show that the universe is closed under the various codes for dependent sums,
  products, \etc{} For the sake of space, we show only two of these cases: dependent products and
  modal types.

  Let us fix $t_0 : (\mu \mid \Tm*{n}(\UniConst*))$ and
  $t_1 : \Fn{\Tm*{n}(\DecConst*(t_0))}{\Tm*{m}(\UniConst*)}$. First we realign
  $\Tm*{m}(\PiConst*(\DecConst*(A), \DecConst* \circ B))$ along $\DecIsoConst_{\PiCodeConst}$ to
  obtain $\Psi$ which lies strictly over
  $\Tm*{m}(z, \PiConst(z, \DecConst(z, t_0), \DecConst(z) \circ t_1))$.

  We also get an isomorphism
  $\DecIsoConst*_{\PiCodeConst} : \Psi \cong \Tm*{m}(\PiConst*(\DecConst*(A), \DecConst* \circ B))$
  lying over $\DecIsoConst$. Note that by the uniqueness of inverses, $(\DecIsoConst*)^{-1}$ lies over
  $\DecIsoConst^{-1}$.

  We use this now to define $\PiCodeConst*(t_0,t_1)$:
  \begin{align*}
    \Code{\PiCodeConst*(t_0,t_1)} &= \FnCode{\Code{t_0}}{\lambda x.\ \Code{t_1(\Reflect{t_0}(x))}}
    \\
    \Pred{\PiCodeConst*(t_0,t_1)} &= \Psi
    \\
    \Reflect{\PiCodeConst*(t_0,t_1)} &= \lambda e.\ (\DecIsoConst*_{\PiCodeConst})^{-1}(\Reflect{\PiConst*(\DecConst*(t_0), \DecConst* \circ t_1)} \NfDecIso{e})
    \\
    \Reify{\PiCodeConst*(t_0,t_1)} &= \lambda f.\ \NfDecIso*{\Reify{\PiConst*(\DecConst*(t_0), \DecConst* \circ t_1)}\DecIsoConst*_{\PiCodeConst}(f)}
  \end{align*}

  The procedure for modal types is similar. Let us fix $t : (\mu \mid \Tm*{n}(\UniConst*))$. Again,
  we realign $\Tm*{m}(\ModConst*(\DecConst*(t)))$ along the isomorphism $\DecIsoConst_{\ModCodeConst}$
  to obtain $\Psi$ and $\DecIsoConst*_{\ModConst}$. The actual construction of $\ModCodeConst*$ is
  almost identical to $\PiCodeConst*$:
  \begin{align*}
    \Code{\ModCodeConst*(t)} &= \ModifyCode{\Code{t}}
    \\
    \Pred{\ModCodeConst*(t)} &= \Psi
    \\
    \Reflect{\ModCodeConst*(t)} &= \lambda e.\ (\DecIsoConst*_{\ModCodeConst})^{-1}(\Reflect{\ModConst*(\DecConst*(t))} \NfDecIso{e})
    \\
    \Reify{\ModCodeConst*(t)} &= \lambda m.\ \NfDecIso*{\Reify{\PiConst*(\DecConst*(t))} \DecIsoConst*_{\ModCodeConst}(m)}
  \end{align*}
  The checks that all constructions lie over their syntactic counterparts follow immediately from
  the conclusions of realignment.
\end{proof}

\begin{theorem}
  \label{thm:normalization-model:model}
  There exists an \MTT{} cosmos built around $\Ty*{m}$ and $\Tm*{m}$ which lies strictly over the
  syntactic \MTT{} cosmos.  Externally, there is an \MTT{} cosmos
  $\Mor[\InterpGl{-}]{\Mode}{\GL{\InvRen[-]}}$ together with a morphism of cosmoi
  $\Mor[\pi]{\InterpGl{-}}{\InterpSyn{-}}$.
\end{theorem}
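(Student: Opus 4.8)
The statement is an assembly step: every ingredient demanded of an \MTT{} cosmos --- in the internalised form of \cref{def:cosmoi:structured-cosmos} catalogued in \cref{sec:appendix:cosmos} --- has already been built in the preceding lemmas, so the plan is (1) to collect the constants and observe that, together with the equations verified along the way, they form a (non-representable) model of \MTT{} internal to the extensional \MTT{} of $\InterpGl{-}$; (2) to externalise this internal model into an honest \MTT{} cosmos $\Mor[\InterpGl{-}]{\Mode}{\GL{\InvRen[-]}}$; and (3) to read off the morphism $\pi$ from the realignment boundary conditions. Concretely, for (1): \cref{def:normalization-model:ty,def:normalization-model:tm} give the universe $\Ty*{m}$, $\Tm*{m}$; \cref{lem:normalization-model:sig} the code and comparison isomorphism for dependent sums; \cref{lem:normalization-model:pi} those for modal dependent products; \cref{lem:normalization-model:mod} the modal type $\ModConst*$ with its introduction and elimination forms $\ModIntroConst*$, $\ModElimConst*$ and the $\beta$-law relating them; \cref{lem:normalization-model:bool} the booleans and their recursor; and \cref{lem:normalization-model:uni} the universe $\UniConst*$, the decoding $\DecConst*$, and the codes witnessing closure of the universe under all of the above. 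Matching this list against \cref{sec:appendix:cosmos}, and recalling that $\InterpGl{-}$ is a genuine model of \MTT{} over $\Mode$ (\cref{thm:prereq:model,thm:foundations:existence}), shows that we have exactly a model of the internalised theory of \MTT{} cosmoi.

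For (2): because the internal language of $\GL{\InvRen[m]}$ is extensional \MTT{} whose mode-$m$ context category is $\GL{\InvRen[m]}$ and whose modality $\mu$ acts by $\InterpGl{\mu}$ (\cref{thm:prereq:model}), the internal constants transpose into external structure. The pair $\Ty*{m}$, $\Tm*{m}$ yields a morphism $\Mor[\El{m}]{\EL{m}}{\TY{m}}$ in $\GL{\InvRen[m]}$; the internal judgement $(\mu \mid \Ty*{n}) \vdash \ModConst*(-) : \Ty*{m}$ yields a map $\Mor[\mathbf{Mod}_\mu]{\InterpGl{\mu}(\TY{n})}{\TY{m}}$; coherence of $\ModIntroConst*$ over $\ModConst*$ furnishes the square of \cref{diag:models:modal-intro}, and $\ModElimConst*$ with its computation rule furnishes the lifting structure of \cref{point:cosmoi:quasi-representation-2}; the remaining codes supply the closure demanded by \cref{point:cosmoi:quasi-representation-1} and the subuniverse. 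The side conditions of \cref{def:cosmoi:cosmos} hold because each $\GL{\InvRen[m]}$ is a presheaf topos --- hence locally Cartesian closed --- and each $\InterpGl{\mu}$ is simultaneously a left and a right adjoint. This produces the external cosmos $\InterpGl{-}$.

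For (3): take $\pi_m \colon \GL{\InvRen[m]} \to \PSH{\Cx{m}}$ to be the projection onto the $\PSH{\Cx{m}}$-component, equivalently the inverse image of the open subtopos inclusion, equivalently the modality $\Open$ read internally. It has both adjoints, so it is left exact; that it is an LCCC functor and restricts strictly to the cosmos universes is part of the standard Artin-gluing package, and it is (strictly) $2$-natural because $\InterpGl{\mu}$ and $\InterpSyn{\mu}$ are precomposition functors and $\PSH{-}$ preserves strict equalities. Strict preservation of every cosmos connective is \emph{precisely} the realignment boundary condition borne by each starred constant: under $z : \Prop$ one has $\Ty*{m} = \Ty{m}$, $\PiConst* = \PiConst$, $\ModConst* = \ModConst$, and so on throughout, so $\pi$ carries the normalization cosmos onto the syntactic cosmos $\InterpSyn{-}$ on the nose. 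The last datum required by \cref{def:models:morphism-of-structured-cosmoi} is the isomorphism $\beta_\mu \colon \pi_n \circ (\InterpGl{\mu})_! \cong (\InterpSyn{\mu})_! \circ \pi_m$ comparing $\pi$ with the lock functors, together with its transposition law; this is supplied by the fact that $\Open$ commutes coherently with the \MTT{} modalities (\cref{thm:prereq:model}), the witnessing $2$-cell being the canonical one assembled from the transposes in the proof of \cref{thm:foundations:essential}.

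The only genuinely non-routine parts are, I expect, (a) making ``the internal model externalises'' precise uniformly across all modes --- checking that the external readings of the constants cohere into a pseudofunctor $\InterpGl{-}$ lying in $\VV$, and that applying $\pi_m$ componentwise is $2$-natural --- and (b) pinning down $\beta_\mu$ and verifying its transposition law, which is where the results of \cref{sec:foundations} on the open and closed subtopoi of a glued topos are actually consumed. Everything else is a transcription of the conclusions of \cref{lem:normalization-model:sig,lem:normalization-model:pi,lem:normalization-model:mod,lem:normalization-model:bool,lem:normalization-model:uni}.
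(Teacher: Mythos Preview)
Your proposal is correct and follows essentially the same approach as the paper: collect the constants from \cref{def:normalization-model:ty,def:normalization-model:tm,lem:normalization-model:tm-align,lem:normalization-model:sig,lem:normalization-model:pi,lem:normalization-model:mod,lem:normalization-model:bool,lem:normalization-model:uni}, and obtain the properties of $\pi$ from the results of \cref{sec:foundations}. The paper is terser --- it cites \cref{thm:foundations:induced} directly for the 2-naturality of $\pi$ and \cref{thm:foundations:essential} for the transposition law --- whereas you spell out the externalisation step and the role of $\Open$ more explicitly, but the substance is the same.
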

\begin{proof}
  This is combination of
  \cref{%
    def:normalization-model:ty,%
    def:normalization-model:tm,%
    lem:normalization-model:tm-align,%
    lem:normalization-model:sig,%
    lem:normalization-model:pi,%
    lem:normalization-model:mod,%
    lem:normalization-model:bool,%
    lem:normalization-model:uni%
  }.
  That $\pi$ is 2-natural is immediate from \cref{thm:foundations:induced} and that it commutes with
  transposition is precisely \cref{thm:foundations:essential}.
\end{proof}


\section{The normalization function}
\label{sec:normalization}

\subsection{Initiality, revisited}

As formulated, the cosmos based around the generalized algebraic syntax of \MTT{} $\InterpSyn{-}$ is
not initial. By \cref{cor:cosmoi:initial}, there is an initial cosmos $\ICosmos{-}$ but it is a
distinct object and $\ICosmos{m}$ is not even cocomplete, let alone a presheaf topos.

Despite this, $\InterpSyn{-}$ enjoys a certain distinguished place among \MTT{} cosmoi owing to
\cref{thm:cosmoi:quasi-initiality}. In particular, we have the following
\begin{enumerate}
\item For every context $\IsCx{\Gamma}$, there exists an object $\InterpGl{\Gamma} : \InterpGl{m}$,
  together with a canonical isomorphism $\alpha_\Gamma : \pi(\InterpGl{\Gamma}) \cong \Yo{\Gamma}$.
\item For every type $\IsTy{A}$, there is a morphism
  $\Mor[\InterpGl{A}]{\InterpGl{\Gamma}}{\Ty*{m}}$ such that
  $\pi(\InterpGl{A}) \circ \alpha_\Gamma = \YoEm{A}$.
\item For every term $\IsTm{M}{A}$, there is a morphism $\Mor[\InterpGl{M}]{\InterpGl{M}}{\Tm*{m}}$
  lying over $\InterpGl{A}$ such that $\pi(\InterpGl{M}) \circ \alpha_\Gamma = \YoEm{M}$.
\end{enumerate}

\subsection{The proof of the normalization theorem}

Prior to the proof of the normalization theorem (\cref{thm:normalization:normalization}), we require
several preliminary results.

Recall that there is a closed immersion of topoi $\Mor[\ClIncl]{\PSH{\Ren{m}}}{\InterpGl{m}}$. In
fact, this closed immersion is essential because the morphism $\Mor{\PSH{\Ren{m}}}{\PSH{\Cx{m}}}$ is
essential. Accordingly, for each $X : \PSH{\Ren{m}}$ there is an object
$\ClIncl_!(X) : \InterpGl{m}$. This functor has a familiar result when applied to a representable in
$\PSH{\Ren{m}}$. Consider a telescope $\IsTele{\Theta}{\Gamma}$, we then have the following:
\[
  \AtomicCx{\Theta} \defeq \Ess{\ClIncl}(\Yo{\Theta}) = \Mor{\Yo{\Theta}}{\InvRen{\Yo{\Gamma}}}
\]
\begin{lemma}
  \label{lem:normalization:atomic-yoneda}
  For an arbitrary $X : \InterpGl{m}$, we have
  $\Hom{\AtomicCx{\Theta}}{X} \cong \Inv{\ClIncl}(X)(\Theta)$.
\end{lemma}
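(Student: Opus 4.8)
The plan is to recognize the asserted isomorphism as the composite of an adjunction bijection with the Yoneda lemma, so that no real computation is needed. Recall from just above that $\ClIncl$ is an \emph{essential} geometric morphism: besides the inverse image/direct image adjunction $\Inv{\ClIncl} \Adjoint \Dir{\ClIncl}$ there is a further left adjoint $\Ess{\ClIncl} \Adjoint \Inv{\ClIncl}$, which exists precisely because $\Mor{\PSH{\Ren{m}}}{\PSH{\Cx{m}}}$ is essential. By definition $\AtomicCx{\Theta} = \Ess{\ClIncl}(\Yo{\Theta})$.

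First I would rewrite the left-hand side using the adjunction $\Ess{\ClIncl} \Adjoint \Inv{\ClIncl}$:
\[
  \Hom{\AtomicCx{\Theta}}{X} = \Hom{\Ess{\ClIncl}(\Yo{\Theta})}{X} \cong \Hom[\PSH{\Ren{m}}]{\Yo{\Theta}}{\Inv{\ClIncl}(X)}.
\]
Then I would apply the Yoneda lemma in $\PSH{\Ren{m}}$ to identify $\Hom[\PSH{\Ren{m}}]{\Yo{\Theta}}{\Inv{\ClIncl}(X)}$ with $\Inv{\ClIncl}(X)(\Theta)$, which is exactly the right-hand side. Naturality in $X$ is inherited from the naturality of the adjunction bijection in its second variable and of the Yoneda bijection.

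There is essentially no obstacle here. The only point meriting (minor) care is bookkeeping about which of the three functors attached to the essential geometric morphism $\ClIncl$ is in play — the extra left adjoint $\Ess{\ClIncl}$, not the direct image $\Dir{\ClIncl}$ — together with the observation, already recorded above, that $\Ess{\ClIncl}$ exists because the geometric morphism $\Mor{\PSH{\Ren{m}}}{\PSH{\Cx{m}}}$ is essential. One could alternatively phrase the whole calculation as an instance of the general fact that a left adjoint between presheaf categories sending $\Yo{\Theta}$ to $\AtomicCx{\Theta}$ is the left Kan extension of $\Theta \mapsto \AtomicCx{\Theta}$ along Yoneda, but the two-step argument above is the most direct route.
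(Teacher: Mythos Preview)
Your proof is correct and is precisely the argument the paper has in mind: the lemma is stated without proof there, being nothing more than the composite of the $\Ess{\ClIncl} \Adjoint \Inv{\ClIncl}$ adjunction with the Yoneda lemma, exactly as you wrote.
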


By the previous discussion, for a telescope $\IsTele{\Theta}{\Gamma}$, there is also the object
$\InterpGl{\Gamma}$, which lies over $\Gamma$ in $\InterpGl{m}$. While this object is different
than $\AtomicCx{\Theta}$, there is a canonical natural transformation between them.

\begin{lemma}
  \label{lem:normalization:world}
  Given $\IsTele{\Theta}{\Gamma}$, there is a morphism
  $\Mor[\Atoms{\Theta}]{\AtomicCx{\Theta}}{\InterpGl{\Gamma}}$.
  Moreover, up to the canonical isomorphism $\Inv{\OpenIncl}{\InterpGl{\Gamma}} \cong \Yo{\Gamma}$,
  $\Inv{\OpenIncl}{\Atoms{\Theta}} = \ArrId{\Yo{\Gamma}}$.
\end{lemma}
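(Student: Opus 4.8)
The plan is to construct $\Atoms{\Theta}$ by recursion on the derivation of $\IsTele{\Theta}{\Gamma}$ --- that is, over the three telescope formers: the empty telescope, the lock $\LockTele{-}$, and context extension $\ETele{-}{-}$ --- while maintaining the invariant that $\Inv{\OpenIncl}(\Atoms{\Theta})$ equals the identity of $\Yo{\Gamma}$ after the canonical identifications $\Inv{\OpenIncl}(\AtomicCx{\Theta}) = \ExtEmbRen(\Yo{\Theta}) \cong \Yo{\EmbRen(\Theta)} = \Yo{\Gamma}$ and $\Inv{\OpenIncl}(\InterpGl{\Gamma}) \cong \Yo{\Gamma}$ (the latter being $\alpha_\Gamma$ from \cref{thm:cosmoi:quasi-initiality}). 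I will lean on the recursive description of $\InterpGl{-}$ recorded in the proof of that theorem: the empty context goes to the terminal object, $\InterpGl{\LockCx{\Gamma_0}} = \LKan{\InterpGl{\mu}}(\InterpGl{\Gamma_0})$, and $\InterpGl{\ECx{\Gamma_0}{A}}$ is the context extension of $\InterpGl{\Gamma_0}$ by $\InterpGl{A}$, which internally to $\InterpGl{-}$ pairs a point of $\InterpGl{\Gamma_0}$ with a $\mu$-modal term of $\Tm*{n}(\InterpGl{A})$.

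For the empty telescope, both $\AtomicCx{\cdot}$ and $\InterpGl{\cdot}$ are terminal in $\GL{\InvRen}$ (for $\AtomicCx{\cdot}$ because empty contexts and telescopes are terminal and $\InvRen$ is continuous, so $\AtomicCx{\cdot} = (\ObjTerm{}, \ObjTerm{}, \ArrId{})$), so I take the unique map. For a locked telescope $\Theta = \LockTele{\Theta_0}$ over $\LockCx{\Gamma_0}$, I first identify $\AtomicCx{\LockTele{\Theta_0}} \cong \LKan{\InterpGl{\mu}}(\AtomicCx{\Theta_0})$: by \cref{cor:foundations:preserves-open-closed} the modality $\InterpGl{\mu}$, hence its left adjoint $\LKan{\InterpGl{\mu}}$, preserves the open and closed subtopoi, where (by \cref{thm:foundations:essential} and the description of $\InterpGl{\mu}$ as precomposition with $\mu$) $\LKan{\InterpGl{\mu}}$ acts componentwise by left Kan extension, and on the closed component left Kan extension along the lock functor on renamings sends $\Yo{\Theta_0}$ to $\Yo{\LockTele{\Theta_0}}$; compatibility with $\Ess{\ClIncl}$ then follows from $\EmbRen$ being compatible with locks. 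I set $\Atoms{\LockTele{\Theta_0}} \defeq \LKan{\InterpGl{\mu}}(\Atoms{\Theta_0})$, and the invariant is inherited because $\Inv{\OpenIncl}$ commutes with $\LKan{\InterpGl{\mu}}$.

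The substantive case is context extension $\Theta = \ETele{\Theta_0}{A}$ over $\ECx{\Gamma_0}{A}$, with $\Mor[\mu]{n}{m}$ and $\IsTy[\LockCx{\Gamma_0}]{A}<n>$. The telescope weakening $\ETele{\Theta_0}{A} \to \Theta_0$ gives a projection $p \colon \AtomicCx{\ETele{\Theta_0}{A}} \to \AtomicCx{\Theta_0}$, and the de Bruijn variable $\NeVar{0}{\ArrId{}}$ --- a neutral form of type $\Sb{A}{\LockSb{\Wk}}$ in the locked telescope $\LockTele{\ETele{\Theta_0}{A}}$ --- yields, via \cref{lem:normalization:atomic-yoneda} together with the identification $\AtomicCx{\LockTele{\ETele{\Theta_0}{A}}} \cong \LKan{\InterpGl{\mu}}(\AtomicCx{\ETele{\Theta_0}{A}})$ of the previous paragraph, an element $v$ of $\Ne{n}$ at the appropriate reindexing of $\InterpGl{A}$ over $\LKan{\InterpGl{\mu}}(\AtomicCx{\ETele{\Theta_0}{A}})$. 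Reflecting, $\Reflect{\InterpGl{A}}(v)$ is an element of $\Tm*{n}$ at that type; transposing along $\LKan{\InterpGl{\mu}} \Adjoint \InterpGl{\mu}$ it is precisely the $\mu$-modal datum that, paired with $\Atoms{\Theta_0} \circ p$, extends to a map $\AtomicCx{\ETele{\Theta_0}{A}} \to \InterpGl{\ECx{\Gamma_0}{A}}$, and I take this for $\Atoms{\ETele{\Theta_0}{A}}$. For the invariant: $\Inv{\OpenIncl}(p)$ is the syntactic weakening $\Yo{\ECx{\Gamma_0}{A}} \to \Yo{\Gamma_0}$, which after the inductive hypothesis composes with $\Inv{\OpenIncl}(\Atoms{\Theta_0})$ to $\Yo{\Wk}$; and $\Inv{\OpenIncl}(\Reflect{\InterpGl{A}}(v))$ is the underlying term of $v$, namely $\Var{0}$, by the boundary condition on $\Reflect{}$ from \cref{def:normalization-model:ty}; under the context-extension isomorphism these assemble to $\ISb$ on $\ECx{\Gamma_0}{A}$.

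The main obstacle is the modal bookkeeping of the extension case: because $A$ is formed over $\LockCx{\Gamma_0}$, the last variable is a neutral \emph{at mode $n$} living over the locked atomic context, so one must route its reflection through the adjunction $\LKan{\InterpGl{\mu}} \Adjoint \InterpGl{\mu}$ and check that it exactly matches the modal component of context extension --- and, more delicately, that the resulting morphism of $\GL{\InvRen}$-objects has syntactic part equal on the nose to $\ISb$, not merely isomorphic. Both reduce to the preservation statements of \cref{sec:foundations} and the strict boundary conditions built into \cref{def:normalization-model:ty}, but they are exactly what stands in the way of a one-line argument.
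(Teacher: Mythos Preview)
Your proposal is correct and follows essentially the same approach as the paper: induction on the telescope former, with the empty case trivial, the lock case handled by applying the left adjoint $\Ess{\InterpGl{\mu}}$ (your $\LKan{\InterpGl{\mu}}$) to the inductive datum and correcting by the canonical isomorphisms, and the extension case by reflecting the last variable $\NeVar{0}{\ArrId{\mu}}$ and pairing with the weakened inductive datum via the universal property of context extension. The paper phrases the construction through \cref{lem:normalization:atomic-yoneda} as producing an element of $\Inv{\ClIncl}(\InterpGl{\Gamma})(\Theta)$ rather than a morphism directly, but this is only a presentational difference.
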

\begin{proof}
  By applying \cref{lem:normalization:atomic-yoneda}, this proof is equivalent to constructing an
  element of $g \in \Inv{\ClIncl}{\InterpGl{\Gamma}}(\Theta)$ such that the image of $g$ under the
  canonical map to $\InvRen{\Yo{\Gamma}}(\Theta)$ is the identity. This morphism is constructed by
  induction on $\Theta$.
  \begin{description}
  \item[Case]
    \[
      \inferrule{ }{
        \IsTele{\EmpTele}{\EmpCx}
      }
    \]
    In this case these two are isomorphic.

  \item[Case]
    \[
      \inferrule{
        \Mor[\mu]{n}{m}
        \\
        \IsTele{\Theta}{\Gamma}
      }{
        \IsTele{\LockTele{\Theta}}<n>{\LockCx{\Gamma}}
      }
    \]
    In this case we choose $\Atoms{\LockTele{\Theta}} = \Ess{\InterpGl{\mu}}(\Atoms{\Theta_0})$ and
    correcting by the canonical isomorphisms
    $\AtomicCx{\LockTele{\Theta_0}} \cong \Ess{\InterpGl{\mu}}(\AtomicCx{\Theta})$ which exists by
    \cref{thm:foundations:essential} and
    $\InterpGl{\LockCx{\Gamma}} \cong \Ess{\InterpGl{\mu}}(\InterpGl{\Gamma})$ which exists by
    construction.
  \item[Case]
    \[
      \inferrule{
        \IsTele{\Theta}{\Gamma}
        \\
        \IsTy[\LockCx{\Gamma}]{A}<n>
      }{
        \IsTele{\ETele{\Theta}{A}}{\ECx{\Gamma}{A}}
      }
    \]
    We construct this morphism using the characterization from
    \cref{lem:normalization:atomic-yoneda}. Our induction hypothesis gives us an element
    $g \in \InterpGl{\Gamma}_0(\Theta)$ lying over $\alpha_\Gamma^{-1}(\ISb)$, whence
    $g' \in \InterpGl{\ECx{\Gamma}{A}}(\ETele{\Theta}{A})$ lying over $\Wk$ using reindexing.

    Moreover, we have an element of $\Ne{m}(\InterpSyn{A})(\LockTele{\ETele{\Theta}{A}})$ lying over
    $\Var{0}$, whence an element of $\Tm*{m}(\InterpGl{A})(\LockTele{\ETele{\Theta}{A}})$ lying over
    $\Var{0}$. This gives us an element of $\InterpGl{\ECx{\Gamma}{A}}_0(\ETele{\Theta}{A})$ which
    lies over $\ISb = \ESb{\Wk}{\Var{0}}$ by the universal property of the pullback.
    \qedhere
  \end{description}
\end{proof}

\begin{lemma}
  \label{lem:normalization:same-nf}
  Given $\IsTm{M,N}{A}$, if there exists a normal form $u$ such that $\IsNf{u}{A}{M}$ and
  $\IsNf{u}{A}{N}$, then $\EqTm{M}{N}{A}$.
\end{lemma}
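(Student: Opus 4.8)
The plan is to prove a mild strengthening of the statement by simultaneous induction on the derivations of the neutral and normal form judgments introduced in \cref{sec:renamings}. The strengthening allows the ascribed types to vary: (i) if $\IsNfTy{\tau}{A}$ and $\IsNfTy{\tau}{A'}$ then $A$ and $A'$ are definitionally equal; (ii) if $\IsNe{e}{A}{M}$ and $\IsNe{e}{A'}{M'}$ then $A$ and $A'$ are definitionally equal and, transporting along this equality, $\EqTm{M}{M'}{A}$; and (iii) if $\IsNf{u}{A}{M}$ and $\IsNf{u}{A'}{M'}$ then likewise $A$ and $A'$ are definitionally equal and $\EqTm{M}{M'}{A}$. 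The lemma itself is the special case of (iii) in which the two ascribed types are literally equal, so that its first conclusion is vacuous. Passing to this more uniform statement is convenient: since the derivations of neutral and normal forms are defined inductive-recursively alongside the inclusion of neutrals and normals into terms, the body of the induction amounts to checking that each generator determines its underlying term uniquely up to definitional equality.

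The induction proceeds by cases on the head constructor of the raw form $\tau$, $e$, or $u$, and for each constructor one inverts the second derivation. Because the grammar of \cref{sec:renamings} tags each coercion between syntactic classes ($\NfInj{-}$, $\NfDec{-}$, the coercions for the decoding isomorphisms, and so on), this inversion is unambiguous up to the auxiliary data---types, intermediate terms, $2$-cells---that the rule introduces existentially, and it is exactly this data that must be reconciled between the two derivations. The variable case $\NeVar{k}{\alpha}$ is immediate, since the telescope together with $k$ and $\alpha$ determine the type and underlying term on the nose. For the neutral eliminators ($\NeApp{e}{u}$, the modal eliminator, \etc{}) and the normal introduction forms ($\NfLam{u}$, $\NfProd{\tau}{\sigma}$, $\NfModify{\tau}$, $\NfMkBox{u}$, \etc{}) one applies the induction hypotheses to the immediate subderivations, obtains definitional equalities among their constituent types and terms, and then concludes with the matching congruence rule of \MTT{}, combined with whatever $\beta$, $\eta$, or substitution equations the rule uses to compute its ascribed type. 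The modal cases are structurally no different: one additionally threads along equalities of the locks and $2$-cells appearing in the judgments, which are literal equalities of raw data and so transport without effort.

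The step I expect to be the main obstacle is arranging the strengthened statement so that the induction closes without circularly invoking injectivity of the type constructors, which only becomes available \emph{after} normalization (\cref{cor:normalization:pi-inj}). For a neutral eliminator such as $\NeApp{e}{u}$ this is unproblematic: the hypothesis on the function part gives only that the two function types agree definitionally, but the equality of the domains---which is what is needed to align the two derivations of the argument---is supplied directly by hypothesis (iii) applied to that argument, and this, together with the equality of the function types and of the head terms, already yields the congruence. The delicate cases are the introduction forms that bind a variable, notably $\NfLam{u}$: there the two subderivations of the body sit over context extensions by a priori distinct domain types, so one must generalize the statement further to also carry definitional equalities of the ambient telescopes and verify that this extra data is available wherever it is needed. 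Once this is set up the remaining cases are a mechanical, if lengthy, traversal of the neutral and normal formers, each discharged by the congruence rule for the corresponding connective.
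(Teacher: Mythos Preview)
The paper states this lemma without proof, and under the paper's setup it is meant to be essentially immediate.  The point your argument does not exploit is that the judgments $\IsNf{u}{A}{M}$ and $\IsNe{e}{A}{M}$ are declared in \cref{sec:renamings} to be defined \emph{inductive-recursively together with the inclusion of neutrals and normals into terms}.  That is, the term component $M$ is not an existential output of an inductively generated relation which one must afterwards prove functional; it is the value of a recursively defined function $\iota$ on the (intrinsically indexed) normal form $u$.  On that reading the lemma is literally $M = \iota(u) = N$, and likewise for \cref{lem:normalization:same-nfty}.

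Your worry about circularly needing injectivity of $\Pi$ arises from a different reading: treating the rules as freely generating a relation and then attempting to show it functional by induction.  Under that reading, inverting a derivation of $\IsNf{\NfLam{v}}{C}{-}$ produces only \emph{some} $A_0,B_0$ with $C = \Fn{A_0}{B_0}$, and two derivations could a priori pick different decompositions, which is exactly where you reach for injectivity.  But in the inductive-recursive reading the subterm $v$ is an element of the \emph{indexed} family over the telescope $\ETele{\Theta}{A_0}$, so the domain $A_0$ is part of the data carried by $\NfLam{v}$ itself rather than something recovered by inversion.  The internal presentation in \cref{sec:appendix:hoas} makes this explicit: $\CLam$ takes $A$ and $B$ as arguments.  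With that understood, two occurrences of ``the same $u$'' already share the same $A_0$, the induction hypothesis applies in the same extended telescope, and no appeal to injectivity---and no further generalization over definitionally equal telescopes---is needed.

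A related caution about your strengthening: clause~(iii), if read as ``same \emph{raw untyped} $u$, arbitrary ascribed types $A,A'$, implies $A$ and $A'$ are definitionally equal,'' is false.  For instance $\NfLam{\NfTrue}$ is a perfectly good normal form at $\Fn{X}{\Bool}$ for \emph{any} $X$, and these types are not all equal.  The statement becomes true only on the indexed reading, where ``same $u$'' already forces the type index to coincide---at which point the conclusion is vacuous.  So your clauses (i)--(iii) are either too strong (raw reading) or trivially subsumed (indexed reading); either way the elaborate mutual induction is more machinery than the lemma needs.
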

\begin{lemma}
  \label{lem:normalization:same-nfty}
  If $\IsNfTy{u}{A}$ and $\IsNfTy{u}{B}$, then $\EqTy{A}{B}$.
\end{lemma}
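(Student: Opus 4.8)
The plan is to prove \cref{lem:normalization:same-nfty} by a simultaneous induction together with \cref{lem:normalization:same-nf} and a companion statement for neutral forms: if $\IsNe{e}{A}{M}$ and $\IsNe{e}{B}{N}$ then $\EqTy{A}{B}$ and $\EqTm{M}{N}{A}$, so that a neutral determines both its synthesised type and its term up to definitional equality. In fact I would prove all three statements allowing the two ambient telescopes to differ up to definitional equality of their underlying contexts; this mild strengthening is what lets the induction survive the elimination rules, where the type of a scrutinee flows into the context of a premise. The induction is on the raw syntax of the normal-type / normal / neutral term, and each step begins by inverting the two given derivations. Since the generators of this syntax carry no equations --- only renamings do, and those equations never appear in these typing judgments --- every generator heads exactly one rule, so inversion is unambiguous and produces matching premises that differ at most in the genuine types and terms they mention.

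For the structural normal-type formers --- the codes for dependent sums, modal types ($\NfProd{-}{-}$, $\NfModify{-}$), and the remaining connectives --- the conclusion follows at once from the inductive hypothesis on the immediate subterms plus congruence of definitional equality: e.g.\ from $\IsNfTy{\NfModify{\tau}}{\Modify{A}}$ and $\IsNfTy{\NfModify{\tau}}{\Modify{B}}$ one inverts to two derivations $\IsNfTy[\LockTele{\Theta}]{\tau}{A}$, $\IsNfTy[\LockTele{\Theta}]{\tau}{B}$, obtains $\EqTy{A}{B}$ by the inductive hypothesis, and concludes by congruence of $\Modify{-}$; the dependent-sum case additionally invokes this step to identify the two extended telescopes before applying the hypothesis to the second component. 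The case $\NfDec{v}$ is where \cref{lem:normalization:same-nf} is consumed: inversion produces $\IsNf{v}{\Uni}{A}$ and $\IsNf{v}{\Uni}{B}$ at the \emph{common} type $\Uni$, so \cref{lem:normalization:same-nf} yields $\EqTm{A}{B}{\Uni}$ and hence $\EqTy{\Dec{A}}{\Dec{B}}$ by congruence of decoding. The companion statements for normal and neutral forms go the same way, threading the equalities produced by the hypotheses through the $\beta$- and $\eta$-laws of the relevant connective.

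The main obstacle is the elimination-shaped neutrals --- application $\NeApp{e}{u}$, modal elimination $\NeLetMod{\mu}{\nu}{\tau}{e}{u}$, and the boolean eliminator --- where the scrutinee is ascribed a \emph{composite} type ($\Fn{A}{B}$, $\Modify[\nu]{A}$, $\Bool$) whose components reappear in the types and contexts of the other premises (argument, branch, motive). The inductive hypothesis on the scrutinee equates the two composite ascriptions, and for $\NeLetMod$ the motive is recorded as a genuine normal type $\tau$, so the strengthened statement at the strictly smaller term $\tau$ --- now with definitionally equal telescopes --- identifies the two motives; but finishing the case needs the \emph{components} of the composite type to agree, and this must be arranged so that the agreement is in hand \emph{before} it is used, not recovered afterwards from injectivity of type constructors, which is only a downstream corollary of normalization (\cref{cor:normalization:pi-inj}). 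This is precisely where the type annotations carried by the full neutral and normal grammar of \cref{sec:appendix:neutrals-and-normals} do their work: they make the relevant decomposition of the scrutinee's ascribed type syntactically rigid, so that the induction hypotheses line up with no circular appeal, after which the case closes by a routine congruence. I expect this reconciliation --- together with carefully staging the three statements so that normal-type uniqueness at one level feeds the neutral step at the next --- to be the only real content of the proof; the rest is congruence and $\beta\eta$.
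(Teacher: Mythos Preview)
The paper does not prove this lemma at all; it is stated without argument and immediately used. The intended justification is that the judgments $\IsNfTy{u}{A}$, $\IsNf{u}{A}{M}$, $\IsNe{e}{A}{M}$ are declared in \cref{sec:renamings} to be defined \emph{inductive-recursively} with the inclusion into types and terms: the recursive component is a \emph{function} sending a normal type to its underlying type (and likewise for normals and neutrals). Under that reading the lemma is a triviality --- $A$ and $B$ are two values of the same function at the same argument $u$, hence equal on the nose --- and no induction is needed.

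Your approach instead treats the judgments relationally and reconstructs functionality by induction on raw syntax. This is a legitimate alternative, and you correctly isolate the one place where it is not routine: the application neutral $\NeApp{e}{u}$. Inversion on two derivations yields $\IsNe{e}{\Fn{A_1}{B_1}}{M_1}$ and $\IsNe{e}{\Fn{A_2}{B_2}}{M_2}$; the inductive hypothesis on $e$ gives only $\Fn{A_1}{B_1}\equiv\Fn{A_2}{B_2}$, and extracting $A_1\equiv A_2$ from that is precisely injectivity of $\Pi$, which the paper derives \emph{from} normalization in \cref{cor:normalization:pi-inj}. Your proposed escape --- that the full grammar carries type annotations making the decomposition rigid --- is correct for the eliminators that record their motive ($\NeBoolRec{\tau}{\ldots}{\ldots}{\ldots}$, $\NeLetMod{\mu}{\nu}{\tau}{\ldots}{\ldots}$), but $\NeApp{e}{u}$ as written in \cref{sec:appendix:neutrals-and-normals} carries neither $A$ nor $B$, so the grammar alone does not close this case. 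What does close it is exactly the inductive-recursive reading: if normal forms are genuinely \emph{indexed} by their type, then ``the same $u$'' appearing at types $A_1$ and $A_2$ already forces $A_1=A_2$, and similarly the neutral $e$ determines its synthesized type as a value rather than up to conversion. In short, the paper's omission is a silent appeal to the functional character of inductive-recursive definitions; your relational reconstruction is more work and, without that appeal (or without adding the domain/codomain as explicit data to $\NeApp$), does not quite go through at application.
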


We are now in a position to prove normalization. We note that one of the conditions of the
normalization algorithm (respect for definitional equivalence) is automatic: we have defined the
algorithm to operate only on equivalence classes of terms, so it must always respect definitional
equivalence.

\begin{theorem}
  \label{thm:normalization:normalization}
  There exists a pair of functions $\Normalize{M}{A}$ and $\NormalizeTy{A}$ such that
  \begin{enumerate}
  \item If $\IsTy{A}$ and $\IsTele{\Theta}{\Gamma}$ then $\IsNfTy{\NormalizeTy{A}}{A}$.
  \item If $\IsTm{M}{A}$ and $\IsTele{\Theta}{\Gamma}$ then $\IsNf{\Normalize{M}{A}}{A}{M}$.
  \end{enumerate}
\end{theorem}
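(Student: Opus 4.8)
The plan is to obtain both functions by interpreting syntax into the normalization model of \cref{thm:normalization-model:model} and then harvesting normal forms through the open/closed fracture of $\InterpGl{m}$. First I would apply \cref{thm:cosmoi:quasi-initiality} to $G = \InterpGl{-}$ and the morphism $\Mor[\pi]{\InterpGl{-}}{\InterpSyn{-}}$ furnished by \cref{thm:normalization-model:model}: each $\IsCx{\Gamma}$ yields an object $\InterpGl{\Gamma} : \InterpGl{m}$ with a canonical isomorphism $\Inv{\OpenIncl}(\InterpGl{\Gamma}) \cong \Yo{\Gamma}$, each $\IsTy{A}$ a map $\Mor[\InterpGl{A}]{\InterpGl{\Gamma}}{\Ty*{m}}$ lying over $\YoEm{A}$, and each $\IsTm{M}{A}$ a section $\InterpGl{M}$ of $\Tm*{m}$ over $\InterpGl{A}$ lying over $\YoEm{M}$. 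Because this interpretation arises from the initiality of the generalized algebraic presentation of \MTT{}, it is automatically invariant under definitional equality, so the resulting functions are well defined on the equivalence classes of derivations on which the judgements $\IsNf{-}{-}{-}$ and $\IsNfTy{-}{-}$ live.

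Given in addition a telescope $\IsTele{\Theta}{\Gamma}$, I would precompose with the comparison morphism $\Mor[\Atoms{\Theta}]{\AtomicCx{\Theta}}{\InterpGl{\Gamma}}$ from \cref{lem:normalization:world}. For types, composing $\InterpGl{A} \circ \Atoms{\Theta}$ with the code projection $\Mor{\Ty*{m}}{\NfTy{m}}$ (\cref{def:normalization-model:ty}) gives a morphism $\Mor{\AtomicCx{\Theta}}{\NfTy{m}}$; since $\NfTy{m}$ is a closed type, \cref{lem:normalization:atomic-yoneda} identifies such a morphism with an element of the presheaf of normal types at $\Theta$, and I take $\NormalizeTy{A}$ to be this normal type. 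For terms, $\InterpGl{M}\circ\Atoms{\Theta}$ is an element of $\Tm*{m}(\InterpGl{A}\circ\Atoms{\Theta}) = \Pred{\InterpGl{A}\circ\Atoms{\Theta}}$ over $\AtomicCx{\Theta}$; applying the reify map $\Reify{\InterpGl{A}\circ\Atoms{\Theta}}$ produces a morphism $\Mor{\AtomicCx{\Theta}}{\Nf{m}}$ over the code of $\InterpGl{A}\circ\Atoms{\Theta}$, and a second appeal to \cref{lem:normalization:atomic-yoneda} turns this into an element of the presheaf of normal forms at $\Theta$, which I take to be $\Normalize{M}{A}$.

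It then remains to verify the two boundary conditions, namely that $\IsNfTy{\NormalizeTy{A}}{A}$ and $\IsNf{\Normalize{M}{A}}{A}{M}$: that the erasures of the constructed normal type and normal form are literally $A$ and $M$. This is where the fracture of $\InterpGl{m}$ (\cref{thm:prereq:fracture}) pays off: applying $\Inv{\OpenIncl}$ and using that $\Inv{\OpenIncl}(\Atoms{\Theta})$ is the identity on $\Yo{\Gamma}$ (\cref{lem:normalization:world}), that the code projection and $\Reify{-}$ both restrict to identities under $z : \Prop$ (the boundary conditions built into \cref{def:normalization-model:ty}), and that $\InterpGl{A}$ and $\InterpGl{M}$ restrict along $\pi$ to $\YoEm{A}$ and $\YoEm{M}$ (\cref{thm:cosmoi:quasi-initiality}), one computes that the syntactic term (resp.\ type) underlying the extracted normal form is exactly $M$ (resp.\ $A$).

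The hard part will be precisely this last bookkeeping step: matching the internal, $\Open$-modal description of the reified datum against an honest syntactic normal form through the isomorphism of \cref{lem:normalization:atomic-yoneda}, while keeping track of the various canonical isomorphisms that mediate between $\AtomicCx{\Theta}$, $\InterpGl{\Gamma}$, and the Yoneda image. Everything else is a direct consequence of the structure assembled in \cref{sec:normalization-model,sec:prereq}; in particular the converse-direction facts \cref{lem:normalization:same-nf,lem:normalization:same-nfty} are not needed for this existence statement and instead feed into the downstream decidability-of-conversion corollary.
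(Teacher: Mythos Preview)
Your proposal is correct and follows essentially the same approach as the paper: interpret syntax into the glued cosmos via quasi-initiality, precompose with $\Atoms{\Theta}$, project along $\Code{}$ (for types) or $\Reify{}$ (for terms), and read off an element of the normal-form presheaf via the adjunction underlying \cref{lem:normalization:atomic-yoneda}; the boundary conditions follow because everything restricts to the identity under the open modality. The paper phrases this externally as a commuting square in $\PSH{\Ren{m}}$ and then evaluates at $\Atoms{\Theta}$, while you phrase it internally via $\Inv{\OpenIncl}$ and $z : \Prop$, but these are the same argument; the one superfluous citation is \cref{thm:prereq:fracture}, since only the open restriction (not the full fracture decomposition) is needed for the boundary check.
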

\begin{proof}
  We may package this in the language of $\PSH{\Ren{m}}$. From this viewpoint,
  $\NormalizeTy[\Gamma]{A}$ is an element of $(\NfTy{m})_0$ fitting into the following triangle:
  \begin{equation*}
    \begin{tikzpicture}[diagram]
      \node (A) {$\InvRen{\Yo{\Gamma}}$};
      \node[right = 4cm of A] (B) {$\InvRen{\Yo{\Gamma}}$};
      \node[above = 1.5cm of B] (C) {$(\NfTy{m})_0$};
      \path[->] (A) edge node[below] {$\YoEm{A}$} (B);
      \path[->] (A) edge (C);
      \path[->] (C) edge node[right] {$\NfTy{m}$} (B);
    \end{tikzpicture}
  \end{equation*}
  Similarly, $\Normalize{M}{A}$ is a section to the inclusion $\Mor{\Nf{m}}{\InvRen{\SynTm{m}}}$
  over $\YoEm{M}$.

  Observe from the initiality of syntax that for any $\IsTy{A}$, there exists
  $\Mor[\InterpGl{A}]{\InterpGl{\Gamma}}{\Ty*{m}}$, and bi-universality ensures that
  $\Open \InterpGl{A} = A$, up to an isomorphism of contexts. In particular, there exists
  $\Mor[A^{\mathbf{nf}}]{\InterpGl{\Gamma}}{(\NfTy{m})_0}$ such that $\Open A^{\mathbf{nf}} = A$.
  Unfolding to $\PSH{\Ren{m}}$, we have the following commuting square:
  \[
    \DiagramSquare{
      width = 4.5cm,
      height = 1.5cm,
      nw = \InterpGl{\Gamma}_0,
      sw = \InvRen{\Yo{\Gamma}},
      west = \InterpGl{\Gamma},
      ne = (\NfTy{m})_0,
      se = \InvRen{\SynTy{m}},
      east = \NfTy{m},
      south = \InvRen{\YoEm{A}},
      north = A^{\mathbf{nf}},
    }
  \]
  By \cref{lem:normalization:world}, $\Atoms{\Theta} \in \InterpGl{\Gamma}_1(\Theta)$ lies over
  $\ISb$. We then define $\NormalizeTy{A} = A^{\mathbf{nf}}(\Atoms{\Theta})$, which satisfies the
  required properties.

  The procedure is identical for $\Normalize{M}{A}$. Given $\IsTm{M}{A}$, we use initiality to
  obtain the following square:
  \[
    \DiagramSquare{
      width = 4.5cm,
      height = 1.5cm,
      nw = \InterpGl{\Gamma}_0,
      sw = \InvRen{\Yo{\Gamma}},
      west = \InterpGl{\Gamma},
      ne = (\Nf{m})_0,
      se = \InvRen{\SynTm{m}},
      east = \Nf{m},
      south = \InvRen{\YoEm{M}},
      north = M^{\mathbf{nf}},
    }
  \]
  We set $\Normalize{M}{A} = M^{\mathbf{nf}}(\Atoms{\Theta})$.
\end{proof}

\begin{theorem}
  Given a telescope $\IsTele{\Theta}{\Gamma}$ then the following two facts hold:
  \begin{enumerate}
  \item If $\IsNf{u}{A}{M}$, then $\Normalize{M}{A} = u$.
  \item If $\IsNfTy{\tau}{A}$, then $\NormalizeTy{A} = \tau$.
  \end{enumerate}
\end{theorem}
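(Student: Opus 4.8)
The plan is to derive the result from a strengthened \emph{stability} statement proven by simultaneous induction on the derivations of the neutral, normal, and normal-type judgments (together with the inductive–recursive data for telescopes and renamings). Using the identification $\Hom{\AtomicCx{\Theta}}{X} \cong \Inv{\ClIncl}(X)(\Theta)$ from \cref{lem:normalization:atomic-yoneda} to regard a neutral or normal form as a morphism out of $\AtomicCx{\Theta}$, the claim is: (i) if $\IsNe{e}{A}{M}$ then $\InterpGl{M}\circ\Atoms{\Theta}$ equals $\Reflect{\InterpGl{A}\circ\Atoms{\Theta}}(e)$ as maps into $\Tm*{m}$; (ii) if $\IsNf{u}{A}{M}$ then $\Reify{\InterpGl{A}\circ\Atoms{\Theta}}(\InterpGl{M}\circ\Atoms{\Theta}) = u$; (iii) if $\IsNfTy{\tau}{A}$ then $\Code{\InterpGl{A}\circ\Atoms{\Theta}} = \tau$. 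Granting this, the theorem is immediate: unwinding the proof of \cref{thm:normalization:normalization}, $\Normalize{M}{A} = M^{\mathbf{nf}}(\Atoms{\Theta}) = \Reify{\InterpGl{A}}(\InterpGl{M})(\Atoms{\Theta})$, which is $u$ by (ii), and $\NormalizeTy{A} = A^{\mathbf{nf}}(\Atoms{\Theta}) = \Code{\InterpGl{A}\circ\Atoms{\Theta}} = \tau$ by (iii).

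The inductive step proceeds case by case, in each case unfolding the explicit formula for $\Reify{}$ or $\Reflect{}$ recorded in \cref{sec:normalization-model}. For a neutral application $\NeApp{e}{u}$ one expands $\Reflect{\PiConst*(\dots)}$ from \cref{lem:normalization-model:pi}, applies the inductive hypothesis (i) to $e$ and (ii) to $u$, and reads off $\NeApp{e}{u}$; the binder cases ($\NfLam{u}$ and the motive slots of the eliminators) rely on the context-extension clause of \cref{lem:normalization:world}, which by construction interprets the fresh variable as $\Reflect{}$ applied to the neutral variable $\Var{0}$ --- precisely what is needed to re-engage (i) underneath the binder. The variable case $\NeVar{k}{\alpha}$ is handled by an auxiliary induction on the telescope tracking the lock and key reindexing exactly as in the substitution action on variables from \cref{sec:renamings}, again terminating in the extension clause of \cref{lem:normalization:world}; the empty, lock, and extension clauses for telescopes are themselves subsumed by \cref{lem:normalization:world}. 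The non-modal connectives (dependent sums, booleans, the universe and its codes) are dispatched this way with no surprises, using \cref{lem:normalization-model:sig,lem:normalization-model:bool,lem:normalization-model:uni}, the $\beta$/$\eta$ laws of \MTT{} discharging the residual bookkeeping exactly as they did when checking the boundary conditions in \cref{sec:normalization-model}.

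The main obstacle is the modal fragment. For $\NfMkBox{u}$ and $\NfInj{e}$ one must verify that $\Reify{\ModConst*(T)} = \pi_0 \circ \alpha$, with $\alpha$ the realignment isomorphism of \cref{lem:normalization-model:mod}, extracts $\NfMkBox{\cdot}$ respectively $\NfInj{\cdot}$; this requires first showing that the interpreted term lands in the intended summand of the $\Closed$-indexed coproduct before its normal form can be read off. The genuinely delicate case is the neutral eliminator $\NeLetMod{\mu}{\nu}{\tau}{e}{u}$: here one observes that $\InterpGl{M_0}\circ\Atoms{\Theta}$ is $\Reflect{\ModConst*}(e)$ by (i), hence lands in the $\In{0}$ (``neutral'') branch of the $\Closed$-case split defining $\ModElimConst*$, so the eliminator reduces to its middle clause --- the one built from $\Reify{}$ and $\CLetMod$; unwinding that clause, applying (iii) to $\tau$ and (ii) to $u$, and using that $\Modify[\nu]{-}$ commutes with $\Closed$ and with coproducts (\cref{thm:prereq:model,thm:prereq:internal-left-adj}) recovers $\NeLetMod{\mu}{\nu}{\tau}{e}{u}$. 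The universe-isomorphism cases $\NfDecIso{e}$ and $\NfDecIso*{u}$, and the normal type $\NfDec{u}$, are analogous, now unfolding $\DecConst*$ and the coherence isomorphisms $\DecIsoConst*$ of \cref{lem:normalization-model:uni}.
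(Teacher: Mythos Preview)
Your proposal is correct and follows essentially the same approach as the paper: you strengthen the claim to the triple of facts (i)--(iii) about neutrals, normals, and normal types, prove these by mutual induction on the derivations, and then read off the theorem by unfolding the definition of $\Normalize{-}{-}$ and $\NormalizeTy{-}$. The paper sets up the identical three auxiliary facts and the identical induction, but is much terser: it declares that ``all cases follow from induction except the case of variables'' and only spells out the variable case and two modal cases ($\NfModify{\tau}$ and $\NfMkBox{u}$). Your extended discussion of the $\NeLetMod{\mu}{\nu}{\tau}{e}{u}$ case---observing that the reflected scrutinee lands in the neutral summand so that $\ModElimConst*$ computes via its middle clause---is a correct unfolding of what the paper leaves implicit, and is arguably the more informative presentation.
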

\begin{proof}
  In order to prove these results, we show three related facts. Recall that $\InterpGl{-}$ is the
  function sending a piece of syntax to its interpretation in the normalization model. Furthermore,
  recall that by \cref{lem:normalization:world} $\AtomicCx{\Theta}$ element
  $\Atoms{\Theta} : \InterpGl{\Gamma}$.
  \begin{enumerate}
  \item If $\IsNe{e}{A}{M}$, then
    $\InterpGl{M}(\Atoms{\Theta}) = \Reflect{\InterpGl{A}(\Atoms{\Theta})} e$
  \item If $\IsNf{u}{A}{M}$, then
    $\Reflect{\InterpGl{A}(\Atoms{\Theta})}\InterpGl{M}(\Atoms{\Theta}) = u$.
  \item If $\IsNfTy{\tau}{A}$, then $\Code{\InterpGl{A}(\Atoms{\Theta})} = \tau$.
  \end{enumerate}
  Here we have identified a code $u$ (resp. $e$) as an $\AtomicCx{\Theta}$ element of $\Nf{A}$
  (resp. $\Ne{A}$), an abuse justified by unfolding the definition of $\Nf{A}$ into $\PSH{\Ren{m}}$.

  Assuming these facts hold, the result immediately follows by unfolding the definition of
  $\Normalize{M}{A}$ and $\NormalizeTy{A}$. We prove these facts by mutual induction. All cases
  follow from induction except the case of variables, so we show this here.

  \begin{description}
  \item[Case]
    \[
      \inferrule{
        \IsTele{\Theta}{\Gamma}
        \\
        \Theta = \ETele{\Theta_0}{A}
      }{
        \IsNe{\NeVar{0}{\ArrId{\mu}}}{
          \Sb{A}{\Key{\alpha}{} \circ \LockSb{\Wk}}
        }{
          \Sb{\Var{0}}{\Key{\alpha}{}}
        }
      }
    \]
    (Note that the case for a general $\NeVar{k}{\alpha}$ follows by straightforward induction on
    $k$, but is notationally heavy.)

    First, we recall that $\InterpGl{\Var{0}}$ defined by projection from the context. Therefore,
    $\InterpGl{\Var{0}}(\Atoms{\Theta}) = \Reflect{\InterpGl{A}(\Atoms{\Theta})} \NeVar{0}{\ArrId{\mu}}$
    by unfolding the construction of $\Atoms{\Theta}$ in
    \cref{lem:normalization:world}.

  \item[Case]
    \[
      \inferrule{
        \IsNfTy[\LockTele{\Theta}]{\tau}<n>{A}
      }{
        \IsNfTy{\NfModify{\tau}}{\Modify{A}}
      }
    \]
    In this case, by unfolding the definitions we have
    \begin{align*}
      \Code{\InterpGl{\Modify{A}}(\Atoms{\LockTele{\Theta}})}
      &= \CModify(\Code{A(\Atoms{\Theta})})
    \end{align*}
    By induction hypothesis, we have $\Code{A(\Atoms{\Theta})} = \tau$, so the conclusion follows.

  \item[Case]
    \[
      \inferrule{
        \IsNf[\LockTele{\Theta}]{u}{A}<n>{M}
      }{
        \IsNf{\NfMkBox{u}}{\Modify{A}}{\MkBox{M}}
      }
    \]
    In this case, by unfolding the definitions we have
    \begin{align*}
      \Reify{\InterpGl{\Modify{A}}(\Atoms{\LockTele{\Theta}})} & \InterpGl{\MkBox{M}}(\Atoms{\LockTele{Theta}})\\
      &= \CMkBox{\Reify{\InterpGl{A}(\Atoms{\Theta})} \InterpGl{M}(\Atoms{\Theta})}\\
      &= \CMkBox(u)
    \end{align*}
    The last step follows from the induction hypothesis.
    \qedhere
  \end{description}
\end{proof}

\begin{corollary}
  \label{cor:normalization:unique}
  Each term and type in \MTT{} has a unique normal form.
\end{corollary}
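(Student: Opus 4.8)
The plan is to read this off from \cref{thm:normalization:normalization} together with the theorem immediately preceding it, which identifies \emph{any} normal form of a term (resp.\ type) with the output of the normalization function $\Normalize{-}{-}$ (resp.\ $\NormalizeTy{-}$). Fix a telescope $\IsTele{\Theta}{\Gamma}$ --- every context of \MTT{} arises as $\EmbRen(\Theta)$, so no generality is lost. The assertion is then that for each $\IsTy{A}$ the collection $\{\tau \mid \IsNfTy{\tau}{A}\}$ is a singleton, and for each $\IsTm{M}{A}$ the collection $\{u \mid \IsNf{u}{A}{M}\}$ is a singleton.

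Existence is exactly \cref{thm:normalization:normalization}: $\NormalizeTy{A}$ and $\Normalize{M}{A}$ inhabit the two collections. For uniqueness, suppose $\IsNfTy{\tau}{A}$ and $\IsNfTy{\sigma}{A}$; applying the preceding theorem to each gives $\tau = \NormalizeTy{A} = \sigma$. Symmetrically, if $\IsNf{u}{A}{M}$ and $\IsNf{v}{A}{M}$ then $u = \Normalize{M}{A} = v$. This is the whole argument.

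I do not expect any real obstacle here: all the substance already lives in \cref{thm:normalization:normalization} (existence, by evaluating the normalization interpretation at $\Atoms{\Theta}$ from \cref{lem:normalization:world}) and in the stability theorem above (itself a mutual induction over neutrals and normals via the reify and reflect maps). The only points to keep straight are bookkeeping: the normal-form judgments are always relative to a telescope, so ``the'' normal form is meant with respect to a chosen presenting $\Theta$; and the soundness lemmas \cref{lem:normalization:same-nf} and \cref{lem:normalization:same-nfty} play \emph{no} role in this statement, since the judgment $\IsNf{u}{A}{M}$ already records that $u$ represents $M$. Those lemmas instead enter when one passes from this corollary to decidability of conversion --- comparing $\Normalize{M}{A}$ with $\Normalize{N}{A}$, or $\NormalizeTy{A}$ with $\NormalizeTy{B}$ --- which is presumably the content of the conversion corollaries advertised in the abstract.
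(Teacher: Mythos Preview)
Your proposal is correct and matches the paper's approach exactly: the corollary is stated without proof in the paper, being an immediate consequence of \cref{thm:normalization:normalization} (existence) together with the unnamed theorem just before it (any normal form of $M$ equals $\Normalize{M}{A}$, and likewise for types), precisely as you outline. Your remarks about the roles of \cref{lem:normalization:same-nf} and \cref{lem:normalization:same-nfty} are also accurate --- they are used only in the subsequent \cref{cor:normalization:conversion}, not here.
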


\begin{corollary}
  \label{cor:normalization:conversion}
  The conversion problem in \MTT{} is equivalent the conversion problem of normal forms.
\end{corollary}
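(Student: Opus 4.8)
The plan is to establish mutual effective reductions between the two decision problems, drawing entirely on results already in hand. For the forward reduction, given well-typed $\IsTm{M}{A}$ and $\IsTm{N}{A}$, compute the normal forms $u = \Normalize{M}{A}$ and $v = \Normalize{N}{A}$ furnished by \cref{thm:normalization:normalization}; by construction $\IsNf{u}{A}{M}$ and $\IsNf{v}{A}{N}$. If $u$ and $v$ coincide as normal forms then $\IsNf{u}{A}{M}$ and $\IsNf{u}{A}{N}$, so \cref{lem:normalization:same-nf} yields $\EqTm{M}{N}{A}$; conversely, since $\Normalize{-}{A}$ is by design a function on definitional-equivalence classes of terms, $\EqTm{M}{N}{A}$ forces $u = v$. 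Hence $\EqTm{M}{N}{A}$ holds precisely when $\Normalize{M}{A}$ and $\Normalize{N}{A}$ are equal normal forms, reducing conversion of terms to conversion of normal forms. The reduction for types is identical, using $\NormalizeTy{-}$ and \cref{lem:normalization:same-nfty} in place of $\Normalize{-}{-}$ and \cref{lem:normalization:same-nf}.

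For the reverse reduction, suppose we are handed normal forms with $\IsNf{u}{A}{M}$ and $\IsNf{v}{A}{N}$ (the case of differing types first being decided by the forward reduction applied to $A$ and $B$, and the case of normal types being analogous). Recall that $\IsNf{u}{A}{M}$ entails $\Normalize{M}{A} = u$, by the theorem immediately preceding \cref{cor:normalization:unique}; likewise $\Normalize{N}{A} = v$. Combining this with \cref{cor:normalization:unique}, the terms $M$ and $N$ are definitionally equal if and only if they share their unique normal form, i.e. if and only if $u = v$. Thus equality of $u$ and $v$ is decided by running the \MTT{} conversion algorithm on the underlying terms. Both reductions are effective: normalization is a total computable function by \cref{thm:normalization:normalization}, and comparison of raw normal forms is recursive relative to equality of the $1$- and $2$-cells of $\Mode$ appearing in them.

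The only genuinely delicate point is fixing the referent of ``the conversion problem of normal forms'': a derivation $\IsNf{u}{A}{M}$ formally carries its underlying term, so the pertinent question is equality of the \emph{raw} normal form $u$, which---as flagged in the remark following the definition of normal forms---is itself only as decidable as equality in $\Mode$. With this reading the equivalence is an immediate consequence of normalization (\cref{thm:normalization:normalization}), uniqueness of normal forms (\cref{cor:normalization:unique}), and the two ``same normal form'' lemmas (\cref{lem:normalization:same-nf,lem:normalization:same-nfty}); the substantive work having already been done, no further obstacle remains. Chaining with that same remark then recovers the slogan that conversion in \MTT{} reduces to conversion in the mode theory.
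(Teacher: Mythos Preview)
Your proposal is correct and follows essentially the same approach as the paper: both establish the biconditional $\EqTm{M}{N}{A} \Leftrightarrow \Normalize{M}{A} = \Normalize{N}{A}$ (and the analogous statement for types) by combining \cref{thm:normalization:normalization} with \cref{lem:normalization:same-nf,lem:normalization:same-nfty}, noting that normalization is defined on equivalence classes. The paper's proof is terser and does not separately spell out the reverse reduction or invoke the stability theorem, but the underlying argument is the same.
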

\begin{proof}
  \Cref{thm:normalization:normalization,lem:normalization:same-nf} imply that $\EqTm{M}{N}{A}$ is
  equivalent to $\Normalize{M}{A} = \Normalize{N}{A}$. Similarly,
  \Cref{thm:normalization:normalization,lem:normalization:same-nfty} reduce $\EqTy{A}{B}$ to
  $\NormalizeTy{A} = \NormalizeTy{B}$. As the proof given is constructively valid, these reductions
  are effective.
\end{proof}

\begin{corollary}
  \label{cor:normalization:pi-inj}
  If $\EqTy{\Fn{A_0}{B_0}}{\Fn{A_1}{B_1}}$, then $\EqTy[\LockCx{\Gamma}]{A_0}{A_1}$ and
  $\EqTy[\ECx{\Gamma}{A_0}]{B_0}{B_1}$.
\end{corollary}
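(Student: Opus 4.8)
The plan is to derive injectivity from the rigidity of normal forms: normal types are \emph{raw} syntax, subject to no nontrivial equations (\cref{sec:renamings}), so the normal-type former for $\Fn{-}{-}$ determines its two immediate subterms, and these record the normal forms of the domain and of the codomain. Throughout, fix a telescope $\IsTele{\Theta}{\Gamma}$; one exists because every \MTT{} context is, up to definitional equality, generated from $\EmpCx$, context extension, and locks --- exactly the grammar of telescopes.

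By \cref{thm:normalization:normalization} together with \cref{cor:normalization:conversion}, the hypothesis $\EqTy{\Fn{A_0}{B_0}}{\Fn{A_1}{B_1}}$ yields an equality of the two associated raw normal types over $\Theta$. Now I analyse the shape of the normal type of a function type. By the construction of the normalization model --- concretely, the $\Code{}$-component of $\PiConst*$ (\cref{lem:normalization-model:pi}) --- together with the uniqueness of normal forms (\cref{cor:normalization:unique}), the normal type of $\Fn{A}{B}$ over $\Theta$ is the one built by the formation rule for $\Fn{-}{-}$ (the analogue for $\Pi$ of the displayed rule for $\Sig{A}{B}$; see \cref{sec:appendix:neutrals-and-normals}): it consists of a normal type $\tau$ with $\IsNfTy[\LockTele{\Theta}]{\tau}<n>{A}$ and a normal type $\sigma$ with $\IsNfTy[\ETele{\Theta}{A}]{\sigma}{B}$. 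Applying this to both function types gives normal types $\tau_0,\tau_1,\sigma_0,\sigma_1$, and since the generators of normal types satisfy no equations, the equality of the two $\Fn{-}{-}$ normal types forces $\tau_0 = \tau_1$ and $\sigma_0 = \sigma_1$.

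From $\tau_0 = \tau_1$ we have $\IsNfTy[\LockTele{\Theta}]{\tau_0}<n>{A_0}$ and $\IsNfTy[\LockTele{\Theta}]{\tau_0}<n>{A_1}$, so \cref{lem:normalization:same-nfty} gives $\EqTy[\LockCx{\Gamma}]{A_0}{A_1}$, the first claim. Since a telescope depends only on the definitional class of the types it mentions, this equality makes $\ETele{\Theta}{A_0}$ and $\ETele{\Theta}{A_1}$ one and the same telescope, so $\IsNfTy[\ETele{\Theta}{A_1}]{\sigma_1}{B_1}$ is equally $\IsNfTy[\ETele{\Theta}{A_0}]{\sigma_1}{B_1}$. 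Then $\sigma_0 = \sigma_1$ together with $\IsNfTy[\ETele{\Theta}{A_0}]{\sigma_0}{B_0}$ and $\IsNfTy[\ETele{\Theta}{A_0}]{\sigma_0}{B_1}$ gives, via \cref{lem:normalization:same-nfty} again, $\EqTy[\ECx{\Gamma}{A_0}]{B_0}{B_1}$.

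The only genuine content beyond this bookkeeping is the normal-form analysis of $\Fn{-}{-}$: that a type definitionally equal to $\Fn{A}{B}$ indeed normalizes by the $\Fn{-}{-}$ formation rule --- rather than, say, through a decoded universe code --- which follows from \cref{lem:normalization-model:pi} and \cref{cor:normalization:unique}, and that this rule's former is injective on subterms, which holds because normal types carry no equations. It is essential to establish $\EqTy[\LockCx{\Gamma}]{A_0}{A_1}$ before $\EqTy[\ECx{\Gamma}{A_0}]{B_0}{B_1}$, since the latter does not even typecheck until the contexts $\ECx{\Gamma}{A_0}$ and $\ECx{\Gamma}{A_1}$ are known to agree.
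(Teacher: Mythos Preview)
Your proof is correct and follows essentially the same approach as the paper: compute that the normal type of $\Fn{A_i}{B_i}$ has outermost constructor $\NfFn{-}{-}$ via \cref{lem:normalization-model:pi}, use \cref{cor:normalization:conversion} to equate the two normal forms, invert the freely-generated constructor to equate the subterms, and convert back to definitional equalities. The paper phrases the last step via \cref{cor:normalization:conversion} rather than \cref{lem:normalization:same-nfty}, and identifies the subterms directly as $\NormalizeTy[\LockCx{\Gamma}]{A_i}$ and $\NormalizeTy[\ECx{\Gamma}{A_i}]{B_i}$ rather than arbitrary normal forms lying over $A_i$ and $B_i$, but these differences are cosmetic; your extra care about aligning the telescopes $\ETele{\Theta}{A_0}$ and $\ETele{\Theta}{A_1}$ before comparing the codomain normal forms is a point the paper leaves implicit.
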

\begin{proof}
  By inspection on the definition of $\NormalizeTy{\Fn{A_i}{B_i}}$, we observe that both are
  interpreted using \cref{lem:normalization-model:pi}. Accordingly, we have the following:
  \begin{align*}
    \NormalizeTy{\Fn{A_0}{B_0}} &= u = \NfFn{\NormalizeTy[\LockCx{\Gamma}]{A_0}}{\NormalizeTy[\ECx{\Gamma}{A_0}]{B_0}}\\
    \NormalizeTy{\Fn{A_1}{B_1}} &= v = \NfFn{\NormalizeTy[\LockCx{\Gamma}]{A_1}}{\NormalizeTy[\ECx{\Gamma}{A_1}]{B_1}}
  \end{align*}
  From $\EqTy{\Fn{A_0}{B_0}}{\Fn{A_1}{B_1}}$ and \cref{cor:normalization:conversion}, we obtain
  $u = v$. From this, we use inversion to conclude that
  $\NormalizeTy[\LockCx{\Gamma}]{A_0} = \NormalizeTy[\LockCx{\Gamma}]{A_1}$ and
  $\NormalizeTy[\ECx{\Gamma}{A_0}]{B_0} = \NormalizeTy[\ECx{\Gamma}{A_1}]{B_1}$. The result now
  follows from \cref{cor:normalization:conversion}.
\end{proof}

\begin{corollary}
  If modalities and 2-cells enjoy decidable equality, typechecking \MTT{} is decidable.
\end{corollary}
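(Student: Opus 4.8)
The plan is to factor the statement into two independent pieces: that definitional equality of types and of terms is decidable, and that there is a terminating bidirectional typechecking algorithm whose only appeals to definitional equality are to the conversion checker from the first piece. Granting both, typechecking is decidable.

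For the first piece, I would begin by arguing that \emph{syntactic equality of normal forms, neutral forms, and normal types is decidable}. These objects carry no equations beyond those making renamings a $2$-functor, and such equations never identify distinct constructors; hence two such forms can be compared by structural recursion, the recursion bottoming out at comparisons of the modality- and $2$-cell-annotations they carry --- the keys $\alpha$ decorating variables in $\NeVar{k}{\alpha}$, the locks and keys occurring in telescopes, and the modalities appearing in $\Modify{-}$, $\LetMod{-}{-}$, \etc{} By hypothesis, equality of $1$-cells and of $2$-cells of $\Mode$ is decidable; one also uses that the ambient $2$-categorical bookkeeping (domains, codomains, whiskering, composition of $1$- and $2$-cells) is computable, which holds whenever $\Mode$ is presented by generators and relations. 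Decidability of equality of normal forms follows. Combining this with \cref{thm:normalization:normalization}, \cref{cor:normalization:unique}, and \cref{cor:normalization:conversion} --- and recalling, as emphasized after \cref{thm:normalization:normalization}, that the whole development is constructive, so that $\Normalize{-}{-}$ and $\NormalizeTy{-}$ are genuine algorithms --- we conclude that $\EqTm{M}{N}{A}$ and $\EqTy{A}{B}$ are decidable: normalize both sides and compare.

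For the second piece, I would present the standard bidirectional discipline adapted to \MTT{}: mutually recursive procedures $\Gamma \vdash M \Rightarrow A$ (synthesize) and $\Gamma \vdash M \Leftarrow A$ (check) on raw terms, together with procedures deciding well-formedness of raw contexts and raw types. Variables synthesize: a lookup walks the context, accumulating the lock-telescope it passes under, computes the composite modality $\nu$ guarding the binding, and, given an annotation $\alpha$ on the occurrence, verifies that $\alpha$ is a legal $2$-cell of the appropriate source and target using the decidable mode theory. Introduction forms ($\Lam{-}$, $\MkBox{-}$, pairing, \etc{}) check against a type whose head constructor is inspected; elimination forms ($\App{-}{-}$, $\LetMod{-}{-}$, projections, $\DecIso{-}$, \etc{}) synthesize. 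The sole place conversion is consulted is the mode-change rule: to check an elimination form $M$ against $A$, synthesize $M \Rightarrow B$ and test $\EqTy{A}{B}$ using the first piece. Termination is immediate by structural recursion on the raw term. Soundness (every accepted judgment is derivable) is a routine induction; completeness (every derivable $\IsTm{M}{A}$ admits a raw underlying term that is accepted) uses uniqueness of normal forms together with \cref{cor:normalization:pi-inj} and its analogues for the other connectives, which license the inversions needed when an introduction form is checked against a type presented non-canonically. Assembling the two pieces yields the corollary.

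The main obstacle is not conceptual but the care required for the modal fragment of the bidirectional algorithm: the variable rule must correctly thread the lock-telescope and reconcile the recorded key $\alpha$ with the substitutions $\Key{\alpha}{-}$ appearing in the typing rules of \cref{sec:renamings}, and the elaboration of $\LetMod{-}{-}$ and of $\LockCx{-}$ must keep mode annotations coherent; one must also be explicit that ``decidable mode theory'' is invoked not merely for equality but for all the finite $2$-categorical computations the algorithm performs. None of this is deep, but it is where essentially all the labor of a fully formal proof resides; we therefore only sketch it, the substantive content --- decidability of conversion --- having already been secured by normalization.
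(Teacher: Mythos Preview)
Your proposal is correct and follows the standard route, but note that the paper does not actually supply a proof of this corollary: it is stated bare, immediately after \cref{cor:normalization:pi-inj}, and the paper moves directly to the acknowledgments. The intended reading is that decidability of typechecking is a routine consequence of the preceding results (effective normalization, uniqueness of normal forms, reduction of conversion to equality of normal forms, and injectivity of type constructors), together with the hypothesis that equality in the mode theory is decidable.

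Your two-piece decomposition is exactly the folklore argument the paper is gesturing at. The first piece is essentially \cref{cor:normalization:conversion} plus the observation (made explicitly in the remark in \cref{sec:renamings}) that equality of normal forms reduces to equality of $1$- and $2$-cells in $\Mode$. The second piece --- a bidirectional algorithm using the inversion lemmas furnished by \cref{cor:normalization:pi-inj} and its analogues --- is not spelled out anywhere in the paper, so you are supplying more detail than the paper does, not less. Your caveats about the modal variable rule and the bookkeeping for $\LetMod{-}{-}$ are apt and are precisely where a careful write-up would spend its effort; the paper simply does not engage with them.
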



\section*{Acknowledgments}

I am thankful for discussions with Carlo Angiuli, Martin Bidlingmaier, Lars Birkedal, Thierry
Coquand, Alex Kavvos, Christian Sattler, and Jonathan Sterling. I would like to express particular
gratitude to Jonathan Sterling conversations about Synthetic Tait Computability and to Lars Birkedal
and Alex Kavvos for feedback on this proof.

\appendix
\section{Neutral and normal forms}
\label{sec:appendix:neutrals-and-normals}

\begin{mathparpagebreakable}
  \JdgFrame{\IsTele{\Theta}{\Gamma}}
  \\
  \inferrule{ }{
    \IsTele{\EmpTele}{\EmpCx}
  }
  \and
  \inferrule{
    \IsTele{\Theta}{\Gamma}
    \\
    \IsTy[\LockCx{\Gamma}]{A}<n>
  }{
    \IsTele{\ETele{\Theta}{A}}{\ECx{\Gamma}{A}}
  }
  \and
  \inferrule{
    \IsTele{\Theta}{\Gamma}
  }{
    \IsTele{\LockTele{\Theta}}{\LockCx{\Gamma}}<n>
  }
  \\
  \JdgFrame{\IsRen{\psi}{\Psi}{\gamma}}
  \\
  \inferrule{ }{
    \IsRen{\EmpRen}{\EmpTele}{\EmpSb}
  }
  \and
  \inferrule{
    \IsTele{\Theta}<n>{\Gamma}
    \\
    \IsTy[\LockCx{\Gamma}]{A}<n>
  }{
    \IsRen[\ETele{\Theta}{A}]{\WkRen}{\Theta}{\Wk}
  }
  \and
  \inferrule{
    \IsTele{\Theta}{\Gamma}
  }{
    \IsRen{\IRen}{\Theta}{\ISb}
  }
  \and
  \inferrule{
    \IsTele{\Gamma, \Delta,\Xi}{\Theta,\Psi,\Phi}
    \\
    \IsRen[\Theta]{r}{\Psi}{\gamma}
    \\
    \IsRen[\Psi]{s}{\Phi}{\delta}
  }{
    \IsRen[\Theta]{s \circ r}{\Phi}{\delta\circ\gamma}
  }
  \and
  \inferrule{
    \IsTele{\Gamma, \Delta}{\Theta,\Psi}
    \\
    \IsRen{r}{\Psi}{\delta}
  }{
    \IsRen[\LockTele{\Theta}]{\LockRen{r}}{\LockTele{\Psi}}<n>{\LockSb{\delta}}
  }
  \and
  \inferrule{
    \IsTele{\Theta}{\Gamma}
    \\
    \Mor[\mu, \nu]{n}{m}
    \\
    \Mor[\alpha]{\nu}{\mu}
  }{
    \IsRen[\LockTele{\Theta}]{\KeyRen{\alpha}{\Theta}}{\LockTele{\Theta}<\nu>}<n>{\Key{\alpha}{\Gamma}}
  }
  \and
  \inferrule{
    \IsTele{\Gamma, \Delta}{\Theta,\Psi}
    \\
    \IsRen{r}{\Psi}<m>{\delta}
    \\
    \IsTy[\LockCx{\Delta}]{A}<n>
    \\
    \IsNe[\LockCx{\Gamma}]{\NeVar{k}{\alpha}}{\Sb{A}{\LockSb{\delta}}}<n>{M}
  }{
    \IsRen{\ERen{r}{\NeVar{k}{\alpha}}}{\ETele{\Psi}{A}}{\ESb{\delta}{M}}
  }
  \\
  \JdgFrame{\IsNfTy{\tau}{A}}
  \\
  \inferrule{ }{
    \IsNfTy{\NfBool}{\Bool}
    \and
    \IsNfTy{\NfUni}{\Uni}
  }
  \and
  \inferrule{
    \IsNfTy[\LockTele{\Theta}]{\tau}<n>{A}
    \\
    \IsNfTy[\ETele{\Theta}{A}]{\sigma}{B}
  }{
    \IsNfTy{\NfFn{\tau}{\sigma}}{\Fn{A}{B}}
  }
  \and
  \inferrule{
    \IsNfTy{\tau}<n>{A}
    \\
    \IsNfTy[\ETele{\Theta}{A}<\ArrId{m}>]{\sigma}{B}
  }{
    \IsNfTy{\NfProd{\tau}{\sigma}}{\Sig{A}{B}}
  }
  \and
  \inferrule{
    \IsNfTy[\LockTele{\Theta}]{\tau}<n>{A}
  }{
    \IsNfTy{\NfModify{\tau}}{\Modify{A}}
  }
  \and
  \inferrule{
    \IsTele{\Theta}{\Gamma}
    \and
    \IsNf{u}{\NfUni}{A}
  }{
    \IsNfTy{\NfDec{u}}{A}
  }
  \\
  \JdgFrame{
    \IsNe{e}{A}{M}
    \qquad
    \IsNf{u}{A}{M}
  }
  \\
  \inferrule{
    \IsTele{\Theta}{\Gamma}
    \\
    \Theta(k) = (\DeclNameless{A})
    \\
    \Locks{\Theta}{k} = \nu
    \\
    \Mor[\alpha]{\mu}{\nu}
  }{
    \IsNe{\NeVar{k}{\alpha}}{
      \Sb{A}{\Key{\alpha}{} \circ (\LockSb{\Wk}<\nu_{k-1}>) \dots \circ (\LockSb{\Wk}<\nu_{0}>)}
    }{
      \Sb{\Var{0}}{\Key{\alpha}{} \circ (\LockSb{\Wk}<\nu_{k-1}>) \dots \circ (\LockSb{\Wk}<\nu_{0}>)}
    }
  }
  \and
  \inferrule{ }{
    \IsNf{\NfTrue}{\Bool}{\True}
    \\
    \IsNf{\NfFalse}{\Bool}{\False}
  }
  \and
  \inferrule{
    \IsNe{e}{\Bool}{M}
  }{
    \IsNf{\NfInj{e}}{\Bool}{M}
  }
  \and
  \inferrule{
    \IsTele{\Theta}{\Gamma}
    \\
    \IsNfTy[\ETele{\Gamma}{\Bool}<\ArrId{m}>]{\tau}{A}
    \\
    \IsNe{u}{\Bool}{M}
    \\
    \IsNf{v_1}{\Sb{A}{\ESb{\ISb}{\True}}}{N_1}
    \\
    \IsNf{v_2}{\Sb{A}{\ESb{\ISb}{\False}}}{N_2}
  }{
    \IsNe{\NeBoolRec{\tau}{u}{v_1}{v_2}}{\Sb{A}{\ESb{\ISb}{M}}}{\BoolRec{A}{M}{N_1}{N_2}}
  }
  \and
  \inferrule{
    \IsTele{\Theta}{\Gamma}
    \\
    \IsTy[\LockCx{\Gamma}]{A}<n>
    \\
    \IsTy[\ECx{\Gamma}{A}]{B}
    \\\\
    \IsNf[\ETele{\Theta}{A}]{u}{B}{M}
  }{
    \IsNf{\NfLam{u}}{\Fn{A}{B}}{\Lam{M}}
  }
  \and
  \inferrule{
    \IsTele{\Theta}{\Gamma}
    \\
    \IsTy[\LockCx{\Gamma}]{A}<n>
    \\
    \IsTy[\ECx{\Gamma}{A}]{B}
    \\\\
    \IsNe{e}{\Fn{A}{B}}{M}
    \\
    \IsNf{u}{A}{N}
  }{
    \IsNe{\NeApp{e}{u}}{\Sb{B}{\ESb{\ISb}{N}}}{\App{M}{N}}
  }
  \and
  \inferrule{
    \IsTele{\Theta}{\Gamma}
    \\
    \IsTy[\LockCx{\Gamma}]{A}<n>
    \\
    \IsTy[\ECx{\Gamma}{A}]{B}
    \\\\
    \IsNf{u}{A}{M}
    \\
    \IsNf{v}{\Sb{B}{\ESb{\ISb}{M}}}{N}
  }{
    \IsNf{\NfPair{u}{v}}{\Sig{A}{B}}{\Pair{M}{N}}
  }
  \and
  \inferrule{
    \IsTele{\Theta}{\Gamma}
    \\
    \IsTy[\LockCx{\Gamma}]{A}<n>
    \\
    \IsTy[\ECx{\Gamma}{A}]{B}
    \\\\
    \IsNe{u}{\Sig{A}{B}}{M}
  }{
    \IsNe{\NeProj[1]{u}}{A}{\Proj[1]{M}}
    \\
    \IsNe{\NeProj[2]{u}}{\Sb{B}{\ESb{\ISb}{\Proj[1]{M}}}}{\Proj[2]{M}}
  }
  \and
  \inferrule{
    \IsTele{\Theta}{\Gamma}
    \\
    \IsTy[\LockCx{\Gamma}]{A}<n>
    \\
    \IsNf[\LockTele{\Theta}]{u}{A}<n>{M}
  }{
    \IsNf{\NfMkBox{u}}{\Modify{A}}{\MkBox{M}}
  }
  \and
  \inferrule{
    \IsTele{\Theta}{\Gamma}
    \\
    \IsTy[\LockCx{\Gamma}]{A}<n>
    \\
    \IsNe{e}{\Modify{A}}{M}
  }{
    \IsNf{\NfInj{e}}{\Modify{A}}{M}
  }
  \and
  \inferrule{
    \Mor[\nu]{o}{n}
    \\
    \Mor[\mu]{n}{m}
    \\
    \IsTele{\Theta}{\Gamma}
    \\
    \IsTy[\LockCx{\LockCx{\Gamma}}<\nu>]{A}<o>
    \\
    \IsNe[\LockTele{\Theta}]{u}{\Modify[\nu]{A}}<n>{M}
    \\
    \IsNfTy[\ETele{\Theta}{\Modify[\nu]{A}}]{\tau}{B}
    \\
    \IsNf[\ETele{\Theta}{A}<\mu \circ \nu>]{u}{\Sb{B}{\ESb{\Wk}{\MkBox[\nu]{\Var{0}}}}}
  }{
    \IsNe{\NeLetMod{\mu}{\nu}{\tau}{e}{u}}{\Sb{B}{\ESb{\ISb}{M}}}{\LetMod{M}{N}}
  }
  \and
  \inferrule{
    \IsTele{\Theta}{\Gamma}
    \\
    \IsNe{e}{\Uni}{A}
  }{
    \IsNf{\NfInj{e}}{\Uni}{A}
  }
  \and
  \inferrule{
    \IsTele{\Theta}{\Gamma}
    \\
    \IsNf[\LockTele{\Theta}]{u}{\Uni}<n>{A}
    \\
    \IsNf[\ETele{\Theta}{\Dec{A}}]{v}{\Uni}{B}
  }{
    \IsNf{\NfFnCode{u}{v}}{\Uni}{\FnCode{A}{B}}
  }
  \and
  \inferrule{
    \IsTele{\Theta}{\Gamma}
    \\
    \IsNf{u}{\Uni}{A}
    \\
    \IsNf[\ETele{\Theta}{\Dec{A}}<\ArrId{m}>]{v}{\Uni}{B}
  }{
    \IsNf{\NfProdCode{u}{v}}{\Uni}{\FnCode{A}{B}}
  }
  \and
  \inferrule{
    \IsTele{\Theta}{\Gamma}
    \\
    \IsNf[\LockTele{\Theta}]{u}{\Uni}{A}
  }{
    \IsNf{\NfModifyCode{u}}{\Uni}{\ModifyCode{A}}
  }
  \and
  \inferrule{
    \IsTele{\Theta}{\Gamma}
  }{
    \IsNf{\NfBoolCode}{\Uni}{\BoolCode}
  }
  \and
  \inferrule{
    \IsTele{\Theta}{\Gamma}
    \\
    \IsNe{e}{\Dec{\BoolCode}}{M}
  }{
    \IsNe{\NfDecIso{e}}{\Bool}{\DecIso{M}}
  }
  \and
  \inferrule{
    \IsTele{\Theta}{\Gamma}
    \\
    \IsNf{u}{\Bool}{M}
  }{
    \IsNf{\NfDecIso*{u}}{\Dec{\BoolCode}}{\DecIso*{M}}
  }
  \and
  \inferrule{
    \IsTele{\Theta}{\Gamma}
    \\
    \IsNe{e}{\Uni}{A}
    \\
    \IsNe{f}{\Dec{e}}{M}
  }{
    \IsNf{\NfInj{f}}{\Dec{e}}{M}
  }
  \and
  \inferrule{
    \IsTele{\Theta}{\Gamma}
    \\
    \IsNf{u}{\Uni}{A}
    \\
    \IsNf[\ETele{\Theta}{\Dec{A}}<\ArrId{m}>]{v}{\Uni}{B}
    \\
    \IsNe{e}{\Dec{\SigCode{A}{B})}}{M}
  }{
    \IsNe{\NfDecIso{e}}{\Sig{\Dec{A}}{\Dec{B}}}{\DecIso{M}}
  }
  \and
  \inferrule{
    \IsTele{\Theta}{\Gamma}
    \\
    \IsNf{u}{\Uni}{A}
    \\
    \IsNf[\ETele{\Theta}{\Dec{A}}<\ArrId{m}>]{v}{\Uni}{B}
    \\
    \IsNf{w}{\Sig{\Dec{A}}{\Dec{B}}}{M}
  }{
    \IsNf{\NfDecIso*{w}}{\Dec{\SigCode{A}{B}}}{\DecIso*{M}}
  }
  \and
  \inferrule{
    \IsTele{\Theta}{\Gamma}
    \\
    \IsNf{u}{\Uni}<n>{A}
    \\
    \IsNf[\ETele{\Theta}{\Dec{A}}]{v}{\Uni}{B}
    \\
    \IsNe{e}{\Dec{\FnCode{A}{B})}}{M}
  }{
    \IsNe{\NfDecIso{e}}{\Fn{\Dec{A}}{\Dec{B}}}{\DecIso{M}}
  }
  \and
  \inferrule{
    \IsTele{\Theta}{\Gamma}
    \\
    \IsNf{u}{\Uni}<n>{A}
    \\
    \IsNf[\ETele{\Theta}{\Dec{A}}]{v}{\Uni}{B}
    \\
    \IsNf{w}{\Fn{\Dec{A}}{\Dec{B}}}{M}
  }{
    \IsNf{\NfDecIso*{w}}{\Dec{\FnCode{A}{B}}}{\DecIso*{M}}
  }
  \and
  \inferrule{
    \IsTele{\Theta}{\Gamma}
    \\
    \IsTm[\LockCx{\Gamma}]{A}{\Uni}<n>
    \\
    \IsNe{e}{\Dec{\ModifyCode{A}}}{M}
  }{
    \IsNe{\NfDecIso{e}}{\Modify{\Dec{A}}}{\DecIso{M}}
  }
  \and
  \inferrule{
    \IsTele{\Theta}{\Gamma}
    \\
    \IsTm[\LockCx{\Gamma}]{A}{\Uni}<n>
    \\
    \IsNf{u}{\Modify{\Dec{A}}}{M}
  }{
    \IsNf{\NfDecIso*{u}}{\Dec{\ModifyCode{A}}}{\DecIso*{M}}
  }
\end{mathparpagebreakable}

\section{A full definition of an internal \MTT{} cosmos}
\label{sec:appendix:cosmos}

We present the full definition of an internal \MTT{} cosmos. The fact that these constants
correspond to the exist in \eg{} the syntactic \MTT{} cosmos follows from unfolding the internal
language this signature is presented in and observing that it is identical to
\cref{def:cosmoi:structured-cosmos} in cosmoi rich enough to use \MTT{} as an internal language.

\subsection{Term and type sorts}
\label{sec:appendix:tm-term}

\begin{align*}
  \Ty{m} &: \Uni\\
  \Tm{m} &: \Ty{m} \to \Uni
\end{align*}

\subsection{Dependent sums}
\label{sec:appendix:tm-sums}

\begin{align*}
  \SigConst &: \brackets{\Sum{A : \Ty{m}} \Tm{m}(A) \to \Ty{m}} \to \Ty{m}
  \\
  \alpha_{\SigConst} &: (A : \Ty{m})(B : \Tm{m}(A) \to \Ty{m}) \to
  \brackets{\Tm{m}(\SigConst(A, B)) \cong \Sum{a : \Tm{m}(A)} \Tm{m}(B(a))}
\end{align*}

\subsection{Dependent products}
\label{sec:appendix:tm-prod}

\begin{align*}
  \PiConst &: \brackets{\Sum{A : \Modify{\Ty{n}}} \LetMod{A}[A]{\brackets{\Modify{\Tm{n}(A)} \to \Ty{m}}}[1]} \to \Ty{m}
  \\
  \alpha_{\PiConst} &: \DeclVar{A}{\Ty{n}} \to (B : \Modify{\Tm{n}(A)} \to \Ty{m}) \to\\
  &\qquad \brackets{\Tm{m}(\PiConst(\MkBox{A}, B)) \cong \Prod{a : \Modify{\Tm{n}(A)}} \Tm{m}(B(a))}
\end{align*}

\subsection{Booleans}
\label{sec:appendix:tm-bool}

\begin{align*}
  \BoolConst &: \Ty{m}
  \\
  \TrueConst,\FalseConst &: \Tm{m}(\BoolConst)
  \\
  \IfConst &: (A : \Tm{m}(\BoolConst) \to \Ty{m}) \to\\
  &\qquad \Tm{m}(A(\TrueConst)) \to \Tm{m}(A(\FalseConst)) \to\\
  &\qquad (b : \Tm{m}(\BoolConst)) \to \Tm{m}(A(b))
  \\
  \_ &: (A : \Tm{m}(\BoolConst) \to \Ty{m})(t : \Tm{m}(A(\TrueConst)))(f : \Tm{m}(A(\FalseConst))) \to\\
  &\qquad (\IfConst(A, t, f, \TrueConst) = t) \times (\IfConst(A, t, f, \FalseConst) = f)
\end{align*}

\subsection{Modal types}
\label{sec:appendix:tm-modality}

\begin{align*}
  \ModConst &: \DeclVar{A}{\Ty{n}} \to \Ty{m}
  \\
  \ModIntroConst &: \DeclVar{A}{\Ty{n}}<\mu>\DeclVar{a}{\Tm{n}(A)} \to \Tm{m}(\ModConst(A))
  \\
  \ModElimConst &:
  \DeclVar{A}{\Ty{n}}<\nu\circ\mu>\brackets{\DelimMin{1}B : \Modify{\Tm{n}(\ModConst(A))} \to \Ty{o}} \to\\
  &\qquad \brackets{\DelimMin{1}\DeclVar{x}{\Tm{n}(A)}<\nu\circ\mu> \to \Tm{o}(B(\ModIntroConst(A, x)))} \to\\
  &\qquad \DeclVar{a}{\Tm{m}(\ModConst(A))}<\nu> \to\\
  &\qquad \Tm{o}(B(a))\\
  \_ &:
  \DeclVar{A}{\Ty{n}}<\nu\circ\mu>\brackets{\DelimMin{1}B : \Modify{\Tm{n}(\ModConst(A))} \to \Ty{o}} \to\\
  &\qquad \brackets{\DelimMin{1}b : \DeclVar{x}{\Tm{n}(A)}<\nu\circ\mu> \to B(\ModIntroConst(A, x))} \to\\
  &\qquad \DeclVar{a}{\Tm{n}(A)}<\nu\circ\mu> \to \ModElimConst(A, B, b, \ModIntroConst(A, a)) = b(a)
\end{align*}

\subsection{Universe \'a la Tarski}
\label{sec:appendix:tm-universe}

\begin{align*}
  \UniConst &: \Ty{m}
  \\
  \DecConst &: \Tm{m}(\UniConst) \to \Ty{m}
  \\
  \SigCodeConst &: \brackets{\Sum{A : \Tm{m}(\UniConst)} \Tm{m}(\DecConst(A)) \to \Tm{m}(\UniConst)} \to \Tm{m}(\UniConst)
  \\
  \PiCodeConst &:
  \brackets{\Sum{A : \Modify{\Tm{n}(\UniConst)}} \LetMod{A}[A]{\brackets{\Modify{\Tm{n}(\DecConst(A))} \to \Ty{m}}}[1]}
  \to \Tm{m}(\UniConst)
  \\
  \BoolCodeConst &: \Tm{m}(\UniConst)
  \\
  \ModCodeConst &: \Modify{\Tm{n}(\UniConst)} \to \Tm{m}(\UniConst)
  \\
  \DecIsoConst_{\SigCodeConst} &:
  (A : \Tm{m}(\UniConst))(B : \Tm{m}(\DecConst(A)) \to \Tm{m}(\UniConst)) \to\\
  &\qquad \Tm{m}(\DecConst(\SigCodeConst(A,B,))) \cong \Tm{m}(\SigConst(\DecConst(A), \DecConst \circ B))
  \\
  \DecIsoConst_{\PiCodeConst} &:
  \DeclVar{A}{\Tm{m}(\UniConst)}(B : \Fn{\Tm{n}(\DecConst(A))}{\Tm{m}(\UniConst)}) \to\\
  &\qquad \Tm{m}(\DecConst(\PiCodeConst(A,B))) \cong \Tm{m}(\PiConst(\DecConst(A), \DecConst \circ B))
  \\
  \DecIsoConst_{\BoolCodeConst} &: \Tm{m}(\DecConst(\BoolCodeConst)) \cong \Tm{m}(\BoolConst)
  \\
  \DecIsoConst_{\ModCodeConst} &:
  \DeclVar{A}{\Tm{m}(\UniConst)} \to
  \Tm{m}(\DecConst(\ModCodeConst(A))) \cong \Tm{m}(\ModConst(\DecConst(A)))
\end{align*}

\subsection{Internal and external \MTT{} cosmoi coincide}

Let us fix a cosmos $F : \VV$ (see \cref{def:cosmoi:cosmos}) such that $F$ supports a model of
\MTT{} where types are interpreted (up to equivalence) by families of objects and a $\Modify{-}$ is
interpreted by $F(\mu)$. For instance, the syntactic \MTT{} cosmos defined by $F(m) = \PSH{\Cx{m}}$.

\begin{theorem}
  \label{thm:appendix:unfold}
  The cosmos $F$ is an \MTT{} cosmos (see \cref{def:cosmoi:structured-cosmos}) precisely when the
  internal language of $F$ supports the constants of an internal \MTT{} cosmos.
\end{theorem}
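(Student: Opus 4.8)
The plan is to fix the interpretation of the internal type theory of $F$ supplied by the hypothesis — a model of \MTT{} in which a type is (up to equivalence) a family of objects and $\Modify{-}$ is interpreted by $F(\mu)$, built on the metatheory of \textcite{gratzer:journal:2020} — and then to unfold each constant of the signature of \cref{sec:appendix:cosmos} into an equivalent piece of the external data of \cref{def:cosmoi:structured-cosmos}, connective by connective. Since this interpretation is fixed there is nothing to choose: a global section of an internal type is literally a morphism of $F(m)$ and substitution is literally reindexing, so the entire argument is a translation along one fixed dictionary, read forwards for one direction of the biconditional and backwards for the other.

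The non-modal data translates routinely along the dictionary of natural models~\parencite{awodey:2018}. The sorts $\Ty{m} : \Uni$ and $\Ty{m} \vdash \Tm{m} : \Uni$ name an object $\TY{m}$ of $F(m)$ together with a $\Uni$-small family $\Mor[\El{m}]{\EL{m}}{\TY{m}}$ over it — exactly a universe as in \cref{def:cosmoi:structured-cosmos}, with no representability imposed. The constants $\SigConst$, $\PiConst$, $\BoolConst$ and their isomorphisms $\alpha_{(-)}$ unfold to the chosen codes witnessing closure of $\El{m}$ under dependent sums, the modal dependent products $\Poly{f_\mu(\El{n})}(\El{m})$, and booleans demanded by \cref{point:cosmoi:quasi-representation-1}; that these come with isomorphisms rather than equalities is precisely what makes the codes weak \`a la Tarski, matching the relaxed definition of cosmos. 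Finally $\UniConst$, $\DecConst$, the code constants $\SigCodeConst$, $\PiCodeConst$, $\BoolCodeConst$, $\ModCodeConst$ and the decoding isomorphisms $\DecIsoConst_{(-)}$ unfold to point~(4): a subuniverse of $\El{m}$ closed under all the same connectives.

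The modal constants are where the translation does genuine work, and I expect the modal eliminator to be the main obstacle. A hypothesis $A : (\mu \mid \Ty{n})$ in an internal context $X$ is a map $\Mor{\LKan{F(\mu)}(X)}{\TY{n}}$, equivalently by transposition along $\LKan{F(\mu)} \Adjoint F(\mu)$ a map $\Mor{X}{F(\mu)(\TY{n})}$; thus $\ModConst$ internalizes post-composition with the south map $\mathbf{Mod}$ of the commuting square \cref{diag:models:modal-intro}, while $\ModIntroConst$ internalizes the north map $\Mor{F(\mu)(\EL{n})}{\EL{m}}$ together with the commutation of that square, and conversely the square and its two maps are recovered from $\ModConst$ and $\ModIntroConst$. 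The constant $\ModElimConst$ together with its $\beta$-law asserts internally that a certain map of types admits a section; the matching external statement is the lifting structure of \cref{point:cosmoi:quasi-representation-2}, $F(\mu)(m) \pitchfork F(\mu\circ\nu)(\TY{o}) \times \El{m}$. Bridging the two is the natural-models dictionary for \emph{positive} connectives: an eliminator-with-$\beta$ is the same datum as a section of the canonical comparison map associated with the display of the modal type being eliminated, i.e.\ an internal lifting structure, and transposing along $\LKan{F(\mu)} \Adjoint F(\mu)$ carries that internal section to exactly the section named in \cref{point:cosmoi:quasi-representation-2}. It is the absence of an $\eta$-law for the $\Modify{-}$ eliminator that keeps this a weak orthogonality (a lifting) rather than an equivalence. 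Having matched every constant of \cref{sec:appendix:cosmos} to a piece of \cref{def:cosmoi:structured-cosmos} and conversely, the biconditional follows.
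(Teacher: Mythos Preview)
Your proposal is correct and follows essentially the same approach as the paper's proof: both unfold the internal constants through the fixed interpretation of \MTT{} in $F$, matching $\Ty{m},\Tm{m}$ to the universe $\El{m}$, the non-modal connectives to the closure codes via the natural-models dictionary, and---as the representative nontrivial case---$\ModConst,\ModIntroConst$ to the commuting square \cref{diag:models:modal-intro} and $\ModElimConst$ with its $\beta$-law to the internal lifting structure of \cref{point:cosmoi:quasi-representation-2}. Your treatment is in fact more explicit than the paper's (which dismisses the non-modal cases as a ``standard exercise''), and your remark that the absence of an $\eta$-law is what keeps the orthogonality weak is a helpful gloss the paper does not spell out.
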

\begin{proof}
  First, observe that unfolding the constants $\Ty{m}$ and $\Tm{m}$ into the model in $F$ gives an
  object $\TY{m} = \Interp{\Ty{m}}$ and a family over $\TY{m}$ given by $\El{m} =
  \Interp{\Tm{m}}$. We call write $\EL{m}$ for the domain of $\El{m}$. Moreover, this transformation
  is a bijection: every family $\El{m}$ gives rise to a pair of $\Ty{m}$ and $\Tm{m}$.

  Showing that the remain constants induce the structure of an \MTT{} cosmos is a standard exercise
  in unfolding the interpretation of \MTT{} into a presheaf topos. We show only the representative
  case of modalities.

  First, observe that the constant $\ModConst : \DeclVar{A}{\Ty{n}} \to \Ty{m}$ is precisely
  determined by a morphism $\Mor[M]{F(\mu)(\TY{n})}{\TY{m}}$, using the fact that modalities in the
  model of \MTT{} are interpreted by $F(\mu)$. Similarly,
  $\ModIntroConst : \DeclVar{A}{\Ty{n}}<\mu>\DeclVar{a}{\Tm{n}(A)} \to \Tm{m}(\ModConst(A))$ is
  precisely equivalent to a map $\Mor[m]{F(\mu)(\EL{n})}{\EL{m}}$ such that
  $\El{m} \circ m = M \circ F(\mu)(\El{n})$.

  The final equivalence between the elimination constant $\ModElimConst$ and the lifting structure
  from \cref{def:cosmoi:structured-cosmos} is identical similar: unfolding $\ModElimConst$ and its
  equation amounts to an internal lifting structure.
\end{proof}

\section{Neutral and normal forms, internally}
\label{sec:appendix:hoas}

We require that all normal and neutral forms become equal to their counterparts in the syntactic
internal \MTT{} cosmos (\cref{sec:appendix:cosmos,sec:prereq}) under the assumption $z : \Prop$. We
avoid repeatedly stating this in the specifications of normals and neutrals that follows.

\subsection{Normal types}

\begin{align*}
  \CPi &: \DeclVar{A}{\NfTy{n}}(B : \Fn{\Vars{n}(A)}{\NfTy{m}}) \to \NfTy{m}
  \\
  \CSig &: (A : \NfTy{m})(B : \Vars{m}(A) \to \NfTy{m}) \to \NfTy{m}
  \\
  \CBool &: \NfTy{m}
  \\
  \CModify &: \Modify{\NfTy{n}} \to \NfTy{m}
\end{align*}

\subsection{Dependent products}

\begin{align*}
  \CLam &: \DeclVar{A}{\Open[z] \Ty{n}(z)}(B : \Fn{\Open[z] \Tm{n}(z, A(z))}{\Open[z] \Ty{m}{z}})\\
  &{} \to (\FnV{a}{\Vars{n}(A)}{\Nf{m}(B(a))})\\
  &{} \to \Nf{m}(\PiConst(A,B))
  \\
  \CApp &: \DeclVar{A}{\Open[z] \Ty{n}(z)}(B : \Fn{\Open[z] \Tm{n}(z, A(z))}{\Open \Ty{m}})\\
  &{} \to \Ne{m}(\PiConst(A,B))\\
  &{} \to (a : \Nf{m}(A))\\
  &{} \to \Ne{m}(B(\Unit(a)))
\end{align*}

\subsection{Dependent sums}

\begin{align*}
  \CPair &: (A : \Open[z] \Ty{m}(z))(B : \Open[z] \Tm{m}(z, A(z)) \to \Open[z] \Ty{m}(z))\\
  &{} \to (a : \Nf{m}(A)) \to \Nf{m}(B(\Unit(a)))\\
  &{} \to \Nf{m}(\SigConst(A, B))
  \\
  \CProj{0} &: (A : \Open[z] \Ty{m}(z))(B : \Open[z] \Tm{m}(z, A(z)) \to \Open[z] \Ty{m}(z))\\
  &{} \to \Ne{m}(\SigConst(A, B))\\
  &{} \to \Ne{m}(A)
  \\
  \CProj{1} &: (A : \Open[z] \Ty{m}(z))(B : \Open[z] \Tm{m}(z, A(z)) \to \Open[z] \Ty{m}(z))\\
  &{} \to (p : \Ne{m}(\SigConst(A, B)))\\
  &{} \to \Ne{m}(B(\Unit(\CProj{0}(p))))
\end{align*}

\subsection{Booleans}

\begin{align*}
  \CNfInj &: \Ne{m}(\BoolConst) \to \Nf{m}(\BoolConst)
  \\
  \CTrue,\CFalse &: \Nf{m}(\BoolConst)
  \\
  \CIf &: (A : \Open[z] \Tm{m}(z, \BoolConst(z)) \to \Open[z] \Ty{m}(z))\\
  &{} \to \Nf{m}(A(\TrueConst))
  \to \Nf{m}(A(\FalseConst))
  \to (b : \Ne{m}(\BoolConst)) \to \Ne{m}(A(\eta(b)))
\end{align*}

\subsection{Modal types}

\begin{align*}
  \CNfInj &: \DeclVar{A}{\Open[z] \Ty{n}(z)} \to \Ne{m}(\ModConst{A}) \to \Nf{m}(\ModConst{A})
  \\
  \CMkBox &: \DeclVar{A}{\Open[z] \Ty{n}(z)} \to \Modify{\Nf{n}(A)} \to \Nf{\ModConst{A}}
  \\
  \CLetMod &: \DeclVar{A}{\Open[z] \Ty{n}(z)}<\nu\circ\mu>
  (B : \DeclVar{a}{\Open[z] \Tm{m}(z, \ModConst(z, A(z)))}<\nu> \to \Open[z] \Ty{o}(z))\\
  &{} \to (\DeclVar{a}{\Vars{n}(A)}<\mu\circ\nu> \to \Nf{o}(B(\ModIntroConst(A, \eta(a)))))\\
  &{} \to \DeclVar{a}{\Ne{m}(\ModConst(A))}<\nu> \to \Ne{o}(B(\eta(a)))
\end{align*}

\subsection{Universe \'a la Tarski}

\begin{align*}
  \CUni &: \NfTy{m} \qquad \CDec : \Nf{m}(\UniConst) \to \NfTy{m}
  \\
  \CNfInj &: \Ne{m}(\UniConst) \to \Nf{m}(\UniConst)
  \\
  \CPiCode &: \DeclVar{A}{\Nf{n}(\UniConst)}(B : \Fn{\Vars{n}(\CDec(A))}{\Nf{m}(\UniConst)}) \to \Nf{m}(\UniConst)
  \\
  \CSigCode &: (A : \Nf{m}(\UniConst))(B : \Vars{m}(\CDec(A)) \to \Nf{m}(\UniConst)) \to \Nf{m}(\UniConst)
  \\
  \CBoolCode &: \Nf{m}(\UniConst)
  \\
  \CModifyCode &: \Modify{\Nf{n}(\UniConst)} \to \Nf{m}(\UniConst)
  \\
  \CDecIso_{\CPiCode} &:
  \DeclVar{A}{\Nf{n}(\UniConst)}(B : \Fn{\Vars{n}(\CDec(A))}{\Nf{m}(\UniConst)})\\
  &\to \Nf{m}(\CPi(\CDec(A), \CDec(B))) \to \Nf{m}(\DecConst(\CPiCode(A,B)))
  \\
  \CDecIso_{\CSigCode} &:
  (A : \Nf{m}(\UniConst))(B : \Vars{m}(\CDec(A)) \to \Nf{m}(\UniConst))\\
  &\to
  \Nf{m}(\CSig(\CDec(A), \CDec(B))) \to \Nf{m}(\DecConst(\CSigCode(A,B)))
  \\
  \CDecIso*_{\CPiCode} &:
  \DeclVar{A}{\Nf{n}(\UniConst)}(B : \Fn{\Vars{n}(\CDec(A))}{\Nf{m}(\UniConst)})\\
  &\to \Ne{m}(\DecConst(\CPiCode(A,B))) \to \Ne{m}(\CPi({\CDec(A)}{}, \CDec(B)))
  \\
  \CDecIso*_{\CSigCode} &:
  (A : \Nf{m}(\UniConst))(B : \Vars{m}(\CDec(A)) \to \Nf{m}(\UniConst))\\
  &\to \Ne{m}(\DecConst(\CSigCode(A,B))) \to \Ne{m}(\CSig(\CDec(A), \CDec(B)))
  \\
  \CDecIso_{\CBoolCode} &: \Nf{m}(\BoolConst) \to \Nf{m}(\DecConst(\BoolCodeConst))
  \\
  \CDecIso_{\CModifyCode} &: \DeclVar{A}{\Nf{n}(\UniConst)} \to \Nf{m}(\ModConst(A)) \to \Nf{m}(\DecConst(\ModCodeConst(A)))
  \\
  \CDecIso*_{\CBoolCode} &: \Ne{m}(\DecConst(\BoolCodeConst)) \to \Ne{m}(\BoolConst)
  \\
  \CDecIso*_{\CModifyCode} &: \DeclVar{A}{\Nf{n}(\UniConst)} \to \Ne{m}(\DecConst(\ModCodeConst(A))) \to \Ne{m}(\ModConst(A))
\end{align*}

\subsection{Interpreting neutral and normal forms in the glued model}

Substantiating these constants in the glued model (see \cref{thm:prereq:model}) relies on unfolding
the constants described above into this model and showing that the expected normal or neutral form
from \cref{sec:renamings} can be used. These proofs follow \textcite{hofmann:1999}, and again the
main subtlety is showing that binders in constants such as $\CPi$ are correctly interpreted.

For each normal and neutral form, the requirement that constant lie over the appropriate constant
from the syntactic cosmos forces the ``syntactic portion'' of each constant.  More precisely,
consider again $\CPi$ and set $I = \DeclVar{A}{\NfTy{n}} \times \Vars{n}(A)$ and
$I' = \DeclVar{A}{\Ty{n}} \times \Tm{n}(A)$. The definition of $\CPi$ is a commuting
square of the following shape in $\PSH{\Ren{m}}$:
\[
  \DiagramSquare{
    width = 5cm,
    ne = \Interp{\NfTy{m}}^1,
    nw = \Interp{I}^1,
    se = \InvRen{\Interp{\Ty{m}}},
    sw = \InvRen{\Interp{I'}},
    west = \Interp{I},
    east = \Interp{\NfTy{m}},
    north = \Interp{\CPi}^1,
    south = \Interp{\CPi}^0,
  }
\]
The bottom map of this square $\Interp{\CPi}^0$, moreover, must be $\InvRen{\PiConst}$ in order to
ensure that $\CPi = \PiConst(z)$ with $z : U$. It remains to define the top arrow of this
diagram. This is complicated by the difference between the interpretation of dependent products in
$\InterpGl{m}$ and $\PSH{\Ren{m}}$. In particular, while the ``downstairs'' portion of a dependent
product in $\InterpGl{m}$ is a dependent product in $\PSH{\Cx{m}}$, the upstairs portion is not a
dependent product in $\PSH{\Ren{m}}$.

In order to define $\Interp{\CPi}^1$, it suffices to fix a morphism
$\Mor[x]{\Yo{\Theta}}{\Interp{\NfTy{m}}^1}$ and define the action of $\Interp{\CPi}^1$ on it. First,
observe that we can extend $x$ to the following commuting square:
\[
  \DiagramSquare{
    width = 5cm,
    nw = \Yo{\Theta},
    sw = \InvRen{\Yo{\EmbRen(\Theta)}},
    ne = \Interp{I}^0,
    se = \InvRen{\Interp{I'}},
    east = \Interp{I},
    north = x,
    south = \InvRen{\tilde{x}},
  }
\]
By universal property, therefore, we can decompose $x$ into a pair:
\begin{gather*}
  \DiagramSquare{
    width = 5cm,
    nw = \Yo{\LockRen{\Theta}},
    sw = \InvRen[n]{\Yo{\LockCx{(\EmbRen(\Theta))}}},
    ne = \Interp{\NfTy{n}}^1,
    se = \InvRen[n]{\Interp{\Ty{n}}},
    east = \Interp{\NfTy{n}},
    north = x_0,
    south = \InvRen[n]{\tilde{x}_0},
  }
  \\
  \DiagramSquare{
    width = 5cm,
    nw = \Yo{\ERen{\Theta}{x_0}},
    sw = \InvRen{\Yo{\EmbRen(\ECx{\Theta}{\tilde{x}_0})}},
    ne = \Interp{\NfTy{m}}^1,
    se = \InvRen{\Interp{\Ty{m}}},
    east = \Interp{\NfTy{n}},
    north = x_1,
    south = \InvRen{\tilde{x}_1},
  }
\end{gather*}

We may now define $\CPi(x) = \NfFn{x_0}{x_1}$. The naturality of this assignment, as well as the
fact that it commutes appropriately, are direct computations.


\printbibliography

\end{document}